\title{Algorithmizing the Multiplicity Schwartz-Zippel Lemma}
\let\oldnl\nl% Store \nl in \oldnl
\newcommand{\nonl}{\renewcommand{\nl}{\let\nl\oldnl}}% Remove line number for one line
\let\N\Natural
\let\Z\Integer
\let\F\Field
\def\argmax{\operatornamewithlimits{arg\,max}}
\def\E{\operatornamewithlimits{\mathbb{E}}}			% Expectation: $\E[X]$ (like \Pr)
\def\poly{\operatorname{poly}}
\def\set#1{\left\{ #1 \right\}}
\def\abs#1{\left| #1 \right|}
\def\inbrace#1{\left\{ #1 \right\}}
\def\setdef#1#2{\inbrace{ #1 \ \colon \ #2}}      % E.g: \setdef{x}{f(x) = 0}
\def\veca{\mathbf{a}}
\def\vece{\mathbf{e}}
\def\veci{\mathbf{i}}
\def\vecw{\mathbf{w}}
\def\vecx{\mathbf{x}}
\def\vecy{\mathbf{y}}
\def\vecz{\mathbf{z}}
\renewcommand{\tilde}[1]{\widetilde{#1}}
\newtheorem*{combSZ}{Unique Decoding (Combinatorial Statement)}
\newtheorem*{mainthm}{Main result}
\newcommand{\hpartial}{{\mathchar'26\mkern-10mu \partial}}
\renewcommand{\E}{\mathbb{E}}
\newcommand{\ip}[1]{\left\langle #1 \right\rangle}
\newcommand{\enc}{{\mathsf{Enc}}}
\newcommand{\encs}{{\mathsf{Enc}^{(s)}}}
\newcommand{\encst}{{\mathsf{Enc}_{T^m}^{(s)}}}
\newcommand{\calR}{\mathcal{R}}
\newcommand{\calL}{\mathcal{L}}
\let\epsilon\varepsilon
\newcommand{\coeff}{\operatorname{Coeff}}
\newcommand{\mult}{\operatorname{mult}}
\newcommand{\Del}{\Delta}
\newcommand{\Delm}{\Delta_{\mathsf{mult}}}
\newcommand{\ow}{\widetilde{w}}
\newcommand{\vy}{\vecy}
\newcommand{\vz}{\vecz}
\newcommand{\vx}{\vecx}
\newcommand{\vi}{\veci}
\newcommand{\ve}{\vece}
\newcommand{\va}{\veca}
\newcommand{\vw}{\vecw}
\newcommand{\sbar}{\mathbf{s}}
\newcommand{\thetabar}{{\boldsymbol{\theta}}}
        \newlength\mylen
     \newcommand\myinput[1]{%
     \settowidth\mylen{\KwIn{}}%
     \setlength\hangindent{\mylen}%
     \nonl \hspace*{\mylen}#1\\}
\begin{document}
	\maketitle
	
	\begin{abstract}
		The multiplicity Schwartz-Zippel lemma asserts that over a field, a
		low-degree polynomial cannot vanish with high multiplicity very
		often on a sufficiently large product set. Since its discovery in a
		work of Dvir, Kopparty, Saraf and Sudan~\cite{DvirKSS2013}, the
		lemma has found numerous applications in both math and computer
		science; in particular, in the definition and properties of
		multiplicity codes by Kopparty, Saraf and Yekhanin~\cite{KoppartySY2014}.
		
		In this work, we show how to algorithmize the multiplicity
		Schwartz-Zippel lemma for arbitrary product sets over any field. In
		other words, we give an efficient algorithm for unique decoding of
		multivariate multiplicity codes from half their minimum distance on
		arbitrary product sets over any field and any constant multiplicity parameter. Previously, such an
		algorithm was known either when the underlying product set had a nice
		algebraic structure (for instance, was a subfield)~\cite{Kopparty2015} or when the underlying field had large (or zero)
		characteristic, the multiplicity parameter was sufficiently large
		and the multiplicity code had distance bounded away from $1$~\cite{BhandariHKS2021-mgrid}. 
		
		Our algorithm builds upon a result of Kim \& Kopparty~\cite{KimK2017} who gave an algorithmic version of the
		Schwartz-Zippel lemma (without multiplicities) or equivalently, an
		efficient algorithm for unique decoding of Reed-Muller codes over
		arbitrary product sets. We introduce a refined notion of distance based on the multiplicity Schwartz-Zippel lemma and design a unique decoding algorithm
		for this distance measure. On the way, we give an alternative analysis of
		Forney's classical generalized minimum distance decoder that might
		be of independent interest.
		
	\end{abstract}
	
	\section{Introduction}
	The \emph{degree-mantra} states that any univariate non-zero
	polynomial $P \in \F[x]$ of degree $d$, where $\F$ is a field, has at
	most $d$ zeros even including multiplicities. A generalization of this
	basic maxim to the multivariate setting is the well-known
	\emph{Schwartz-Zippel} Lemma, also sometimes referred to as the
	\emph{Polynomial-Identity} Lemma, (due to Ore~\cite{Ore1922},
	Schwartz~\cite{Schwartz1980}, Zippel~\cite{Zippel1979} and DeMillo \&
	Lipton~\cite{DemilloL1978}). This Lemma states that if $\F$ is a
	field, and $P \in \F[x_1, x_2, \ldots, x_m]$ is a \emph{non-zero}
	polynomial of degree $d$, and $T\subseteq \F$ is an arbitrary finite
	subset of $\F$, then the number of points on the grid $T^m$ where $P$
	is zero is upper bounded by $d|T|^{m-1}$. A generalization of this lemma that incorporates the
	\emph{multiplicity} aspect of the (univariate) degree-mantra was proved
	by Dvir, Kopparty, Saraf and Sudan~\cite{DvirKSS2013}. This
	multiplicity Schwartz-Zippel Lemma (henceforth, referred to as the
	multiplicity SZ lemma for brevity) states that the number
	of points on the grid $T^m$ where $f$ is zero with
	\emph{multiplicity}\footnote{This means that all the partial
		derivatives of $P$ of order at most $s-1$ are zero at this
		point. See \cref{sec:mult} for a formal definition.} at least
	$s$ is upper bounded by $\frac{d|T|^{m-1}}{s}$.\footnote{This bound is
		only interesting when $|T| > d/s$ so that $\frac{d|T|^{m-1}}{s}$ is
		less than the trivial bound of $|T|^{m}$.}
	
	This innately basic statement about low degree polynomials is crucial to many applications in both computer science and math, e.g. see~\cite{Guth2016}. Of particular interest to this work is the significance of this lemma and its variants to algebraic error-correcting codes that we now discuss in more detail.   
	
	\subsection{The coding theoretic context and main theorem}
	SZ lemma and its variants play a fundamental role in the study of error-correcting codes. The univariate version (referred to as the \emph{degree mantra} in the first paragraph) implies a lower bound on the distance of Reed-Solomon codes whereas for the general multivariate setting, the SZ lemma gives the distance of Reed-Muller codes. In its multiplicity avatar (both the univariate and multivariate settings), the lemma implies a lower bound on the distance of the so-called \emph{multiplicity codes}. Informally, multiplicity codes are generalizations of Reed-Muller codes where in addition to the evaluation of a message polynomial at every point on a sufficiently large grid, we also have the evaluations of all partial derivatives of the polynomial up to some order at every point on a grid. Thus, the number of agreements between two distinct codewords of this linear code is exactly  the number of points on a grid where a low-degree polynomial vanishes with high multiplicity, and thus, is precisely captured by the multiplicity SZ lemma. 
	
	A natural algorithmic question in the context of error-correcting codes is that of decoding: given a received word that is sufficiently close to a codeword in Hamming distance, can we efficiently find the codeword (or the message corresponding to the closest codeword). For Reed-Solomon codes, this decoding question is equivalent to the question of designing efficient algorithms for recovering a univariate polynomial from its evaluations, with the additional complication that some of the evaluations might be erroneous. For Reed-Muller codes, this is the question of multivariate polynomial interpolation with errors over grids, and for multiplicity codes, this is the question of recovering a multivariate polynomial from its evaluations and evaluations of all its partial derivatives up to some order on  a grid, where some of the data is erroneous.  If the number of coordinates with errors is at most half the minimum distance of the code, then we know that there is a \emph{unique} codeword close to any such received word, and what the minimum distance actually is implied by the SZ lemma or its variants. In particular, the multiplicity SZ lemma can be restated as the following combinatorial statement about unique decodability of multiplicity codes. 
	\begin{combSZ}\label{thm:question}
		Let $\F$ be a field, and $T \subseteq \F$, $d$ the degree parameter,
		$m$ the dimension and $s$ the multiplicity parameter be as
		above. Given an arbitrary function
		$f\colon T^{m} \to \F_{<s}[z_1,\ldots,z_m]$, where $\F_{<s}[z_1,\ldots,z_m]$ denotes the set of all polynomials of degree less than $s$ in the variables $z_1,z_2,\ldots,z_m$, there exists \emph{at most
			one} polynomial $P \in \F[x_1, \ldots, x_m]$ of degree at most $d$
		such that the function $\enc^{(s)}(P)\colon T^{m} \to \F_{<s}[z_1,\ldots,z_m]$
		defined as
		\[
		\enc^{(s)}(P)(\va) := P(\va + \vz) \mod \langle \vz
		\rangle^s 
		\]
		differs from $f$ on less than $\frac12\left(1-\frac{d}{s|T|}\right)$ fraction of points on $T^m$. 
		Here $\langle \vz \rangle^s $ denotes the ideal in $\F[\vz]$ generated by monomials of total degree $s$.
	\end{combSZ}
	\begin{remark}
		We note that the encoding function $\enc^{(s)}(P)(\va)$ is sometimes also  defined as being given by the evaluation of all partial derivatives of $P$ of order at most $s-1$ on input $\va$. These definitions are equivalent as can be seen by looking at the Taylor expansion of $P(\va + \vz)$ and truncating the series at monomials of degrees less than $s$ in $\vz$.  
	\end{remark}
	Thus, the unique decoding question for multiplicity codes is equivalent to asking whether there is an algorithmic equivalent of the multiplicity SZ lemma. More precisely, given a function $f$, can one \emph{efficiently} find the (unique)
	polynomial $P$ such that
	$\Delta(f,\enc^{(s)}(P)) <\frac12\left(1-\frac{d}{s|T|}\right)$ (if one
	exists). 
	
	Given the central role that these polynomial evaluation codes play in
	coding theory and in complexity theory, it is not surprising to note
	that this question of efficient unique decoding of Reed-Solomon codes
	and Reed-Muller codes in particular has been widely investigated. For
	Reed-Solomon codes, we know multiple algorithms for this task,
	starting with the result of Peterson from the 50s~\cite{Peterson1960}
	and the subsequent algorithms of Berlekamp and Massey
	\cite{Berlekamp,Massey1969} and Welch and Berlekamp
	\cite{WelchB1986}. However, the situation is a bit more complicated
	for Reed-Muller and multivariate multiplicity codes.
	
	Till recently, all known efficient decoding algorithms for the multivariate setting (both Reed-Muller and larger order multiplicity codes),  worked only when
	the underlying set $T$ had a nice algebraic structure (e.g., $T = \F$)
	or when the degree $d$ was very small (cf, the Reed-Muller
	list-decoding algorithm of Sudan~\cite{Sudan1997} and its multiplicity
	variant due to Guruswami \& Sudan~\cite{GuruswamiS1999}). In particular, even for Reed-Muller codes (equivalently multiplicity codes with multiplicity parameter $s=1$), where the unique decoding question corresponds to an algorithmic version of the standard SZ lemma, no \emph{efficient} unique decoding algorithm was known. In fact, the problem was open even in the bivariate setting! This seems a bit surprising since the distance of these codes just depends on the non-vanishing of polynomials on arbitrary grids, whereas the decoding algorithms appear to crucially use some very specific algebraic properties of the grid. Even beyond the immediate connection to questions in coding theory, this seems like a very natural algorithmic question in computational algebra that represents a gap in our understanding of a very fundamental property of low-degree polynomials.  
	
	For Reed-Muller codes, this question was resolved in a beautiful work of Kim and Kopparty~\cite{KimK2017} who gave an efficient algorithm for this problem of unique decoding Reed-Muller codes on arbitrary grids. Their algorithm was essentially an algorithmic version of the standard induction-based proof of the SZ lemma. However, the algorithm of Kim and Kopparty does not seem to generalize for the case of higher multiplicity ($s > 1$), and indeed, this problem is mentioned as an open problem of interest in~\cite{KimK2017}. 
	%In his survey on multiplicity codes~\cite{Kopparty15}, Kopparty asks whether the list decoding properties of multiplicity codes ().
	
	Since the work of Kim and Kopparty, Bhandari, Harsha, Kumar and Sudan~\cite{BhandariHKS2021-mgrid} made some
	partial progress towards this problem for the case of $s>1$ and
	$m >1$. In particular, they designed an efficient decoding algorithm
	albeit requiring the following conditions to be satisfied: 
	\begin{itemize}
		\item  the field $\F$ has characteristic zero or is larger than the degree $d$
		\item the distance of the code is constant, and
		\item the multiplicity parameter $s$ is sufficiently large (in terms of the dimension $m$)
	\end{itemize}
	Under these special conditions, they in
	fact obtained a list-decoding algorithm. Yet, the original algorithmic
	challenge of obtaining an efficient unique decoding for \emph{all}
	dimensions and \emph{all} multiplicity parameters remained
	unresolved. In particular, even unique
	decoding of bivariate multiplicity codes with multiplicity two from
	half their minimum distance was not known over arbitrary product
	sets over any field.
	
	Our main result in this work is a generalization of the result of Kim
	and Kopparty  to \emph{all} constant multiplicities or equivalently, an algorithmic version of the multiplicity SZ lemma (for constant multiplicity parameter). More formally, we prove the following theorem. 
	
	\begin{mainthm}\label{thm: main intro}
		Let $s, d, m \in \N$ and $T \subseteq \F$ of size $n$ be such that $d <
		sn$. Then, there is a deterministic algorithm that runs in time
		$(sn)^{O(s+m)}\cdot  \binom{s - 1 + m}{m}$ and on
		input $f\colon T^m \to \mathbb{F}_{<s}[\vx]$ outputs the unique polynomial
		$P \in \mathbb{F}_{\leq d}[\vx]$ (if such  a polynomial exists) such that $\enc^{(s)}(P)$
		differs from $f$ on less than $\frac12\left(1-\frac{d}{s|T|}\right)$ fraction of points on $T^m$.   
	\end{mainthm}
	
	We remark that our algorithm is not polynomial-time in the input size for all settings of the multiplicity parameter $s$; our algorithm runs in time $(sn)^{O(s+m)}\cdot  \binom{s - 1 + m}{m}$ while the input-size is $n^{m}\cdot  \binom{s - 1 + m}{m}$. That said, it is polynomial when $s$ is a constant, the typical setting for coding-theoretic applications. While we are chiefly interested in the constant multiplicity parameter regime, we note that our algorithm is polynomial for certain super-constant settings of $s$ also, which might be useful in some applications. We conjecture that it can be made to run in polynomial time for all settings of $s$. 
	
	The main theorem here differs from the results of~\cite{BhandariHKS2021-mgrid} in the following sense: our main theorem gives a polynomial-time algorithm for unique decoding multiplicity codes for multiplicity parameter $s$ being any arbitrary constant and over any underlying field, whereas~\cite{BhandariHKS2021-mgrid} give an efficient algorithm for list decoding multiplicity codes up to the list decoding capacity, provided that the multiplicity parameter is sufficiently large (given by the dimension $m$) and the underlying field has large or zero characteristic. 
	
	\subsection{Alternative analysis of Forney's GMD Decoding}
	
	Forney designed the Generalized Minimum Distance (GMD) decoding to
	decode concatenated codes from half its minimum distance~\cite{Forney1965, Forney1966}. A key step in
	our algorithm (as in Kim and Kopparty's Reed-Muller decoder) is
	reminiscent of Forney's GMD decoding. Forney analysed the GMD decoding
	using a convexity argument, which in modern presentations is usually
	expressed as a probabilistic or a random threshold argument.  Kim and Kopparty used a direct adaptation
	of this random threshold argument for their Reed-Muller decoder. This
	argument unfortunately fails in our setting. To get
	around this, we first give an alternative analysis of Forney's GMD
	decoding using a matching argument, which we explain in detail in the next section. The alternative
	analysis of Forney's GMD decoding is given in detail in \cref{sec:forney}.

	\subsection{Further discussion and open problems}
	We conclude the introduction with some open problems.
	\begin{description}
		\item [Faster algorithms.] Observe that the running time of the main result  as stated above depends polynomially on $(sn)^{s+m}$. Since the input to the decoding problem has size $\left(n^m\cdot \binom{s - 1 + m}{m} \right)$, strictly speaking, the running time of the algorithm in the main theorem is not polynomially bounded in the input size for all choices of the parameters $n, s, m$. It would be extremely interesting to obtain an algorithm for this problem which runs in polynomial time in the input size for all regimes of parameters. 
		
		\item[List-decoding. ] Another natural question in the context of the main result in this paper is to obtain efficient algorithms for decoding multiplicity codes on arbitrary grids when the amount of error exceeds half the minimum distance of the code. The results in~\cite{BhandariHKS2021-mgrid} provide such algorithms when $s$ is sufficiently large, $m$ is a constant and we are over fields of sufficiently large (or zero) characteristic. However, when $s$ is small, for instance, even when $s= 1$ and $m=2$, we do not have any such list decoding results. In fact, for small $s$, we do not even have a good understanding of the combinatorial limits of list decoding of these (Reed-Muller/Multiplicity) codes on arbitrary grids.

		\item[Polynomial-method-based algorithms. ] It would also be interesting to understand if the results in this paper and those in~\cite{KimK2017} can be proved via a more direct application of the polynomial method. By this we mean a decoding algorithm along the lines of the classical Welch-Berlekamp~\cite{WelchB1986} decoding algorithm as presented by Gemmell and Sudan~\cite{GemmellS1992} or the generalizations used for list-decoding: for instance, as in the work of Guruswami and Sudan~\cite{GuruswamiS1999}.
		Note that for large $s$, constant $m$, and fields of sufficiently large or zero characteristic, the algorithm in~\cite{BhandariHKS2021-mgrid} is indeed directly based on a clean application of the polynomial method. It would be aesthetically appealing to have an analogous algorithm for the case of small $s$ (and perhaps over all fields).
	\end{description}
	
	\subsection{Organization}
	The rest of the paper is organized as follows. 
	
	% In \cref{sec: overview}, we give an overview of the main ideas in
	% our algorithm and discuss how they relate to the algorithm of Kim
	% and Kopparty~\cite{KimK2017} for decoding Reed-Muller codes.
	We begin with an overview of our algorithm in \cref{sec: overview}. In \cref{sec: mult prelims}, we discuss the necessary preliminaries including the definition and properties of multiplicity codes and describe  a fine-grained notion of distance for multiplicity codes that plays a crucial role in our proofs. We build up the necessary machinery for the analysis of the bivariate decoder in \cref{sec:BWmultiplicity} and \cref{sec:wumd} where we describe and analyse efficient decoding algorithms for univariate multiplicity codes with varying multiplicities and weighted univariate multiplicity codes respectively. In \cref{sec:bivariate}, we put it all together and describe our algorithm for decoding bivariate multiplicity codes, and give its proof of correctness. Using a suitable induction on the number of variables, we generalize the algorithm for the bivariate case to the multivariate case in \cref{sec: multivariate algorithm}. An alternative analysis of Forney's generalized minimum distance decoder, which inspires our analysis of our algorithm, is discussed in \cref{sec:forney}.

	\section{Overview of algorithm}\label{sec: overview}
	
	As one would expect, our multivariate multiplicity code decoder for
	$s> 1$ is a generalization of the Kim-Kopparty decoder for
	Reed-Muller codes (i.e., the $s=1$ case). However, a straightforward
	generalization does not seem to work and several subtle issues
	arise. Some of these issues are definitional and conceptual; for
	instance, one needs to work with a finer notion of distance while
	others are more technical, for instance the need for an alternative analysis of
	Forney's GMD decoding.

	To explain these issues and discuss how we circumvent them, we first recall
	the Kim-Kopparty decoder and then mention how we generalize it to
	the $s > 1$ setting, highlighting the various issues that arise along
	the way.
	
	\subsection{An overview of the Kim-Kopparty Reed-Muller decoder}
	
	We begin by recalling the Kim-Kopparty decoder for Reed-Muller codes
	on arbitrary grids. Their algorithm is, in some sense, an
	``algorithmization'' of the standard inductive proof of the classical
	Schwartz-Zippel lemma (without any multiplicities). For simplicity, we
	will focus on the bivariate case ($m=2$) initially.
	
	The setting is as follows: we are given a received word
	$f\colon T^2 \to \mathbb{F}$ and have to find the unique polynomial
	$P(x, y) \in \mathbb{F}[x, y]$ (if one exists) such that
	$\Delta(f,P) < \frac{1}{2} n^2 (1-\frac{d}{n})$, where $\Delta$ is
	the standard Hamming distance and $n=|T|$.
	
	The high-level approach is to write $P$ as
	$\sum_{i=0}^d P_i(x) y^{d-i}$ and iteratively recover the univariate
	polynomials $P_0, P_1, \ldots, P_d$ in this order. The goal of the
	$\ell^{th}$ iteration is to recover $P_\ell(x)$ assuming we have
	correctly recovered $P_0, P_1,\ldots, P_{\ell-1}$ in the previous
	iterations. For the $\ell^{th}$ iteration, we work with $f_{\ell}\colon T^2 \to \F$
	given by
	$f_{\ell}(a,b) := f(a,b) - \sum_{i = 0}^{\ell-1}P_i(a)\cdot b^{d-i}$ as
	the received word. Clearly, the Hamming distance between $f_{\ell}$
	and the evaluation of the bivariate polynomial
	$Q_{\ell}(x,y) = \sum_{i = \ell}^d P_i(x)y^{d-i}$ on the grid is exactly
	$\Delta(f,P)$ which is at most $\frac{1}{2} n^2 (1-\frac{d}{n})$.

	Very naturally, we view $f_{\ell}$ as values written on a two-dimensional grid
	$T \times T$, with the columns indexed by $x = a \in T$ and the rows
	indexed by $y = b \in T$, and each entry on a grid point of
	$T \times T$ is an element of $\F$. As a first step of the algorithm,
	for every $a \in T$, a Reed-Solomon decoder is used to find a
	univariate polynomial $G^{(a)}(y)$ of degree at most $d - \ell$ such that the tuple
	of evaluations of this polynomial on the set $T$ is close to the
	restriction of $f_{\ell}$ on the column $x = a$. For columns where $Q_{\ell}$ and $f_{\ell}$
	have a high agreement, $G^{(a)}(y)$ would be equal to $Q_{\ell}(a, y)$ and
	hence, the leading coefficient of $G^{(a)}(y)$ must equal
	$P_{\ell}(a)$. A natural suggestion here would be to collect the
	leading coefficients of the polynomials $G^{(a)}(y)$'s, namely $g(a) := \coeff_{y^{d-\ell}}(G^{(a)}(y))$  and try and run another
	Reed-Solomon decoder on $g\colon T\to \F$ to get $P_{\ell}(x)$. Unfortunately, with
	$\Delta(f,P) < \frac{1}{2} n^2 (1-\frac{d}{n})$, it can be shown that
	a lot of these polynomials $G^{(a)}$ could be incorrect and hence, the
	number of errors in the received word for this second step
	Reed-Solomon decoding is too large for it to correctly output
	$P_{\ell}$.
	
	The key idea of Kim and Kopparty is to observe that at the end of the
	first step, not only do we have the decoded column polynomials
	$G^{(a)}(y)$ but we also have hitherto unused  information about
	the function $f_{\ell}$, namely, the Hamming distance between the
	received word (the restriction of $f_{\ell}$) on a column $x = a$ and
	the evaluation of $G^{(a)}(y)$.
	Formally, for every $a \in T$, they associate a
	weight
	\begin{align}
		w(a) := \min\left\{{\Delta\left(f_{\ell}(a, \cdot), G^{(a)}\right)},\frac{n
			-(d-\ell)}{2}  \right\}\, .\label{eq:KKweight}
	\end{align}
	Note that if $\Delta\left(f_{\ell}(a, \cdot), G^{(a)}\right) > \frac{n
		-(d-\ell)}{2}$, then there is no guarantee that $G^{(a)}(y)$ is the
	closest codeword to $f_{\ell}(a,y)$ and hence the minimum is taken in
	the above expression to cap the distance to $\frac{n
		-(d-\ell)}{2}$ (this plays a technical reason in the proof, but let
	us ignore it for now). 
	Based on these weights, they define a modified distance between the
	pair $(g,w)\colon T \to \F \times \Z_{\geq 0}$, which they refer to as
	a ``fractional word'',  and a regular word $h \colon T \to \F$ as
	follows:
	\begin{align}
		\Delta((g,w), h) := \sum_{a \in A_0(g,h)} (n-(d-\ell) - w(a)) + \sum_{a
			\in A_1(g,h)} w(a) \;,\label{eq:KKdistance}
	\end{align}
	where $T= A_0(g,h) \cup A_1(g,h)$ is the following partition of $T$
	\begin{align*}
		A_1(g,h) &:= \{ a \in T \colon g(a) = h(a) \}\;,
		\tag*{\text{(agreement points)}}\\
		A_0(g,h) &:= \{ a \in T \colon g(a) \neq h(a)\}\;. \tag*{\text{(disagreement points)}}
	\end{align*}
	We note that the above is a scaled version of the modified distance
	used by Kim and Kopparty. We find it convenient to express it in the
	above equivalent form for the purpose of this presentation as this
	alternate form generalizes to the $s>1$ setting more naturally.
	
	They then prove that this modified distance satisfies the following
	two important properties.
	\begin{itemize}
		\item The fractional word $(g,w)$ and the ``correct'' polynomial
		$P_\ell$ satisfy
		\begin{align}\Delta((g,w),P_\ell) &\leq \Delta(f,P) < \frac12
			n^2\left(1-\frac{d}{n}\right)\;.\label{eq:KKprop1}
		\end{align}
		\item For any two distinct polynomials $Q_\ell$ and $R_\ell$ of
		degree $\ell$, we have
		\begin{align}
			\Delta((g,w), Q_\ell) + \Delta((g,w),R_\ell) &\geq
			(n-(d-\ell))\cdot \Delta(Q_\ell,R_\ell)\notag\\
			&\geq
			(n-(d-\ell))
			\cdot
			(n-\ell)\notag\\
			&\geq n^2\left(1
			-\frac{d}{n}\right)\;. \label{eq:KKprop2}
		\end{align}
	\end{itemize}
	These two properties imply that the pair $(g,w)$ uniquely determines
	the polynomial $P_\ell$ since $P_\ell$ satisfies
	$\Delta((g,w),P_\ell) <  \frac12
	n^2\left(1-\frac{d}{n}\right)$ (by the first property) while  every other polynomial $Q_\ell$
	satisfies $\Delta((g,w), Q_\ell) \geq n^2\left(1
	-\frac{d}{n}\right) - \Delta((g,w),P_\ell) > \frac12
	n^2\left(1-\frac{d}{n}\right)$ (by the second property).
	
	We are still left with the problem
	of \emph{efficiently} determining $P_\ell$ from the pair $(g,w)$. We will
	defer that discussion to later, but first show how the above steps
	can be generalized to the $s>1$ setting. 
	
	\subsection{Generalizing the Kim-Kopparty decoder to \texorpdfstring{$s>1$}{s > 1}}
	To begin with, we will need to work with a more fine-grained notion of
	distance which incorporates the multiplicity information. For two
	functions $f, g \colon T^m \to \F_{<s}[\vz]$ and a point $\va \in
	T^m$, the Hamming distance $\Delta(f(\va), g(\va))$ measures if the polynomials $f(\va)$ and
	$g(\va)$ are different. We will consider a refined notion of $\Delta_{\mathsf{mult}}^{(s)}(f(\va),g(\va))$, which measures the
	``multiplicity distance'' at the point $\va$ (here $s$ is the
	multiplicity parameter). Formally, 
	\[ \Delta_{\mathsf{mult}}^{(s)}(f(\va),g(\va)) := (s -
	d^{(s)}_{\min}(f(\va) - g(\va)))\;,\]
	where $d^{(s)}_{\min}(R)$ is defined as follows for any polynomial
	$R$: $d^{(s)}_{\min}(R)$ is the minimum of $s$ and the degree of the
	minimum degree monomial with a non-zero coefficient in $R$. Note that if $R$ is
	identically zero, $d^{(s)}_{\min}(R)$ is set to $s$. Recall that here $f(\va)$ and $g(\va)
	$ are both polynomials in $\vz$ of degree strictly less than $s$. The multiplicity distance between the functions $f$ and $g$ is defined
	to be the sum of the above quantity over all $\va \in T^m$. More
	precisely,
	\[ \Delta_{\mathsf{mult}}^{(s)}(f,g) := \sum_{\va \in T^m}
	\Delta_{\mathsf{mult}}^{(s)}(f(\va),g(\va))\;.
	\]
	
	Observe that for $s=1$, this matches with the usual notion of Hamming
	distance. Furthermore, this fine-grained distance lower-bounds the
	Hamming distance as follows:
	\[ \Delta_{\mathsf{mult}}^{(s)}(f,g) \leq s \cdot \Delta(f,g)
	\;. \]
	
	The multiplicity SZ lemma, in terms of this fine-grained
	multiplicity distance, states that for two distinct $m$-variate degree $d$
	polynomials $P,Q$, we have
	\[\Delta_{\mathsf{mult}}^{(s)} (\enc^{(s)}(P),\enc^{(s)}(Q))\geq n^{m} \cdot \left(s - \frac{d}{n}\right).\]
	
	The Kim-Kopparty decoder for Reed-Muller codes on a grid required as a basic
	primitive a Reed-Solomon decoder. Correspondingly, our algorithm will
	require as a basic primitive a univariate multiplicity code
	decoder. Such a decoder was first obtained by
	Nielsen~\cite{Nielsen2001}. Later on in the algorithm, we will
	actually need a  generalization of this decoder which can
	handle the case when the multiplicity parameters at different
	evaluation points are not necessarily the same.  Such a decoder can be obtained by a
	suitable modification of the standard Welch-Berlekamp decoder for
	Reed-Solomon codes. This modification is presented in
	\cref{sec:BWmultiplicity} (Algorithm~\ref{alg:gen mult decoder} specifically). 
	
	We are now ready to generalize the Kim-Kopparty decoder to the $s>1$
	setting. To keep things simple, let us focus on the $s=2$ and $m=2$
	setting. The problem stated in terms of the multiplicity-distance is
	as follows. We are given a received word
	$f\colon T^2 \to \F_{<2}[u,v]$ and have to find the unique polynomial
	$P(x, y) \in \mathbb{F}[x, y]$ (if one exists) such that
	$\Delta_{\sf{mult}}^{(2)}(f,\enc^{(2)}(P)) < \frac{1}{2} n^2
	(2-\frac{d}{n})$, where $n=|T|$.

	As before we write $P= \sum_{i=0}^d P_i(x) y^{d-i}$ and recover the
	polynomials $P_0,\ldots, P_d$ in successive iterations. Let us assume
	we are in the $\ell^{th}$ iteration where we have to recover $P_\ell$ and
	have already correctly recovered $P_0,\ldots, P_{\ell-1}$ in the previous
	iterations. Similar to the $s=1$ setting, we consider the function $f_{\ell}\colon T^2 \to \F_{<2}[u,v]$
	given by
	$f_{\ell}(a,b) := f(a,b) - \enc^{(2)}\left(\sum_{i = 0}^{\ell-1}P_i(x)\cdot y^{d-i}\right)$ as
	the received word. Clearly, the distance between $f_{\ell}$
	and $\enc^{(2)}\left(\sum_{i = \ell}^d P_i(x)y^{d-i}\right)$ is exactly
	$\Delta_{\sf{mult}}^{(2)}(f,\enc^{(2)}(P))$ which is at most $\frac{1}{2}
	n^2 (2-\frac{d}{n})$.

	As before, we view $f_{\ell}$ as values written on a two-dimensional
	grid $T \times T$, with the columns indexed by $x = a \in T$ and the
	rows indexed by $y = b \in T$, and each entry on a grid point of
	$T \times T$ is an element of $\F_{<2}[u,v]$. It will be convenient to
	view this degree-one polynomial $f_\ell(a,b)$ at the grid point
	$(a,b)$ as $f_\ell^{(0)}(a,b)(v) + u \cdot f_\ell^{(1)}(a,b)$ where
	$f_\ell^{(0)}(a,b) \in \F_{<2}[v]$ and
	$f_\ell^{(1)}(a,b) \in F_{<1}[v] \equiv \F$. For every $a \in T$, we look
	at the restriction of the functions $f^{(0)}_\ell$ and $f^{(1)}_\ell$
	to the column $x=a$. We could use a Reed-Solomon decoder to find the
	univariate polynomial $G^{(a,1)}(y)$ of degree at most $d - \ell$
	such that the tuple of evaluations of this polynomial on the set $T$
	is close to the restriction of $f^{(1)}_{\ell}$ on the column $x =
	a$. Similarly, we could use the univariate multiplicity code decoder
	(mentioned above) to
	find the univariate polynomial $G^{(a,0)}(y)$ of degree at most
	$d - \ell$ such that encoding of this polynomial is close to the
	restriction of $f^{(0)}_{\ell}$ on the column $x = a$. We can for each
	$a$, define $g(a) \in \F_{<2}[u]$ to be the polynomial
	$g(a):=\coeff_{y^{d-\ell}}(G^{(a,0)}) + u\cdot
	\coeff_{y^{d-\ell}}(G^{(a,1)})$. 
	
	\begin{figure}
		\centering
		\includegraphics{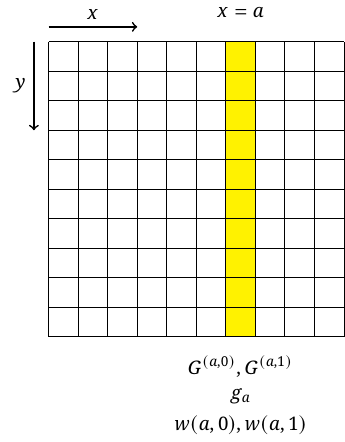}
		\caption{Bivariate algorithm, visualised}
		\label{fig:bivariate grid}
	\end{figure}
		
	We also have the corresponding weight
	functions (similar to \eqref{eq:KKweight}) which measure how close the encodings of the polynomials $G^{(a,0)}(y)$
	and $G^{(a,1)}(y)$ are close to the corresponding functions $f_\ell^{(0)}$
	and $f_\ell^{(1)}$ respectively. Namely, 
	\begin{align*}
		w(a,0) &:=
		\min\left\{{\Delta_{\mathsf{mult}}^{(2)}\left(f^{(0)}_{\ell}(a,
			\cdot) , \enc^{(2)}(G^{(a,0)}) \right)},
		{\frac{2n-(d-\ell)}{2}}  \right\}\;,\\
		w(a,1) &:=
		\min\left\{{\Delta_{\mathsf{mult}}^{(1)}\left(f^{(1)}_{\ell}(a, \cdot), \enc^{(1)}(G^{(a,1)})
			\right)}, {\frac{n-(d-\ell)}{2}}
		\right\}\;.
	\end{align*}
	The quantities $2n-(d-\ell)$ and $n-(d-\ell)$ are the distances of
	univariate multiplicity codes of degree $(d-\ell)$ and multiplicity
	parameters $2$ and $1$ respectively. 
	We will refer to these pairs of weight functions as $\vw = (w(\cdot,0),
	w(\cdot,1))$. As in the $s=1$ setting, we will view the pair
	$(g,\vw)$ as a ``fractional word'' and would like to define the distance between the fractional word $(g,\vw)$ and $h$. In the $s=1$ Reed-Muller setting (\eqref{eq:KKdistance}), the set $T$ was partitioned into two sets: the set $A_1(g,h)$ of agreement points and the set $A_0(g,h)$ of disagreement points. However, for the $s=2$ case, agreement/disagreement come in more flavours. We have points in $T$ in which $g$ and $h$ agree in both the evaluation and derivative, points in which they agree only in the evaluation but not the derivative and finally points where they disagree even on the evaluation. Based on this, we have the following partition of the set $T= A_0(g,h) \cup A_1(g,h) \cup A_2(g,h)$:
	\begin{align*}
		A_2(g,h) &:= \left\{ a \in T \colon g(a) \equiv h(a) \right\}\;,
		\tag*{\text{(agreement points with multiplicity 2)}}\\
		A_1(g,h) &:=  \left\{ a \in T \colon g(a) \equiv h(a) \mod \langle u
		\rangle \right\} \setminus A_2(g,h)\;, \tag*{\text{(agreement points with multiplicity 1)}}\\
		A_0(g,h) &:=  T \setminus \left( A_2(g,h)    \cup
		A_1(g,h)\right)\;.  \tag*{\text{(disagreement points)}}   
	\end{align*}
	Inspired by the distance definition of Kim and Kopparty between a fraction
	word and a regular word (see \eqref{eq:KKdistance}), we define a
	similar modified distance between the fractional word $(g,\vw) \colon T \to
	\F_{<2}[u] \times (\Z_{\geq 0} \times \Z_{\geq 0})$ and and a word $h \colon T \to \F_{<2}[u]$ as
	follows. 
	
	\begin{align}
		\Gamma\left((g,\vw),h\right) := &\sum_{a \in
				A_0(g,h)}\left(2n - (d-\ell) - w(a,0)\right) + \sum_{a \in A_1(g,h)}
		\max\left\{\left( n- (d-\ell) - w(a,1) \right), w(a,0)\right\}\notag\\
		& + \sum_{a \in A_2(g,h)}\max\left\{w(a,0), w(a,1)\right\} \;,\label{eq:ourdistance}
	\end{align}
	(See \cref{def: gamma} for the exact definition of this modified
	distance.) This definition might seem complicated, however it is chosen
	so that it satisfies the following
	two important properties.
	\begin{itemize}
		\item {[\cref{lem: gamma comparison to usual distance}]} The fractional word $(g,\vw)$ and the ``correct'' polynomial
		$P_\ell$ satisfy
		\begin{align}\Gamma((g,\vw),P_\ell) &\leq \Delta_{\mathsf{mult}}^{(2)}(f,\enc^{(2)}(P)) < \frac12
			n^2\left(2-\frac{d}{n}\right)\;.\label{eq:prop1}
		\end{align}
		\item {[\cref{lem: gamma triangle}]} For any two distinct polynomials $Q_\ell$ and $R_\ell$ of
		degree $\ell$, we have
		\begin{align}
			\Gamma((g,\vw), Q_\ell) +
			\Gamma((g,\vw),R_\ell) & \geq n^2\left(2
			-\frac{d}{n}\right)\;. \label{eq:prop2}
		\end{align}
	\end{itemize}
	These two properties imply that the pair $(g,\vw)$ uniquely determines
	the polynomial $P_\ell$, which is in fact an alternative statement of the multiplicity SZ lemma.
	
	As in the case $s=1$, we are now left with the problem of efficiently extracting
	the polynomial $P_\ell$ from the fractional word
	$(g,\vw)$. In the next section, we describe how this was done
	by Kim and Kopparty and our modification of the same.
	
	%We remark that coming up with an appropriate definition of the generalized distance $\Gamma$ above is an important te
	\subsection{Extracting \texorpdfstring{$P_\ell$}{P l} from the fractional word}
	
	Let us restate the problem of $P_\ell$ extraction: we are given a
	fractional word $(g,w)\colon T \to \F \times \Z_{\geq 0}$ that
	satisfies \eqref{eq:KKprop1}, namely $\Delta((g,w), P_\ell) \leq \frac12
	n^2\left(1-\frac{d}{n}\right)$ (note that by \eqref{eq:KKprop2}, this
	uniquely defines $P_\ell$) and we have to find the polynomial
	$P_\ell$. Kim and Kopparty designed the weighted Reed-Solomon decoder,
	inspired by Forney's generalized minimum distance (GMD)  decoder for
	this purpose. To see this connection to Forney's GMD decoding, we
	first rewrite \eqref{eq:KKprop1} as follows:
	
	\begin{align*}
		\frac12 \cdot (n-(d-\ell)) \cdot \left( \sum_{a \in A_1} \ow(a) +
		\sum_{a \in A_0} (2-\ow(a))\right) \leq \frac12 n^2\left(1 -
		\frac{d}{n}\right)\;,
	\end{align*}
	where $\ow(a) := w(a)/\left(\frac12 \cdot (n-(d-\ell))\right) \in
	[0,1]$. Or equivalently,
	\begin{align*}
		\sum_{a \in A_1} \ow(a) +
		\sum_{a \in A_0} (2-\ow(a)) \leq \frac{n^2\left(1 -
			\frac{d}{n}\right)}{ (n-(d-\ell))} \leq n \left( 1 -
		\frac{\ell}{n}\right)\; .
	\end{align*}
	This can be further rewritten as
	\begin{align}
		\sum_{a \in T} \ow(a) + \sum_{a \in A_0} 2(1-\ow(a)) \leq n \left( 1 -
		\frac{\ell}{n}\right)\; .\label{eq:KKforney}
	\end{align}  
	The above inequality is very reminiscent of Forney's analysis of GMD
	decoding (c.f.,~\cite[Section~12.3]{GuruswamiRS}). Consider a uniformly random
	threshold $\theta \in [0,1]$ and erase all coordinates $a
	\in T$ such that $\ow(a) \geq \theta$. \eqref{eq:KKforney} can be now
	rewritten as
	\begin{align}
		\E_\theta\left[ \text{\# erasures}(\theta) + 2\cdot \text{\# errors}(\theta)\right]
		\leq n \left( 1 -
		\frac{\ell}{n}\right)\;. \label{eq:Forneyhyp}
	\end{align}
	This implies that there exists a threshold $\theta\in [0,1]$ that
	satisfies
	$\text{\# erasures}(\theta) + 2\cdot \text{\# errors}(\theta) \leq n
	\left( 1 - \frac{\ell}{n}\right)$. This is precisely the condition
	under which Reed-Solomon decoding (under erasures and errors) is
	feasible. Thus, the weighted Reed-Solomon decoder of Kim and Kopparty
	is as follows: consider every possible threshold $\theta \in [0,1]$
	and the corresponding erased word
	$g|_{\{a \in T \colon \ow(a) < \theta\}}$. Observe that there are at most
	$n$ distinct thresholds to consider. We can now run a standard
	Reed-Solomon decoder on this partially erased word to extract the
	polynomial $P_\ell$.
	
	This completes the description of the Kim-Kopparty decoder for Reed-Muller codes on grids.
	
	We would now like to adapt this weighted
	Reed-Solomon decoder to the multiplicity setting. We are now given
	the fractional word $(g,\vw) \colon T \to
	\F_{<2}[u] \times (\Z_{\geq 0} \times \Z_{\geq 0})$ with the promise
	of \eqref{eq:prop1}, namely $\Gamma((g,\vw),P_\ell) \leq
	\frac12 n^2(2-\frac{d}n)$ and we have to find the polynomial $P_\ell$. Recall that \eqref{eq:prop2} states that
	the promise uniquely determines $P_\ell$.
	
	A natural ``weighted
	univariate multiplicity code decoder'' would be the following:
	consider all possible step-thresholds $\thetabar =
	(\theta_0,\theta_1$) (or for general $s$, $(\theta_0,\theta_1,\ldots,\theta_{s-1})$) and erase coordinates of $g$
	accordingly to obtain a univariate multiplicity received word of
	varying multiplicities\footnote{We are intentionally vague at this
		point. For more details, see \cref{sec:wumd}.}. If this step-threshold
	$\thetabar$ is indeed correct (i.e., it retains a large fraction of
	agreements with $P_\ell$), we can show that we can adapt the classical
	Welch-Berlekamp decoder for Reed-Solomon codes to decode even in this
	setting. And we do precisely this in \cref{sec:BWmultiplicity}. We are
	still left with the problem of showing that there exists a step-threshold $\thetabar$ that works.
	
	\begin{figure}
		\centering
		\includegraphics{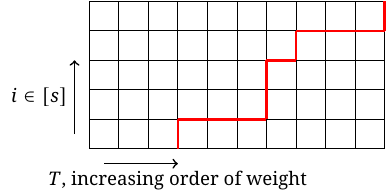}
		\caption{Step-threshold in weighted univariate multiplicity decoding}
		\label{fig:step threshold}
	\end{figure}
		
	How did Kim and Kopparty show that such a threshold exists in the
	$s=1$ setting? They showed by scaling the promise \eqref{eq:KKprop1}
	by $\frac12(n-(d-\ell))$ to obtain \eqref{eq:KKforney} which is
	equivalent to the statement \eqref{eq:Forneyhyp} that a random
	threshold $\theta$ works. Observe that the scaling
	$\frac12(n-(d-\ell))$ is precisely the unique-decoding radius of the
	$(d-\ell)$ degree Reed-Solomon code $G^{(a)}(y)$. In our setting
	($s>1$), there is not one code $G^{(a)}(y)$, but a whole family of
	them ($G^{(a,0)}(y)$ and $G^{(a,1)}(y)$ for the $s=2$ case). Recall
	that these are members of multiplicity codes of degree $d-\ell$ with
	multiplicity parameter 2 and 1 respectively. Their corresponding
	unique-decoding radii are $\frac12(2n-(d-\ell))$ and
	$\frac12(n-(d-\ell))$ (in terms of the multiplicity distance). In
	general, we might have $s$ different unique-decoding radii. It is
	unclear which of these scalings we should use.
	
	There is a more fundamental reason why one does not expect such a
	random-threshold argument to hold. Consider the following instance of
	the problem. Let $P$ be a degree $d$ polynomial for $2n/3< d <
	2n-2$. Let $B \subseteq T$ be a set of points of size
	$\left\lceil \frac{n}{2}\left(1-\frac{d}{2n}\right)\right\rceil
	-1$. Consider a word $f\colon T^2 \to \F_{<2}[u,v]$ formed by
	corrupting $\enc^{(2)}(P)$ on $T \times B$. The fraction of errors is
	$|B|/n < \frac12\left(1-\frac{d}{2n}\right)$ and hence we must be able
	to uniquely decode $P$ from $f$. Let us see how our suggested
	algorithm proceeds when it decodes along each column $x=a$. For each
	$a \in T$, the function $f_\ell^{(0)}(a,y)$ is corrupted on less than
	$\frac12\left(1-\frac{d}{2n}\right)$ and hence the univariate
	multiplicity code decoder for multiplicity parameter 2 correctly
	decodes this column. Furthermore, $w(a,0) = |B|$. However, while
	decoding $f_\ell^{(1)}(a,y)$ we observe that the number of errors is
	$|B|$ while the distance of the underlying code (multiplicity code
	with multiplicity parameter 1) is only $n\left(1- \frac{d}{n}\right)$
	which is less than $|B|$ for our choice of parameters (since $d > 2n/3$). So we could have
	corrupted enough points so that for each $a\in T$, $f^{(1)}_\ell(a,y)$
	is a valid codeword and yet different from the correct codeword in
	$\enc^{(2)}(P)$. In this case, all the $G^{(a,1)}(y)$ polynomials are
	erroneous, however $w(a,1) = 0$. We thus have for this setting
	$w(a,1) = 0$ and $w(a,0) = |B|$ for all $a \in T$. Consider the
	step-threshold $(|B|+1, 0)$ (by this we mean that we use the threshold
	$|B| + 1$ for $f_{\ell}^{(0)}$ and $0$ for $f_{\ell}^{(1)}$). Then the $f_\ell^{(1)}$ level is completely erased, while the $f_\ell^{(0)}$ level is completely un-erased. This object would be a received word for univariate multiplicity code decoding with varying multiplicities, in general; in this case since the first-derivative level is completely erased, it is an instance of Reed-Solomon decoding. Then, in fact, it is correctly decoded to obtain $P_\ell$, since each column $x=a$ was correctly decoded at the $f_\ell^{(0)}$ level, the only available level. Yet it is easy to check that the expected number of
	erasures plus two times the number of errors is too large for any
	reasonable choice of a random threshold. 
	
	How do we then show a step-threshold $\thetabar$ works without  
	resorting to the random-threshold argument?
	
	For starters, we could ask if we could give an alternative analysis of
	Forney's GMD decoding without appealing to the random-threshold
	argument. We show that this is indeed true. More formally, we show
	that if every threshold fails for Forney's GMD decoding, then the
	promise for GMD decoding is violated (i.e., the received word is more than half the distance of the concatenated code away from a codeword). We prove this as follows. Assume
	every threshold fails. We then construct a matching between agreement
	points and disagreement points (that is, correctly and incorrectly decoded blocks, following the inner decoding) such that every matched
	agreement-disagreement pair has the property that the weight in the
	disagreeing coordinate is at least that in the agreeing
	coordinate. Such a matching then immediately implies that the promise
	for GMD decoding is violated. See \cref{sec:forney} for further
	details. This alternative analysis of Forney's GMD decoding is not
	needed for our algorithm. It only serves as an inspiration for our
	``weighted univariate multiplicity code decoder'' just as the original
	analysis inspired the ``weighted Reed-Solomon decoder'' of Kim and
	Kopparty.
	
	Equipped with this alternative analysis of Forney's GMD decoding, we
	show a similar result for our setting. For contradiction, let us
	assume that every step-threshold $\thetabar$ fails to decode the
	polynomial $P_\ell$. This lets us construct a matching between
	``correct'' and ``incorrect'' coordinate-multiplicity pairs. The
	construction of this matching is considerably more involved than the
	construction of the corresponding matching in Forney's GMD decoding as
	we need to respect the monotonicity of the multiplicity
	coordinates. The matching is constructed via a delicate analysis using
	a generalization of Hall's Theorem (see \cref{thm:gen_hall}).  As in
	the Forney analysis, the matched agreement-disagreement pairs have the
	property that the multiplicity-distance of the disagreeing member is
	at least that of the agreeing member.  Once we have constructed such a
	matching, it is not hard to show that the promise \eqref{eq:prop1} is
	violated. Hence, at least one of the step-thresholds work and our
	``weighted univariate multiplicity code decoder'' works as
	suggested. This part of the analysis happens to be the most
	technically-challenging part of the paper and indeed this ``weighted univariate
	multiplicity code decoder'' forms the work-horse of our bivariate
	decoder (for more details, see \cref{sec:wumd}). The factor of $n^s$
	in the running time comes because this step runs over all possible
	step-thresholds and there are $O(n^s)$ of them. We conjecture that
	this step can be improved to $\poly(n,s)$.

	\subsection{The multivariate setting for \texorpdfstring{$m>2$}{m > 2}}
	
	The above presentation gave an outline of both the Kim-Kopparty
	decoder as well as ours in the bivariate setting. Kim and Kopparty
	extended the bivariate Reed-Muller decoder to larger $m$ by designing
	a weighted version of the Reed-Muller decoder which they then used
	inductively as suggested by the proof of the classical SZ Lemma. We
	however do not have a similar ``weighted multivariate multiplicity
	code decoder''. In particular, we do not even have a weighted version of
	the bivariate decoder. We get around this issue by performing a
	slightly different proof of the SZ Lemma from the textbook proof (see \cref{sec:msz}). This
	alternative proof of the SZ Lemma proceeds by viewing the
	polynomial as an $(m-1)$-variate polynomial with the coefficients
	coming from a univariate polynomial ring $\F[x_m]$ instead of as a
	univariate polynomial in $x_m$ with the coefficients coming from the
	$(m-1)$-variate polynomial ring $\F[x_1, \ldots, x_{m-1}]$. This
	slight modification allows us to work with just a ``weighted
	univariate multiplicity code decoder''. We note that a similar
	construction could have been obtained in the Reed-Muller setting too. We discuss this multivariate generalization in more detail in \cref{sec: multivariate algorithm}.

	\section{Multiplicity codes and main result} \label{sec: mult prelims}
	In this section, we give a formal statement of our main result. We start by building up the requisite notation and preliminaries including the notion of multiplicity codes and their properties which provide the key underlying motivation for the results in this paper.  
	
	\subsection{Notation}
	
	\begin{itemize}
		\item We use boldface letters like $\vx, \vy, \vz$ for $m$-tuples of
		variables. 
		\item $\F$ denotes the underlying field. 
		\item For a positive integer $\ell$, $[\ell] = \{0, \ldots, \ell -
		1\}$ (note $\ell \notin [\ell]$). 
		\item For $\ell \in \N$, $\F_{\leq \ell}[\vx]$ (and $\F_{<\ell}[\vx]$)
		denotes the set of
		polynomials in $\vx$, with coefficients in $\F$ of degree at most
		(strictly less than)
		$\ell$. 
		\item For $\ve \in \Z_{\geq 0}^m$, $\vx^{\ve}$ denotes the monomial
		$\prod_{j\in [m]} x_j^{e_j}$ and $|\ve|_1 := \sum e_i$.
	\end{itemize}

	\subsection{Multiplicity code}\label{sec:mult}
	We now state the definition of multiplicity codes. 
	\begin{definition}[Multiplicity code]\label{def:multiplicity code}
		Let $s, m \in \N$, $d \in \Z_{\geq 0}$, $\F$ be a field and $T \subseteq \F$ be a non-empty finite subset of $\F$. The $m$-variate order-$s$ multiplicity code for degree-$d$ polynomials over $\F$ on the grid $T^m$ is defined as follows.
		
		Let
		$E := \{\ve \in \Z_{\geq 0}^m \mid 0 \leq |\ve |_1 <
		s\}$.  Note that
		$|E| = \binom{s+ m -1}{m}$, which is the number of distinct
		monomials in $\F_{<s}[\vz]$. Hence, we can identify $\F^E$ with
		$\F_{<s}[\vz]$.
		
		The code is defined over the alphabet $\F^E$ and has block length $|T|^m$ with the coordinates being indexed by elements of the grid $T^m$. The code is an $\F$-linear map from  the space of polynomials $\F_{\leq d}[\vx]$ to $\left(\F_{< s}[\vz]\right)^{|T|^m}$, where for any $\va \in T^m$, the $\va^{th}$ coordinate of the codeword denoted by $\enc_{T^m}^{(s)}(P)|_{\va}$ is given by
		\[
		\enc_{T^m}^{(s)}(P)|_{\va} := P(\va + \vz) \mod \langle \vz \rangle^s. 
		\] 
		Thus, a codeword of this code can be naturally viewed as a function from $T^m$ to $\F_{<s}[\vz]$. 
	\end{definition}

	\begin{remark}\label{rmk: notation enc}
		For the ease of notation, we sometimes drop one or both of $s$ and
		$T^m$ from $\encst(P)$ when they are clear from the context. Also, we
		will use $n$ to refer to the size of $T$, i.e., $n= |T|$.
	\end{remark}
	
	Univariate multiplicity codes were first studied by Rosenbloom \&
	Tsfasman~\cite{RosenbloomT1997} and Nielsen
	\cite{Nielsen2001}. Multiplicity codes for general $m$ and $s$ were
	introduced by Kopparty, Saraf and Yekhanin~\cite{KoppartySY2014} where 
	they constructed locally decodable codes with high rate and small
	query complexity. In the context of local decoding, these codes are
	typically studied with the set $T$ being the entire field $\F$ (or a subfield of $\F$). However,
	in this work, we encourage the reader to think of $T$ being an
	arbitrary subset of $\F$.
	
	\begin{remark}We note that for every $\va \in \F^m$, $\left( P(\va + \vz) \mod
		\langle \vz \rangle^s \right)$ is a polynomial in $\F[\vz]$ of
		degree at most $s-1$, and therefore can be viewed as a function from the
		set $E$ to $\F$ given by its coefficient vector. Thus, the alphabet of
		the code is indeed $\F^E$. Moreover, the coefficients of $\left( P(\va
		+ \vz) \mod \langle \vz \rangle^s \right)$ are precisely the
		evaluation of the Hasse derivatives of order at most $(s-1)$ of $P$ at
		the input $\va$. Recall that for a polynomial $P \in \mathbb{F}[\vx]$,
		the $\ve^{th}$ Hasse derivative of $P$, denoted by $\frac{\hpartial
			P}{\hpartial \vx^{\ve}}$ is the coefficient of $\vz^{\ve}$ in
		$P(\vx+\vz)$. Typically, multiplicity codes are defined directly via
		Hasse derivatives, however, in this paper, it is notationally more
		convenient to work with \cref{def:multiplicity code}. Observe that
		with this notation, it is natural to think of a received word (an
		input to the decoding algorithm for multiplicity codes) as being given
		as a function $f\colon T^m \to \F_{<s}[\vz]$. 
	\end{remark}
	
	The distance of multiplicity codes is guaranteed by a higher order
	generalization of the Schwartz-Zippel Lemma, that was proved by Dvir,
	Kopparty, Saraf and Sudan~\cite{DvirKSS2013}. We need the following
	notation for the statement of the lemma.
	\begin{definition}[multiplicity at a point] For an $m$-variate
		polynomial $P \in \F[\vx]$ and a point $\va \in \F^m$, the
		multiplicity of $P$ at $\va$, denoted by $\mult(P, \va)$ is the
		largest integer $\ell$ such that $P(\va + \vz) = 0 \mod \langle \vz
		\rangle^{\ell}$ , or equivalently, the Hasse derivative
		$\frac{\hpartial P}{\hpartial \vx^{\ve}}(\va)$ is zero for all
		monomials $\vx^{\ve}$ of degree at most $\ell -1$.
	\end{definition}
	
	We now state the multiplicity Schwartz-Zippel lemma. 
	
	\begin{lemma}[multiplicity Schwartz-Zippel lemma~{\cite[Lemma 8]{DvirKSS2013}}]
		Let $P \in \F[\vx]$ be a nonzero $m$-variate polynomial of total degree at most $d$ and let $T \subseteq \F$. Then, \[ \sum_{\va \in T^m} \mult(P, \va) \leq d  \abs{T}^{m-1}.\] 
	\end{lemma}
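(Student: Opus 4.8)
The natural route is induction on the number of variables $m$, mirroring the classical Schwartz--Zippel argument while carrying the multiplicities along. For the base case $m=1$, a nonzero $P\in\F[x]$ of degree at most $d$ is divisible by $(x-a)^{\mult(P,a)}$ for every $a\in T$, and these factors are pairwise coprime for distinct $a$, so $\prod_{a\in T}(x-a)^{\mult(P,a)}$ divides $P$ and hence $\sum_{a\in T}\mult(P,a)\le\deg P\le d$.

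For the inductive step I would first isolate two elementary facts. First, multiplicity is additive under products: $\mult(PQ,\va)=\mult(P,\va)+\mult(Q,\va)$, since $\mult(P,\va)$ is exactly the order (least total degree of a nonzero monomial) of the Taylor shift $P(\va+\vz)$ in the integral domain $\F[[\vz]]$, and orders are additive in a domain. In particular $\mult(x_m-a,\vc)$ equals $1$ when the last coordinate of $\vc$ is $a$ and $0$ otherwise, so $\sum_{\vc\in T^m}\mult(x_m-a,\vc)=|T|^{m-1}$. Second, restriction cannot decrease multiplicity at the corresponding point: if $a\in\F$ is such that $P(x_1,\dots,x_{m-1},a)\not\equiv 0$, then for every $\vb\in\F^{m-1}$ one has $\mult(P,(\vb,a))\le\mult\big(P(x_1,\dots,x_{m-1},a),\vb\big)$. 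This follows by setting $z_m=0$ in the expansion of $P$ about $(\vb,a)$: the substitution annihilates every monomial involving $z_m$ and returns the expansion of $P(\cdot,a)$ about $\vb$, while the surviving monomials of total degree below $\mult(P,(\vb,a))$ already had zero coefficient.

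With these in hand, the inductive step runs as follows. Assume the lemma for $m-1$ variables and let $P\ne 0$ have degree at most $d$ in $m$ variables. First strip off all linear factors in $x_m$ supported on $T$: write $P=\big(\prod_{a\in T}(x_m-a)^{k_a}\big)\cdot\widehat P$ with $k_a\ge 0$ and $\widehat P$ divisible by no $(x_m-a)$ with $a\in T$, and set $d':=d-\sum_{a\in T}k_a$, so $\deg\widehat P\le d'$. By the additivity fact, $\sum_{\vc\in T^m}\mult(P,\vc)=\big(\sum_{a\in T}k_a\big)|T|^{m-1}+\sum_{\vc\in T^m}\mult(\widehat P,\vc)$. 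To bound the last sum, fix $a\in T$; since $(x_m-a)\nmid\widehat P$, the polynomial $h_a:=\widehat P(x_1,\dots,x_{m-1},a)$ is a nonzero $(m-1)$-variate polynomial of degree at most $d'$, so by the restriction fact and the inductive hypothesis $\sum_{\vb\in T^{m-1}}\mult(\widehat P,(\vb,a))\le\sum_{\vb\in T^{m-1}}\mult(h_a,\vb)\le d'|T|^{m-2}$. Summing over the $|T|$ choices of $a$ gives $\sum_{\vc\in T^m}\mult(\widehat P,\vc)\le d'|T|^{m-1}$, and therefore $\sum_{\vc\in T^m}\mult(P,\vc)\le\big(\sum_{a\in T}k_a+d'\big)|T|^{m-1}=d|T|^{m-1}$.

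The only genuinely delicate point is the restriction inequality together with the necessity of first removing all $(x_m-a)$-factors: for a value $a$ with $\widehat P(\cdot,a)\equiv 0$ there is no usable bound on $\sum_{\vb}\mult(\widehat P,(\vb,a))$, so this preprocessing is essential (equivalently, one can run a secondary induction on $d$, peeling off one linear factor at a time). Verifying the additivity and restriction facts rigorously needs only a little care with orders in $\F[[\vz]]$ and with Taylor expansions; everything else is bookkeeping with the geometric factor $|T|^{m-1}$.
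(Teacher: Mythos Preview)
Your proof is correct. The paper does not actually prove this lemma; it is merely cited as a known result (Lemma~8 of \cite{DvirKSS2013}), so there is nothing in the paper to compare against. Your argument---induction on $m$, stripping all $(x_m-a)$ factors supported on $T$ to guarantee each restriction $\widehat P(\cdot,a)$ is nonzero, then applying the restriction inequality and the inductive hypothesis slice by slice---is essentially the standard proof from the cited reference.
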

	We note that the lemma immediately implies that the multiplicity codes
	as defined in \cref{def:multiplicity code} have distance at least $n^m
	(1-\frac{d}{ns})$.
	
	\subsection{A fine-grained notion of distance for multiplicity codewords}
	The multiplicity Schwartz-Zippel lemma states that for two distinct polynomials $P$ and $Q$ of total degree $d$, 
	\[ \sum_{\va \in T^m} (s-\mult(P-Q, \va)) > n^m \left( s- \frac{d}{n} \right).\]
	Thus, $\sum_{\va \in T^m} {(s-\mult(P-Q, \va))}$ is a candidate
	measure of distance between the encodings of $P$ and $Q$. However, it may be the case that
	$\mult(P-Q, \va)$ exceeds $s$, making the quantity negative. To get around this, we ``cap'' 
	$\mult(P-Q, \va)$ at $s$. That is, for each $\va \in T^m$, we take $s - \min\{\mult(P-Q, \va), s\}$.
	
	We certainly have 
	\[ \sum_{\va \in T^m}(s - \min\left\{\mult(P-Q, \va), s\right\}) \geq  \sum_{\va \in T^m} (s-\mult(P-Q, \va)) > n^m \left( s- \frac{d}{n} \right) \]
	
	We then naturally extend this measure of distance to a notion of
	distance between functions $f, g \colon T^{m} \to \F_{<s}[\vz]$ that might not necessarily be valid codewords of a multiplicity code. 
	%We start by defining this notion.  
	\begin{definition}\label{def:SZ distance}
		Let $T\subseteq \F$ be any set. For functions $f, g \colon T^m \to \F_{<s}[\vz]$, we define $\Delta_{\mathsf{mult}}^{(s)} (f, g)$  as
		
		\[ \Delta_{\mathsf{mult}}^{(s)} (f, g) := \sum_{\va \in T^m} (s - d^{(s)}_{\min}(f(\va) - g(\va)))\, ,\]
		where 
		for a polynomial $R$, $d^{(s)}_{\min}(R)$ is defined
		to be the minimum of $s$ and the degree of the minimum degree monomial
		with a non-zero coefficient in $R$. Note that if $R$ is identically
		zero, then $d^{(s)}(R)$ is set to $s$.
	\end{definition}
	
	As indicated above, the quantity $d^{(s)}_{\min}$ is related to the notion of multiplicity in the following sense: 
	\[
	d^{(s)}_{\min}(\enc_{T^m}^{(s)}(P)(\va) -\enc_{T^m}^{(s)}(Q)(\va)) = \min\{\mult(P-Q, \va), s\}\, .
	\]
	For brevity, we abuse notation slightly and sometimes drop one or more of the parameters $r, s, T^m$ from $\Delm^{(s)}(\enc_{T^m}^{(r)}(P),\enc_{T^m}^{(r)}(Q))$ when they are clear from the context. Note that the parameter $r$ in the encoding and $s$ in the $\Delm$ in $\Delm^{(s)}(\enc_{T^m}^{(r)}(P),\enc_{T^m}^{(r)}(Q))$ might be different from each other. 
	
	\begin{remark}\label{rmk:Delm for non-constant s}
		In \cref{def:SZ distance}, $s$ does not depend on $\va$ and is
		constant throughout. However, in the course of our analysis we will
		also encounter a scenario (see \cref{alg:gen mult decoder}) where $m =
		1$ and the multiplicity parameter is not constant and is given by a
		function $\sbar\colon T \to \Z_{\geq 0}$. The \cref{def:SZ distance} naturally extends to this case as follows.
		\[ \Delta_{\mathsf{mult}, T}^{(\sbar)} (f, g) = \sum_{\va \in T} (\sbar(\va) - d^{(s)}_{\min}(f(\va) - g(\va)))\,.\qedhere\]
	\end{remark}

	The following observation relates $\Delm^{(s)}$ with the standard definition of Hamming distance for multiplicity codes of order $s$. 
	\begin{observation}\label{obs:delm vs del}
		For any two polynomials $P, Q \in \F[\vx]$ of degree at most $d$, 
		\[
		\Delm^{(s)}(\encst(P), \encst(Q)) \leq s\cdot \Del(\encst(P), \encst(Q)) \, .
		\]
	\end{observation}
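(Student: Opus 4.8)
The plan is to establish the inequality pointwise over the grid $T^m$, exploiting that both $\Delm^{(s)}$ and $\Del$ are defined as sums of per-coordinate contributions. Concretely, for a fixed $\va \in T^m$ I would set $R_{\va} := \encst(P)(\va) - \encst(Q)(\va) \in \F_{<s}[\vz]$ and compare the contribution of $\va$ to the two sides: on the left it is $s - d^{(s)}_{\min}(R_{\va})$, while on the right it is $s$ times the indicator of $R_{\va} \neq 0$ (recalling that $\Del(\encst(P),\encst(Q))$ counts the coordinates $\va$ on which $\encst(P)$ and $\encst(Q)$ differ). It then suffices to verify $s - d^{(s)}_{\min}(R_{\va}) \le s \cdot \mathbf{1}[R_{\va}\neq 0]$ for each $\va$ and sum over the grid.

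The verification splits into two cases. If $R_{\va} = 0$, then by the convention $d^{(s)}_{\min}(0) = s$ the left contribution is $s - s = 0$, matching $s \cdot 0$ on the right. If $R_{\va} \neq 0$, then the minimum-degree monomial occurring in $R_{\va}$ with a nonzero coefficient has degree at least $0$, so $d^{(s)}_{\min}(R_{\va}) \ge 0$, whence $s - d^{(s)}_{\min}(R_{\va}) \le s$, which is exactly $s$ times the right-hand contribution of $1$. Summing the per-point inequality over all $\va \in T^m$ gives $\Delm^{(s)}(\encst(P), \encst(Q)) \le s\cdot \Del(\encst(P), \encst(Q))$. I do not expect any genuine obstacle here; the only point requiring care is the boundary convention $d^{(s)}_{\min}(0) = s$, which is what makes the bound hold with equality precisely at the agreement coordinates. (In fact the same argument works verbatim for arbitrary $f,g\colon T^m \to \F_{<s}[\vz]$, not just for codewords, but the statement as given only needs the codeword case.)
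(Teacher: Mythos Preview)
Your proof is correct and matches the paper's approach: the paper does not give a formal proof of this observation but only an intuitive explanation that at each point $\va$ the Hamming distance contributes weight one while $\Delm^{(s)}$ contributes at most $s$, which is precisely the pointwise inequality you establish and sum.
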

	Intuitively, if $\encst(P)$ differs from $\encst(Q)$ at a point 
	$\va \in T^m$, then the standard Hamming distance $\Del(\encst(P), \encst(Q))$ counts this point with weight one, whereas the new notion of distance $\Delm^{(s)}(\encst(P), \encst(Q))$ also takes into account 
	the lowest degree monomial in $\vz^{\ve}$ such that the coefficients of $\vz^{\ve}$ in $\encst(P)$ and $\encst(Q)$ are not equal to each other. This fine-grained structure will be crucially used in the analysis of our algorithm. 
	
	\subsection{Main Result}
	
	With these definitions in place, our main technical result can be stated as follows. 
	\begin{theorem}\label{thm:main tech result}
		Let $s, d, m \in \N$ and $T \subseteq \F$ be such that $d <
		s|T|$. Then, there is a deterministic algorithm (\cref{alg:outer multivariate})
		that runs in time $(sn)^{3m + s + O(1)}\cdot \binom{m+s-1}{s}$ and on input
		$f\colon T^m \to \mathbb{F}_{<s}[\vz]$ outputs the unique polynomial
		$P \in \mathbb{F}_{\leq d}[\vx]$ (if such a polynomial exists) such
		that
		\[\Delm^{(s)}(f, \encst(P)) < \frac{n^m}{2}\left(s-\frac{d}{n}\right),\]
		where $n = \abs{T}$.
	\end{theorem}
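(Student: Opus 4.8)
The plan is to algorithmize an inductive proof of the multiplicity Schwartz--Zippel lemma, generalizing the Reed--Muller decoder of Kim--Kopparty. The outer structure is an induction on the number of variables $m$, but organized via a nonstandard form of the SZ induction: I would view the unknown $P$ as an element of $(\F[x_m])[x_1,\dots,x_{m-1}]$ — an $(m-1)$-variate polynomial whose coefficients are \emph{univariate} in $x_m$ — so that the coefficient ring stays univariate and the only sub-primitive the recursion ever needs is a weighted \emph{univariate} multiplicity code decoder (rather than a weighted multivariate one, which we do not have). This reduces everything to a bivariate core: there one writes $P=\sum_{i=0}^{d}P_i(x)\,y^{d-i}$ with $\deg P_i\le i$ and recovers $P_0,\dots,P_d$ successively; after recovering $P_0,\dots,P_{\ell-1}$ one passes to the residual $f_\ell:=f-\encst\!\bigl(\sum_{i<\ell}P_i(x)y^{d-i}\bigr)$, which still has $\Delm^{(s)}$-distance to $\encst\!\bigl(\sum_{i\ge\ell}P_i(x)y^{d-i}\bigr)$ strictly below $\tfrac{n^m}{2}(s-d/n)$. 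It is essential to work with the fine-grained distance $\Delm^{(s)}$ (\cref{def:SZ distance}) rather than Hamming distance, so I would first record its basic properties: additivity over columns, the inequality $\Delm^{(s)}\le s\cdot\Delta$ (\cref{obs:delm vs del}), and the identity $d^{(s)}_{\min}(\encst(P)(\va)-\encst(Q)(\va))=\min\{\mult(P-Q,\va),s\}$.

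For the bivariate step, view $f_\ell$ as data on the grid $T\times T$ and, for each column $x=a$, run the univariate multiplicity decoder — including the varying-multiplicity Welch--Berlekamp variant built in \cref{sec:BWmultiplicity} — to obtain a candidate $G^{(a)}(y)$ of degree $\le d-\ell$ together with a vector of weights $\vw(a)=(w(a,0),\dots,w(a,s-1))$ that record, level by level, how close the column data is to $\encst(G^{(a)})$. Collecting the $y^{d-\ell}$-coefficients of the $G^{(a)}$ into a ``fractional word'' $(g,\vw)$, I would then define the modified distance $\Gamma\bigl((g,\vw),h\bigr)$ (\cref{def: gamma}) through the partition of $T$ into the agreement sets of each multiplicity order, and prove its two decisive properties: (i) $\Gamma\bigl((g,\vw),P_\ell\bigr)\le\Delm^{(s)}(f,\encst(P))<\tfrac{n^m}{2}(s-d/n)$ (\cref{lem: gamma comparison to usual distance}), by a term-by-term comparison over columns; and (ii) the triangle-type bound $\Gamma\bigl((g,\vw),Q_\ell\bigr)+\Gamma\bigl((g,\vw),R_\ell\bigr)\ge n^m(s-d/n)$ for any two distinct candidates $Q_\ell\ne R_\ell$ for $P_\ell$ (\cref{lem: gamma triangle}), which is an avatar of the multiplicity SZ lemma applied to the bivariate difference polynomial together with the definition of the weights. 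Properties (i) and (ii) show $(g,\vw)$ pins down $P_\ell$ uniquely. The algorithm then extracts $P_\ell$ by brute force over all ``step-thresholds'' $\thetabar=(\theta_0,\dots,\theta_{s-1})$ — there are $O(n^s)$ of them — erasing the coordinates of $g$ according to $\thetabar$ and $\vw$ and running the varying-multiplicity Welch--Berlekamp decoder on the resulting erased univariate multiplicity word; this threshold sweep is the ``weighted univariate multiplicity code decoder'' of \cref{sec:wumd}.

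The main obstacle — and, as the overview already flags, the technically hardest part of the paper — is showing that \emph{some} step-threshold succeeds. The clean random-threshold (Forney GMD) argument that Kim--Kopparty use for $s=1$ fails here: with $s>1$ there are $s$ distinct unique-decoding radii (one per multiplicity order), no single scaling converts the promise (i) into an expectation bound, and one can exhibit instances where every reasonable random threshold overshoots. My plan is instead a matching argument. Assuming for contradiction that \emph{every} step-threshold fails, I would construct a matching between ``correct'' and ``incorrect'' coordinate--multiplicity pairs that respects the monotonicity of the multiplicity levels and in which each matched disagreeing member carries $\Delm$-weight at least that of its agreeing partner; summing over such a matching forces $\Gamma\bigl((g,\vw),P_\ell\bigr)\ge\tfrac{n^m}{2}(s-d/n)$, contradicting (i). Producing this matching is where the real difficulty lies: it requires a Hall-type existence theorem adapted to the layered, monotone structure (\cref{thm:gen_hall}) and a delicate accounting of how threshold failures at the various levels bound the relevant deficiency. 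As a warm-up that isolates the right shape of the argument, I would first give the matching-based reanalysis of plain Forney GMD decoding (\cref{sec:forney}) — the $s=1$ case, with a single radius — and then lift it to the layered setting.

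To assemble the theorem, correctness of the recursion follows by induction on $m$ from the bivariate analysis and the nonstandard SZ decomposition, and uniqueness of the output $P$ is exactly the multiplicity SZ lemma. For the running time, there are $(sn)^{O(m)}$ coefficient-peeling steps across the $O(m)$ levels of the recursion; each triggers at most $n^{m-1}$ column decodings running in $\poly(sn)$ time, followed by a sweep over $O(n^s)$ step-thresholds, each with a $\poly(sn)$-time Welch--Berlekamp solve over an alphabet of size $\binom{s+m-1}{m}$; multiplying through the recursion yields the claimed $(sn)^{3m+s+O(1)}\cdot\binom{m+s-1}{s}$ bound.
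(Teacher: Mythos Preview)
Your proposal is correct and follows essentially the same route as the paper: the nonstandard SZ induction keeping the coefficient ring univariate, the bivariate peeling $P=\sum_\ell P_\ell(x_1)x_2^{d-\ell}$, the column-wise univariate multiplicity decodes yielding a fractional word $(g,\vw)$, the modified distance $\Gamma$ with its two key properties (\cref{lem: gamma comparison to usual distance} and \cref{lem: gamma triangle}), the sweep over $O(n^s)$ step-thresholds, and --- the crux --- the matching-based existence proof for a good threshold via the generalized Hall theorem, with the Forney reanalysis as warm-up. The only mild imprecision is that per column you do not produce a single $G^{(a)}$ but rather one polynomial $G^{(i,a)}$ for each derivative level $i\in[r]$ (decoding $f_\ell^{(i,a)}$ as a univariate multiplicity-$(s-i)$ word), and $g_\ell(a)$ is assembled from the leading coefficients of all of these; this is exactly what your ``level by level'' weight vector already anticipates, so it does not constitute a gap.
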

	
	Since this is stated in terms of the fine-grained distance, this is actually a strengthening of the main result as stated in the introduction.
	
	\section{Decoding univariate multiplicity codes with varying
		multiplicities}\label{sec:BWmultiplicity}
	In this section, we discuss an algorithm for decoding univariate
	multiplicity codes up to half their minimum distance based on the
	standard Welch-Berlekamp decoder\footnote{By the
		Welch-Berlekamp decoder we refer to the description of the Welch-Berlekamp algorithm~\cite{WelchB1986} provided by
		Gemmell and Sudan~\cite{GemmellS1992}.} for Reed-Solomon codes. While such
	decoders for univariate multiplicity codes are well known, the decoder
	discussed here handles a slightly more general scenario than
	off-the-shelf decoders of this kind, namely that the multiplicity
	parameter at each evaluation point is not necessarily the same. This
	slight generalization is necessary for our applications in this
	paper. %However, as we observe in this section, a natural adaptation of the Welch-Berlekamp decoding algorithm handles this scenario.
	
	\subsection{Description of the generalized univariate multiplicity decoder}
	
	We start with a description of the algorithm.

 	\begin{algorithm}%[H]
		\caption{Generalized Univariate Multiplicity Decoder}
		\label{alg:gen mult decoder}
\nonl 	\KwIn{$T \subseteq \F$ \Comment*[f]{set of evaluation points}}
            \myinput{$e$ \Comment*[f]{degree of the univariate polynomial}}
            \myinput{$\sbar \colon T \to \Z_{\geq 0}$ \Comment*[f]{number of
                    multiplicities}}
              \myinput{$h \colon T \to \F[z]$ such that for all $a \in T$, $h_a = \sum_{i\in[\sbar\left(a\right)]} h_a^{\left(i\right)} z^i$ \Comment*[f]{received word}}

		\KwOut{$R \in \F_{\leq e}[x]$ such that $\Delm^{(\sbar)}(h,\enc^{(\sbar)}(R)) <
			\frac{1}{2} (\sum_a \sbar(a) - e)$, if such an $R$ exists and $\bot$ otherwise}
		
%		\nonl \hrulefill \\
		Set $N \gets \sum_{a \in T} \sbar(a)$\,; \\
		Set $D \gets \lfloor \frac{1}{2} (N+e) \rfloor +1$ \,; \\
		Find a non-zero polynomial $Q(x, y) = B_0(x) + y\cdot B_1(x)$ such
		that \label{line:interpolation}\\
		\nonl\Indp $\bullet$ $\deg (B_0) < D$,\\
		\nonl $\bullet$ $\deg (B_1) < D-e$, and  \\
		\nonl $\bullet$ $\forall a \in T$, $Q(a+z, h_a) \equiv 0 \mod
		z^{\sbar(a)}$ \,;\\
		\Indm
		
		\leIf{the following three conditions are satisfied\\
			\nonl\Indp $\bullet$ $-B_0/B_1$ is a polynomial,\\
			\nonl $\bullet$ $-B_0/B_1$ has degree $\leq e$, and\\
			\nonl $\bullet$ $\Delm^{(\sbar)}(h,\enc^{(\sbar)}(-B_0/B_1)) <
			\frac{1}{2} (\sum_a \sbar(a) - e)$\\}
		{\KwRet{$-B_0/B_1$}}{\KwRet $\bot$\,.}
		
	\end{algorithm}
	We remark while the overall structure of the algorithm is essentially that of the Welch-Berlekamp based algorithms, the main point of difference is that the number of linear constraints imposed in the interpolation step (\cref{line:interpolation}) at any $a \in T$ is $\sbar(a)$, and might be different for  different $a \in T$. 
	
	\subsection{Correctness and running time of \texorpdfstring{\cref{alg:gen mult decoder}}{Algorithm 1}}
	We first show that the interpolation step (\cref{line:interpolation}) is possible, that is, a  non-zero polynomial $Q(x, y)$ satisfying the constraints exists. Then, we argue that if there is a polynomial $R$ of degree at most $e$ such that $\Delm^{(\sbar)}(h,\enc^{(\sbar)}(R)) < \frac{1}{2} (\sum_a \sbar(a) - e)$, then $R$ is indeed output by the algorithm. Moreover, from the check in step $4$, it is clear that the algorithm never outputs a polynomial that is \emph{far} from the received word. Thus, together, these claims imply the correctness of the algorithm. 
	
	We also note that the algorithm is efficient and runs in polynomial time in its input size, since all it needs is to solve a linear system of not-too-large size and a call to an off-the-shelf polynomial division algorithm, both of which can be done efficiently. We now proceed with the proof of correctness.

	\begin{claim}\label{clm: interpolation}
		For $D = \lfloor \frac{1}{2}(N+e) \rfloor + 1$, there exists a non-zero polynomial $Q(x,y) = B_0(x)+y\cdot B_1(x)$ with $\deg(B_0) < D$, $\deg(B_1) < D-e$ and $\forall a \in T$, $Q(a+z, h_a) \equiv 0 \mod z^{\sbar(a)}$, where $N = \sum_a \sbar(a)$. 
		%\[\frac{\hpartial^i (B_0)}{\hpartial(x^i)} (a) + \sum_{j=0}^i h_a^{(j)} \frac{\hpartial^{i-j} (B_1)}{\hpartial(x^{i-j}) }(a) = 0\]
		%for every $a \in T$ and $0 \leq i \leq \sbar(a)-1$, where $N = \sum_a \sbar(a)$. 
		Moreover, for any such non-zero solution, $B_1(x)$ is non-zero.
	\end{claim}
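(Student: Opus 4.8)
The plan is to view the interpolation step as a homogeneous linear system in the unknown coefficients of $B_0$ and $B_1$, and show it is underdetermined, so a nonzero solution must exist. First I would count the unknowns: $B_0$ has $D$ coefficients (degrees $0$ through $D-1$) and $B_1$ has $D-e$ coefficients (degrees $0$ through $D-e-1$), for a total of $2D - e$ unknowns. Next I would count the constraints: for each $a \in T$, the requirement $Q(a+z, h_a) \equiv 0 \bmod z^{\sbar(a)}$ is the vanishing of the first $\sbar(a)$ coefficients (those of $z^0, \dots, z^{\sbar(a)-1}$) of the polynomial $B_0(a+z) + h_a \cdot B_1(a+z)$ in the variable $z$; since $h_a = \sum_{i \in [\sbar(a)]} h_a^{(i)} z^i$ is a known polynomial, each such coefficient is a fixed $\F$-linear form in the unknowns. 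Thus the total number of linear constraints is $\sum_{a \in T} \sbar(a) = N$. A nonzero solution exists as long as $2D - e > N$, i.e. $2D > N + e$; since $D = \lfloor \frac{1}{2}(N+e) \rfloor + 1 > \frac{1}{2}(N+e)$, this holds. (One should also check $D - e \geq 1$ so that $B_1$ is allowed to be a genuine nonzero polynomial and the degree bounds are nonvacuous; this follows from $N \geq e$, which in turn holds in the regime where the decoding problem is nontrivial — and if $N < e$ the minimum-distance bound $\frac12(N-e)$ is negative, so the claim is vacuous anyway. I would state this hypothesis explicitly.)

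For the "moreover" part, I would argue by contradiction: suppose $B_1 \equiv 0$. Then $Q(x,y) = B_0(x)$ with $B_0 \not\equiv 0$, and the constraint at each $a \in T$ becomes $B_0(a+z) \equiv 0 \bmod z^{\sbar(a)}$, which says precisely that $B_0$ vanishes at $a$ with multiplicity at least $\sbar(a)$. Summing, $B_0$ vanishes with total multiplicity at least $\sum_{a \in T} \sbar(a) = N$ over $T$. But $\deg B_0 < D = \lfloor \frac{1}{2}(N+e)\rfloor + 1 \leq \frac{N+e}{2} + 1$, and by the univariate degree-mantra (counting zeros with multiplicity) a nonzero polynomial of degree $< D$ has at most $D - 1 \leq \frac{N+e}{2}$ zeros with multiplicity. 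Since $e < N$ (again the nontriviality assumption), we get $\frac{N+e}{2} < N$, so $B_0$ cannot have $N$ zeros with multiplicity — contradiction. Hence $B_1 \not\equiv 0$.

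The main obstacle here is not conceptual but bookkeeping: one must be careful that "$Q(a+z, h_a) \equiv 0 \bmod z^{\sbar(a)}$" is genuinely $\sbar(a)$ \emph{independent} linear constraints over $\F$ (it is, since they are the coefficients of distinct powers $z^0,\dots,z^{\sbar(a)-1}$), and one must track the exact inequality $D > \frac{1}{2}(N+e)$ coming from the floor-plus-one, making sure the strict inequality survives. A secondary subtlety is confirming the degree budget $\deg B_1 < D - e$ is positive, which is where the standing assumption $N \geq e$ (equivalently, that the claimed decoding radius $\frac{1}{2}(N-e)$ is nonnegative) enters; I would flag this assumption at the start of the proof so the count is clean.
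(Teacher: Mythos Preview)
Your proposal is correct and follows essentially the same approach as the paper: count unknowns ($2D-e$) versus homogeneous linear constraints ($N$), conclude a nonzero solution exists from $D > \tfrac12(N+e)$, and for the ``moreover'' part argue that $B_1 \equiv 0$ would force $B_0$ to have total multiplicity of zeros at least $N$, exceeding its degree bound. Your treatment is in fact slightly more careful than the paper's, which does not explicitly flag the standing assumption $N > e$ (you correctly note it is needed both for $D-e \geq 1$ and for the final multiplicity inequality); one small quibble is that you need not check that the $\sbar(a)$ constraints at each point are \emph{independent} --- the dimension count only requires that there are at most $N$ constraints total.
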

	
	\begin{proof}
		As is standard with decoding algorithms for various algebraic codes that are based on the polynomial method, we think of the constraints as a system of homogeneous linear equations in the coefficients of the unknown polynomials $B_0$ and $B_1$. 
		
		The number of variables in the linear system equals the number of
		coefficients we need to find across $B_0$ and $B_1$, which is
		$D+(D-e) = 2D-e$. For every $a \in T$, the constraint
		\[ Q(a+z, h_a) \equiv 0 \mod z^{\sbar(a)} \] is really $\sbar(a)$ many
		homogeneous linear constraints on the coefficients of $Q$, with there
		being one constraint corresponding to the coefficient of $z^i$ in
		$Q(a+z, h_a)$ being $0$ for every $i \in [\sbar(a)]$. Thus, the total
		number of homogeneous linear constraints is
		$\sum_{a\in T} \sbar(a) = N$. Hence, setting $2D-e >N$, e.g.,
		$D > \frac{1}{2}(N+e)$ ensures the existence of a non-zero solution.
		
		For the \emph{moreover} part, observe that if $B_1$ is identically
		zero, then the system of homogeneous linear constraints imposed on $Q$
		imply that $B_0$ vanishes with multiplicity at least $\sbar(a)$ for
		every $a \in T$. Thus,
		$\sum_{a \in T} \mult(B_0, a) = \sum_{a \in T} \sbar(a) = N > D \geq
		\deg(B_0)$. But this implies that $B_0$ must be identically zero, and
		hence $Q$ must be identically zero, which contradicts the non-zeroness
		of the solution.
	\end{proof}
	
	We now observe that any non-zero solution of the linear system, as is
	guaranteed by \autoref{clm: interpolation} becomes identically zero when
	$y$ is substituted by the \emph{correct} message polynomial $R$.
	
	\begin{claim}\label{clm: close enough poly staisfy eqn}
		If $R \in \F_{\leq e}[x]$ is such that $\Delm^{(\sbar)}(h,\enc^{(\sbar)}(R)) < \frac{1}{2}(N - e)$ and $Q$ is a non-zero polynomial satisfying the set of constraints in the algorithm, then $Q(x, R) \equiv 0$.
	\end{claim}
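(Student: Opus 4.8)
The plan is to mimic the classical Welch--Berlekamp argument, adapting it so that the multiplicity bookkeeping at each point $a \in T$ is governed by the (possibly non-constant) value $\sbar(a)$. First I would set $E(x) := B_1(x)$ and $N_{\mathrm{poly}}(x) := -B_0(x)$, so that $Q(x,y) = y\cdot E(x) - N_{\mathrm{poly}}(x)$, and consider the polynomial $\Phi(x) := Q(x, R(x)) = R(x)E(x) - N_{\mathrm{poly}}(x)$. By construction $\deg \Phi \le \max\{\deg(R) + \deg(E), \deg(B_0)\} < \max\{e + (D-e), D\} = D$, so it suffices to show $\Phi$ vanishes at strictly more than $D-1$ points counted with multiplicity, forcing $\Phi \equiv 0$ by the (univariate, multiplicity-aware) degree mantra.

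The core of the argument is a local/pointwise claim: for each $a \in T$, the polynomial $\Phi$ vanishes at $a$ with multiplicity at least $\sbar(a) - d^{(\sbar)}_{\min}\big(h_a - \enc^{(\sbar)}(R)(a)\big)$, i.e.\ at least the contribution of the point $a$ to the quantity $\big(\sum_a \sbar(a)\big) - \Delm^{(\sbar)}(h, \enc^{(\sbar)}(R)) = N - \Delm^{(\sbar)}(h,\enc^{(\sbar)}(R))$. To see this at a fixed $a$: write $k := d^{(\sbar)}_{\min}(h_a - \enc^{(\sbar)}(R)(a))$, so that $h_a \equiv R(a+z) \bmod z^k$ (both truncated modulo $z^{\sbar(a)}$, and they agree on the first $k$ coefficients). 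The interpolation constraint says $Q(a+z, h_a) \equiv 0 \bmod z^{\sbar(a)}$. Since $Q(x,y)$ is affine in $y$, replacing $h_a$ by $R(a+z)$ changes $Q(a+z, \cdot)$ only by $E(a+z)\cdot\big(R(a+z) - h_a\big)$, which is divisible by $z^{k}$; hence $Q(a+z, R(a+z)) = \Phi(a+z) \equiv 0 \bmod z^{\min\{\sbar(a), k\}} = z^{\,\sbar(a) - (\sbar(a)-k)}$. (One must handle the truncation carefully — strictly $h_a$ is only a polynomial of degree $<\sbar(a)$, so "$R(a+z) - h_a$" should be read as "$R(a+z) \bmod z^{\sbar(a)} - h_a$", but this is harmless modulo $z^{\sbar(a)}$.) Thus $\mult(\Phi, a) \ge \sbar(a) - k$.

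Summing the local claim over all $a \in T$ gives
\[
\sum_{a \in T} \mult(\Phi, a) \;\ge\; \sum_{a\in T}\Big(\sbar(a) - d^{(\sbar)}_{\min}\big(h_a - \enc^{(\sbar)}(R)(a)\big)\Big) \;=\; N - \Delm^{(\sbar)}(h,\enc^{(\sbar)}(R)) \;>\; N - \tfrac12(N-e) \;=\; \tfrac12(N+e) \;\ge\; D-1,
\]
where the strict inequality is the hypothesis on $R$ and the last step uses $D = \lfloor\frac12(N+e)\rfloor + 1$. Since $\deg\Phi < D$, this forces $\Phi \equiv 0$, i.e.\ $Q(x,R) \equiv 0$, as claimed. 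The main thing to get right — and the one genuinely delicate point — is the local multiplicity bound: one needs the affine-in-$y$ structure of $Q$ together with a clean statement of what $d^{(\sbar)}_{\min}$ measures (agreement of low-order coefficients) so that the substitution $h_a \mapsto R(a+z)$ only perturbs things by a high power of $z$; everything else is degree counting. I would also remark in passing that once $\Phi\equiv 0$ and $B_1 \ne 0$ (from Claim~\ref{clm: interpolation}), we get $R = -B_0/B_1$, which is why the algorithm's output is correct — though that conclusion belongs to the next step, not to this claim.
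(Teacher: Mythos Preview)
Your approach is correct and essentially identical to the paper's proof: both argue that $\Phi(x):=Q(x,R(x))$ has degree $<D$ yet vanishes (with multiplicity) at least $N-\Delm^{(\sbar)}(h,\enc^{(\sbar)}(R))>\tfrac12(N+e)\ge D-1$ times, forcing $\Phi\equiv 0$.

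There is, however, a consistent sign slip in your write-up. With $k:=d^{(\sbar)}_{\min}(h_a-\enc^{(\sbar)}(R)(a))$ you correctly prove $\Phi(a+z)\equiv 0 \bmod z^{\min\{\sbar(a),k\}}=z^{k}$, so the local bound is $\mult(\Phi,a)\ge k$, \emph{not} $\sbar(a)-k$. Likewise, in the displayed sum the middle expression should be $\sum_a d^{(\sbar)}_{\min}(\cdots)$, which indeed equals $N-\Delm^{(\sbar)}$; as written, $\sum_a(\sbar(a)-d^{(\sbar)}_{\min}(\cdots))$ equals $\Delm^{(\sbar)}$, contradicting the equality you assert. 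Since you state the target ``contribution of $a$ to $N-\Delm^{(\sbar)}$'' correctly and the final inequality chain is right, this is clearly a bookkeeping slip rather than a conceptual gap; fix those three occurrences and the proof is clean.
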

	
	\begin{proof}
		The linear constraints imposed in the interpolation step imply that for every $a \in T$, \[ Q(a + z, h_a) \equiv 0 \mod z^{\sbar(a)}. \]
		Now, if $(h_a - \enc^{(\sbar)}(R)(a)) = 0 \mod z^{u_a}$, i.e. $\enc^{(\sbar)}(R)$ and $h$ agree with multiplicity at least $u_a$ at $a \in T$, then 
		\[ 
		Q(a + z, R(a + z)) \equiv 0 \mod z^{\min\{u_a, \sbar(a)\}}. 
		\]
		
		Now, from the definition of $\Delm$ and from the hypothesis of the claim, we know that
		\[
		\Delm^{(\sbar)}(h,\enc^{(\sbar)}(R)) = \sum_{a \in T} (\sbar(a) - \min\{u_a, \sbar(a))\} < \frac{1}{2}(N-e)\, .
		\]
		
		If $Q(x, R(x))$ is a non-zero polynomial, then it is a non-zero polynomial of degree strictly less than $D$, by construction of $Q$. We will now show the sum of multiplicities of zeroes at all points is at least $D$ which implies $Q(x, R(x))$ is identically zero.
		
		\[
		\begin{split}
			\sum_{a \in T} \mult(Q(x, R(x)), a) & \geq \sum_{a \in T} \min\{u_a, \sbar(a)\} \\
			& > \sum_{a \in T} s(a) - \frac{1}{2} (N-e) \\
			& = N - \frac{1}{2} (N-e) \\
			& = \frac{1}{2}(N+e)
		\end{split}
		\]
		
		Note that $\sum_{a \in T} \mult(Q(x, R(x)), a)$ is an integer. Then we must have 
		\[ \sum_{a \in T} \mult(Q(x, R(x)), a) > \lfloor \frac{1}{2}(N+e) \rfloor +1 = D \qedhere \]
	\end{proof}
	
	We are now ready to complete the proof of correctness of the algorithm. %We start by restating \cref{lem: correctness generalized univariate mult decoding}. 
	\begin{theorem}\label{thm:correctness of generalized univariate multiplicity code decoder}
		If $R \in \F_{\leq e}[x]$ is such that $\Delm^{(\sbar)}(h,\enc^{(\sbar)}(R)) < \frac{1}{2}(N - e)$, 
		%and  $Q$ is a non-zero polynomial satisfying the set of constraints in Line $4$ of Algorithm, 
		then $R$ is correctly output by the algorithm and  if there is no such
		close enough $R$, then the algorithm outputs $\bot$. Moreover, the algorithm runs in time $N^{O(1)}$. 
	\end{theorem}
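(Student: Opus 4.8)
The plan is to assemble \cref{clm: interpolation} and \cref{clm: close enough poly staisfy eqn} into the three assertions of the theorem: correctness when a close codeword exists, the $\bot$ output otherwise, and the polynomial running time.

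First I would handle the case where some $R \in \F_{\leq e}[x]$ satisfies $\Delm^{(\sbar)}(h, \enc^{(\sbar)}(R)) < \tfrac12(N - e)$. By \cref{clm: interpolation} the interpolation step in \cref{line:interpolation} produces a nonzero $Q(x,y) = B_0(x) + y\, B_1(x)$ meeting the degree constraints, and moreover $B_1 \not\equiv 0$. Applying \cref{clm: close enough poly staisfy eqn} to this $Q$ and $R$ gives $Q(x, R(x)) = B_0(x) + R(x) B_1(x) \equiv 0$, hence $B_0 = -R\cdot B_1$. Since $B_1 \neq 0$, this means $B_1 \mid B_0$, that $-B_0/B_1 = R$ is a polynomial of degree at most $e$, and that $\Delm^{(\sbar)}(h, \enc^{(\sbar)}(-B_0/B_1)) = \Delm^{(\sbar)}(h, \enc^{(\sbar)}(R)) < \tfrac12(N-e)$. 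Thus all three output conditions hold and the algorithm returns $-B_0/B_1 = R$. The key point to stress is that this conclusion is independent of \emph{which} nonzero solution the linear solver in \cref{line:interpolation} happens to return: every nonzero solution $Q$ satisfies $Q(x,R(x)) \equiv 0$ by \cref{clm: close enough poly staisfy eqn}, which (using $B_1 \not\equiv 0$) forces $-B_0/B_1 = R$.

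Next I would argue the $\bot$ case. By inspection of the algorithm, it returns a polynomial only if the returned object $-B_0/B_1$ lies in $\F_{\leq e}[x]$ and has $\Delm^{(\sbar)}$-distance less than $\tfrac12(N-e)$ from $h$; so whenever it returns something other than $\bot$, that something \emph{is} a valid close-enough codeword. Contrapositively, if no $R \in \F_{\leq e}[x]$ with $\Delm^{(\sbar)}(h, \enc^{(\sbar)}(R)) < \tfrac12(N - e)$ exists, then the final check cannot pass and the algorithm outputs $\bot$. It is also worth recording that when such an $R$ exists it is unique: if $R_1 \neq R_2$ both lay within the bound, then, viewing $R_1 - R_2$ as a nonzero polynomial of degree $\leq e$, the degree mantra gives $\sum_{a \in T} \mult(R_1 - R_2, a) \leq e$, whence $\Delm^{(\sbar)}(\enc^{(\sbar)}(R_1), \enc^{(\sbar)}(R_2)) \geq N - e$; on the other hand the (easily checked) triangle inequality for $\Delm^{(\sbar)}$ gives $\Delm^{(\sbar)}(\enc^{(\sbar)}(R_1), \enc^{(\sbar)}(R_2)) \leq \Delm^{(\sbar)}(h,\enc^{(\sbar)}(R_1)) + \Delm^{(\sbar)}(h,\enc^{(\sbar)}(R_2)) < N - e$, a contradiction.

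For the running time, the dominant steps are: forming the $N$ homogeneous linear constraints of \cref{line:interpolation} --- each obtained by expanding $Q(a+z, h_a)$ modulo $z^{\sbar(a)}$ as linear forms in the $2D - e = O(N)$ unknown coefficients of $B_0, B_1$, which costs $\poly(N)$ arithmetic operations; solving this system for a nonzero vector by Gaussian elimination, again $\poly(N)$; and finally a single polynomial division to form $-B_0/B_1$ and test integrality, followed by computing $\enc^{(\sbar)}(-B_0/B_1)$ and the quantity $\Delm^{(\sbar)}(h, \enc^{(\sbar)}(-B_0/B_1))$, all $\poly(N)$. Hence the algorithm runs in time $N^{O(1)}$. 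The only place that needs a moment's thought is the independence-of-solution remark in the first step; everything else is routine bookkeeping on top of the two claims.
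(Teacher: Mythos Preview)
Your proof is correct and follows essentially the same route as the paper: use \cref{clm: interpolation} to guarantee a nonzero $Q$ with $B_1\not\equiv 0$, use \cref{clm: close enough poly staisfy eqn} to force $-B_0/B_1=R$, invoke the final sanity check for the $\bot$ case, and observe that the algorithm is just linear algebra plus univariate polynomial arithmetic on objects of size $O(N)$. Your additional uniqueness paragraph is not in the paper's proof (and is not strictly needed, since the sanity check alone already rules out a wrong output), but it is correct and harmless.
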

	\begin{proof}
		From \autoref{clm: interpolation}, we know that the linear system solver in the algorithm (\cref{line:interpolation})  always finds a non-zero polynomial $B_0(x) + yB_1(x)$ satisfying the linear constraints imposed in the algorithm, regardless of the existence of a codeword that is close enough to the received word. Moreover, in any such non-zero solution, $B_1$ is a non-zero polynomial. 
		
		We also know from \autoref{clm: close enough poly staisfy eqn} that any $R \in \F_{\leq e}[x]$ such that $\Delm^{(\sbar)}(h,\enc^{(\sbar)}(R)) < 1/2(N -e)$ satisfies
		\[
		B_0(x) + B_1(x)\cdot R(x) = 0 \, ,
		\]
		or, in other words $R = -B_0/B_1$, and is correctly output by the algorithm. 
		Since the algorithm performs a sanity check in the last line to make sure that $-B_0/B_1$ is indeed a low-degree polynomial such that the corresponding codeword is close to $h$ and outputs the computed solution only if this check is passed; else it outputs $\bot$. Thus, it never outputs an incorrect solution. 
		
		\cref{alg:gen mult decoder} really only needs to set up and solve a linear system of size $O(N)$ and perform some basic univariate polynomial arithmetic involving polynomials of degree at most $D < N$. Thus, it runs in time $N^{O(1)}$. 
	\end{proof}

	\section{Forney's generalized minimum distance decoding}\label{sec:forney}
	
	In this section, we give an alternative analysis of Forney's algorithm
	for decoding a concatenated code from half its minimum distance,
	assuming that there is an efficient algorithm for decoding the outer
	code from errors (scaled by a factor of $2$) and erasures up to its
	minimum distance, and that the inner code has an efficient maximum
	likelihood decoder. An example of such a setting is when the outer
	code is the Reed-Solomon code and the inner code has block length
	logarithmic in the total block length, and is over an alphabet of
	constant size. In this case, the outer code can be efficiently decoded
	from errors and erasures using, for instance, the Welch-Berlekamp
	algorithm, as long as
	\[
	2(\#\text{Errors}) + (\#\text{Erasures}) <  (\text{Minimum distance}) \, .
	\]
	For the inner code, one can just do a brute-force iteration over all codewords, and find the closest one. We recommend the reader to think of this example going forward even though the discussion here applies to a general concatenated code. 
	
	\subsection{Concatenated codes}
	We start with a definition of concatenated codes. 
	\begin{definition}[concatenated code]\label{def: concatenated codes}
		Let $q \geq 2, k\geq 1$ be natural numbers and let $Q = q^k$. Let
		$C_{out}\colon [Q]^K \to [Q]^N$ be a code of minimum distance $D$
		and let $C_{in}\colon [q]^k \to [q]^n$ be a code of minimum distance
		$d$. The concatenated code
		$C = C_{out}\circ C_{in} \colon [Q]^K \to [q]^{Nn}$ is defined as
		follows: Given a message $m \in [Q]^K$, we first apply $C_{out}$ to
		$m$ to get a codeword $m' \in [Q]^N$. Since $Q =q^k$, we identify
		each symbol of $m'$ with a vector of length $k$ over $[q]$, or
		equivalently, an element in the message space of $C_{in}$. We now
		encode each coordinate of $m'$ using $C_{in}$ to get a vector in
		$[q]^{Nn}$. See \cref{fig:concatenated encoding}.
	\end{definition}
	
	\begin{figure}[htp]
		\centering
		\includegraphics{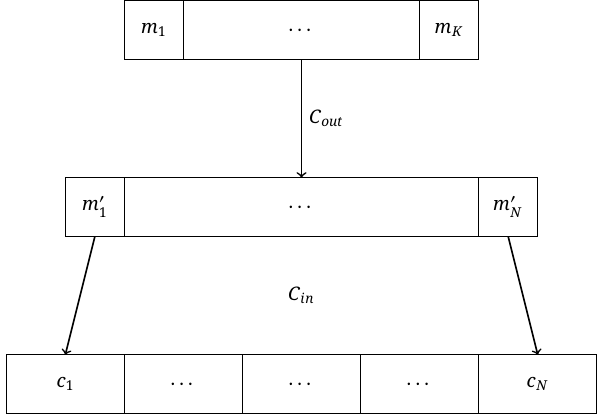}	
		\caption{Encoding of a concatenated code}
		\label{fig:concatenated encoding}
	\end{figure}
		
	As defined above, the concatenated code $C$ has minimum distance at least $Dd$. Forney designed an algorithm~\cite{Forney1965, Forney1966} to uniquely decode $C$ when the number of erroneous coordinates is less than $Dd/2$. Next, we briefly describe Forney's algorithm before discussing a slightly different analysis for it. 
	
	Let $f$ be the received word (obtained by making less than $Dd/2$
	errors), and let $c$ be the (unique) codeword of the concatenated code $C$ such that $\Del(f, c) < Dd/2$ where $\Delta$ refers to the usual Hamming distance.

	\begin{figure}[htp]
		\centering
		\includegraphics{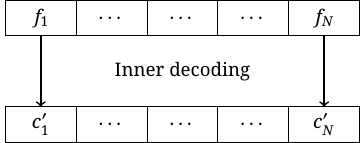}		
		\caption{Inner decoding of a concatenated code}
		\label{fig:concatenated inner decoding}
	\end{figure}	
	
	Notice that the received word $f$ consists of $N$ blocks of length $n$
	each over $[q]$, which we denote by $f_1, \dotsc, f_N$. Similarly, let
	$c_1, c_2, \ldots, c_N$ be the corresponding blocks in the nearest
	codeword $c$. The decoding algorithm starts by taking each of the
	blocks $f_i$ and using the maximum likelihood decoder for $C_{in}$ to
	find the codeword $c_i'$ of $C_{in}$ that is closest to $f_i$.
	This process is shown in \cref{fig:concatenated inner decoding}.  For some of the
	coordinates $i \in [N]$, $c_i'$ equals $c_i$ and for others
	$c_i' \neq c_i$. For each $i \in [N]$, let $m'_i$ be such that
	$C_{in}(m'_i) = c'_i$. The natural next step would be to use a decoder for
	the outer code $C_{out}$ with the input $m' = (m'_1,m_2',\ldots,m_N')$
	and hope to show that the output equals $c$ as is intended. When the
	number of errors in $f$ (i.e. $\Delta(f, c)$) is less than $Dd/4$ then
	this algorithm can indeed be shown to work and output the correct
	message $m$ corresponding to the codeword $c$.
	
	In~\cite{Forney1965, Forney1966}, Forney built upon this simple and
	natural decoder to design an efficient algorithm with error tolerance
	improved from $Dd/4$ to $Dd/2$. The key idea was to assign a weight
	\[
	w(i)\coloneqq \min\left\{\frac{\Delta(f_i, c_i')}{d/2}, 1\right\}
	\]
	to each block $i$; the intuition being that the weight $w(i)$ is an
	indicator of the number of errors in the block $i$. Thus, the higher
	the weight for a coordinate $i$, the lower is the confidence in $c_i'$
	being $c_i$.
	
	Forney showed that if $\Delta(f, c) < Dd/2$, then there is a threshold
	$\theta \in \Z_{\geq 0}$ such that the vector
	$m'' = (m''_1,m''_2,\ldots,m''_N) \in ([Q]\cup \{\bot\})^{N}$ where
	$m''_i$ equals $m_i'$ if $w(i) \leq \theta$ and is an erasure ($\bot$)
	otherwise has the property that when compared to the correct codeword
	$c$, the sum of the number of erased blocks and twice the number or
	erroneous blocks is less than $D$. Thus, given an error-erasure
	decoder for $C_{out}$ from half the minimum distance, we can recover
	the correct message $m$.\footnote{We remark that just showing the
		existence of a good threshold $\theta$ is sufficient, since without
		loss of generality, $\theta$ can be taken to be equal to one of the
		weights in the set $\{w(i) \colon i \in [N]\}$. Thus we can
		efficiently try all these $N$ values and check if the resulting
		message gives a codeword close enough to the received word $f$.}
	
	The key technical task in the correctness of above algorithm is to
	prove the existence of such a threshold $\theta$ given that
	$\Delta(f, c) < Dd/2$. Forney proves this using a convexity argument
	which shows that if all the thresholds in
	$\{ w(i) \colon i \in [N] \}$ fail, then the number of errors in $f$
	must be at least $Dd/2$. This convexity argument has a randomized
	interpretation, which in turn shows that a random threshold succeeds
	(see the manuscript by Guruswami, Rudra and Sudan
	\cite[Section~12.3]{GuruswamiRS} for this randomized interpretation).
	
	In the next section we give an alternative proof of existence of a \emph{good} threshold, thereby proving the correctness of Forney's algorithm. This alternative proof serves as the inspiration for our proof of \cref{thm:main tech result}, and as far as we understand appears to be different from the above-mentioned proofs.
	
	\subsection{An alternative analysis of Forney's GMD decoding}\label{subsec: alternate_Forney}
	We now give an alternative proof of the existence of a good threshold.

	\begin{theorem}[Forney]\label{THM:GOOD_THRESHOLD_FORNEY}
		Let $f, c, c', w, D, d$ be as defined above. If $\Delta(f,c) < Dd/2$, then there exists a $\theta \in [0,1]$ such that 
		\[
		2\left| \left\{ i \in [N] \colon w(i) \leq  \theta, c_i' \neq c_i
		\right\} \right| + \left| \left\{ i \in [N] \colon w(i) > \theta \right\} \right| < D \, .
		\]
	\end{theorem}
	We recall that the condition in the conclusion of the theorem is sufficient for the outer decoding to be done by our assumption on the outer code $C_{out}$. For instance, if the outer code is the Reed-Solomon code, then the Welch-Berlekamp decoder can be used for this. 
	
	Notice that all weights $w(i)$ lie between $0$ and $1$ and, as mentioned earlier, can be thought of as representing the uncertainty in that block. 
	Let $A$ and $B$ be defined as follows:
	\[A := \setdef{i \in [N]}{c_i = c_i'}, \qquad B := \setdef{i \in
		[N]}{c_i \neq c_i'}.\]
	That is, $A$ is the set of blocks where the inner decoding is correct, and $B$ is those where the decoding is incorrect. Further, for a threshold $\theta$, we define $A_{\theta} = A \cap \setdef{i \in [N]}{w(i) \leq \theta}$. $B_{\theta}$ is defined similarly. Notice that the number of erased blocks for a given $\theta$ 
	is $N - |A_{\theta}| - |B_{\theta}|$.
	
	We say that a threshold $\theta$ is bad if
	\[
	2\left| \left\{ i \in [N] \colon  w(i) \leq \theta, c_i' \neq c_i \right\} \right| + \left| \left\{ i \in [N] \colon  w(i) > \theta \right\} \right| \geq D \, .
	\]
	Or, in the new notation, $2| B_{\theta}| + (N - (|A_\theta| + |B_{\theta}|)) \geq D$, or equivalently,  
	\[
	\abs{B_{\theta}} \geq \abs{A_{\theta}} - (N-D) \, .
	\]
	% Note that if the above inequalities do not hold for a threshold $\theta$, then we can indeed use the error-erasure decoding of the outer code to correctly find $c$. 
	We note the above in the following observation.
	
	\begin{observation} \label{obs:undecodable}
		For a fixed $\theta \in [0,1]$, if the threshold $\theta$ is bad, then $\abs{B_{\theta}} \geq \abs{A_{\theta}} - (N-D)$.
	\end{observation}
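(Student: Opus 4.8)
The plan is to recognize that \cref{obs:undecodable} is a purely bookkeeping rearrangement of the definition of a \emph{bad} threshold, so no real argument is needed beyond carefully unfolding notation. Fix $\theta \in [0,1]$. Every block index $i \in [N]$ lies in exactly one of three classes relative to $\theta$: it is \emph{erased}, meaning $w(i) > \theta$; or it is non-erased and decoded correctly, meaning $w(i) \le \theta$ and $c_i' = c_i$, i.e.\ $i \in A_\theta$; or it is non-erased and decoded incorrectly, meaning $w(i) \le \theta$ and $c_i' \ne c_i$, i.e.\ $i \in B_\theta$. Consequently the number of erased blocks is $N - \abs{A_\theta} - \abs{B_\theta}$, and the first set appearing in the definition of badness is exactly $B_\theta$; I would adopt the convention $w(i) \le \theta$ for ``non-erased'' uniformly, which is what makes the erasure count rewrite cleanly and matches the ``new notation'' line preceding the statement.

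With these identifications, the defining inequality ``$\theta$ is bad'', namely $2\abs{B_\theta} + \bigl(N - \abs{A_\theta} - \abs{B_\theta}\bigr) \ge D$, simplifies: one copy of $\abs{B_\theta}$ cancels, giving $\abs{B_\theta} + N - \abs{A_\theta} \ge D$, and moving $N - D$ to the other side yields $\abs{B_\theta} \ge \abs{A_\theta} - (N - D)$, which is the claim. There is essentially no obstacle here; the only thing to be slightly careful about is setting up the erased/correct/incorrect partition of $[N]$ so that it is literally consistent with the definitions of $A_\theta$ and $B_\theta$ and with the strict-versus-non-strict inequalities in the badness definition. I would also flag that this innocuous inequality is precisely the hypothesis that will be assumed for \emph{every} $\theta$ in the subsequent argument, from which the desired contradiction with $\Delta(f,c) < Dd/2$ is extracted via a matching between $A$ and $B$; so while the observation itself is trivial, stating it cleanly now is what sets up the real work later.
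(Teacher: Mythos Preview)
Your proposal is correct and matches the paper's own argument essentially verbatim: the paper derives the observation by rewriting the badness condition $2|B_\theta| + (N - |A_\theta| - |B_\theta|) \ge D$ and rearranging to $|B_\theta| \ge |A_\theta| - (N-D)$, exactly as you do. Your additional remarks about the partition of $[N]$ and the role of the inequality downstream are accurate and helpful but not part of the proof proper.
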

	
	\begin{proof}[Proof of Theorem~\ref{THM:GOOD_THRESHOLD_FORNEY}]
		Suppose for contradiction that all 
		thresholds $\theta \in [0,1]$ are bad. By \autoref{obs:undecodable}, we have $\abs{B_{\theta}} \geq \abs{A_{\theta}} - (N-D)$ for every $\theta \in [0,1]$.
		
		First, we observe that the size of $A$, i.e.,  the number of
		correctly decoded blocks, must be more than $(N-D)$. Otherwise, since
		$\abs{A}+\abs{B} = N$, the number of errors $\abs{B}$ is at least
		$D$. Further, for every $i \in B$, the $i^{th}$ block in $f$ must
		have had at least $d/2$ errors for the maximum likelihood decoder for
		$C_{in}$ to have output a $c_i' \neq c_i$. But this means that
		$\Delta(f, c)$ is at least $Dd/2$ which is a
		contradiction. Therefore, we can write $\abs{A} = N-D+u$ for some
		positive integer $u$.
		
		Now, write the indices in $A$ in increasing order of their weights, i.e., according to $w$, to get a sequence $i_1, i_2, \ldots, i_{N-D+u}$. We do the same with $B$ to get a sequence $j_1, j_2, \ldots, j_{D-u}$. If many blocks have the same weight, we place them in some arbitrary order.

		We first claim that  $u\leq D-u$ and for each $k \in \{1, 2, \ldots, u\}$, $w(j_k) \leq w(i_{N-D+k})$.
		In other words, if the indices $i \in \{1,2,\ldots, N\}$ are written left to
		right in increasing order of weights $w(i)$, then for every $k \in \{1, 2, \ldots, u\}$, the index $j_k$ is to the left of the index $i_{N-D+k}$ as indicated in the picture below.
		
		\begin{figure}%[htp]
			\centering
			\includegraphics{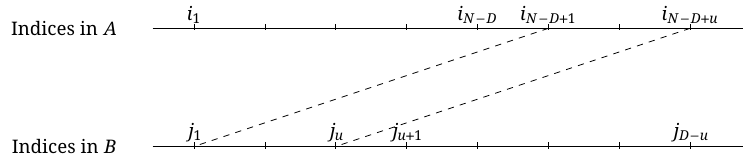}					
			\caption{Pairing of indices in the concatenated code}
			\label{fig:forney pairing}
		\end{figure}
		
		To see this, let us assume that there is a
		$k \in \{1, 2, \ldots, u\}$, such that either $w(j_k) > w(i_{N-D+k})$
		or $j_k$ does not exist. Consider a threshold $\theta$ which picks up
		everything including and to the left of $i_{N-D+k}$. Then, we have
		$\abs{A_{\theta}} \geq N-D+k$ and $\abs{B_{\theta}} \leq
		k-1$. This
		contradicts our assumption that
		$\abs{B_{\theta}} \geq \abs{A_{\theta}} - (N-D)$ for every $\theta$.
		
		Using the above claim, we can ``pair'' each of the last $u$ indices in $A$ with the first $u$ of $B$ (ordered by weight). When we pair $i_{N-D+k}$ with $j_k$, we have $w(j_k) \leq w(i_{N-D+k})$. For the incorrectly decoded index $j_k$, the distance of the block $f_{j_k}$ from $c_{j_k}$ is at least $\frac{d}{2}(2-w(j_k))$. This follows from the triangle inequality of the Hamming distance: $\Delta(f_{j_k},c_{j_k})+\Delta(f_{j_k},c'_{j_k})\geq \Delta(c_{j_k},c'_{j_k})\geq d$.
		
		For a correct index $i_{N-D +k}$, the distance between $f$ and $c$ on
		this block is by definition $\frac{d}{2}w(i_{N-D+k})$. Thus, the total distance between $f$ and $c$ on the pair $(j_k, i_{N-D+k})$ of blocks together is at least 
		\[
		\frac{d}{2}(2-w(j_k)) + \frac{d}{2}w(i_{N-D+k}) \geq d,
		\]
		where the inequality follows from the fact that $w(j_k) \leq w(i_{N-D+k})$. We now have $u$ pairs contributing a distance of at least $d$ each. In addition, there are still $D-2u$ incorrectly decoded blocks with indices in $B$, each contributing at least $d/2$ to $\Delta(f,c)$. Thus, we have
		\[ 
		\Del(f,c) \geq du + (D-2u)\frac{d}{2}  = \frac{Dd}{2} \, ,
		\]
		which is a contradiction.
	\end{proof}
	
	As mentioned earlier, Forney's original proof~\cite{Forney1965,
		Forney1966} uses a convexity argument, while more recent
	presentations of this argument reinterprets this as a randomized
	argument (erasing the block $i$ with probability $w(i)$ for each $i$)
	to show that there is a good threshold $\theta$ (see
	\cite[Section~12.3]{GuruswamiRS}). For the analysis of our bivariate decoder,
	we adapt the \emph{pairing} argument used in the proof of
	\cref{THM:GOOD_THRESHOLD_FORNEY} to analyse the weighted univariate
	multiplicity code decoder in \cref{sec:wumd}. It is unclear to us if the previous
	convexity-based proofs of  \cref{THM:GOOD_THRESHOLD_FORNEY} can be
	adapted for our application.
	
	\section{Weighted univariate multiplicity code decoder}\label{sec:wumd}
	
	In this section, we describe an algorithm (\cref{alg:weighted multiplicity code decoder}) for decoding weighted
	univariate multiplicity codes, i.e. given a received word $g\colon T
	\to \F_{<r}[z]$, and a weight function $w\colon  T\times [r]\to
	\Z_{\geq 0}$ (indicating the uncertainty in each coordinate), this algorithm finds a low-degree polynomial $R$ such
	that the encoding of $R$ as a univariate multiplicity code of order
	$r$ evaluated on $T$ is \emph{close enough} to the received word
	$h$. 
	
	The notion of \emph{close enough} here is not defined in
	terms of either the Hamming distance or the multiplicity distance, but a weighted notion of distance parameterized by the weight function $w$. This notion of distance, denoted by $\Gamma$, was previously introduced in \cref{eq:ourdistance}.
	It serves as a proxy for distance in our analysis, turns out to be
	crucial in the overall analysis of \cref{alg:weighted multiplicity code decoder} and will be discussed in detail in \cref{sec:weighted distance}. 
	
	More precisely, this decoder gets as input a function $g\colon T \to
	\F_{<r}[z]$ and a set of weights $w\colon T \times [r] \to \Z_{\geq
		0}$. Recall that we referred to this pair $(g,w)$ as a ``fractional word'' in
	the algorithm overview. It is also given as input a degree parameter
	$\ell$ and multiplicity parameter $r$ besides the global degree
	parameter $d$, the global dimension $m$ and global multiplicity parameter $s$. We have two
	sets of degree and multiplicity parameters (the local and global) as
	the intermediate algorithms will be running decoders on degree and
	multiplicity parameters different from the global ones, but they
	do need to know the global parameters. The weighted decoder
	then returns a polynomial $R \in \F_{\leq \ell}[x]$ such that
	$\Gamma_w^{d, \ell, s}(g, R) < \frac{n^2}{2}\left(s - \frac{d}{n}\right)$.
	%     if such an $R$ exists and the zero polynomial otherwise. %Here,
	%     $\Gamma$ refers to the modified distance mentioned in the
	%     algorithm overview and is defined formally in \cref{def:
		%       gamma}. This decoder is invoked in \cref{line:wumd} of the
	%     bivariate decoding algorithm in
	% \cref{alg:outer} and discussed in more detail in
	% \cref{sec:wumd}.
	
	\cref{alg:weighted multiplicity code decoder} will be used as a subroutine by the final decoder (i.e., \cref{alg:outer} in the bivariate $m=2$ case or \cref{alg:outer multivariate} in the general multivariate case). Ideally, \cref{alg:weighted multiplicity code decoder} should be oblivious of whether it is being used by the bivariate decoder or the multivariate decoder. Unfortunately, this is not the case for our \cref{alg:weighted multiplicity code decoder} and we need to feed it as input several global parameters $m, s$, and $d$\footnote{We believe this dependence on global parameters should be removed, but do not know how to do so.}. In this section we will work with the global dimension $m$ being 2 for the sake of analysis, though the algorithm is stated in terms of general $m$. The general $m$ setting will be discussed in \cref{sec: multivariate algorithm}.
	
	% \cref{alg:weighted multiplicity code decoder} takes as input the degree $\ell$ and multiplicity $r$ corresponding to the current iteration, as well as the global degree $d$, global multiplicity $s$ and global dimension $m$. The received word is a function $g\colon T
	% \to \F_{<r}[z]$, and a weight function $w\colon  T\times [r]\to
	% \Z_{\geq 0}$ is also given, 
	
	\subsection{A notion of weighted distance and its properties}\label{sec:weighted distance}
	We start with the following definition of a distance measure between the received word $g$ and the encoding of a polynomial $R$, using the weight $w$. The case for $s=2$ was discussed in \cref{sec: overview} (see \cref{eq:ourdistance}).
	\begin{definition} \label{def: gamma}
		Let $d, \ell, s \in \N$ be parameters with $d \geq \ell$ and let $T
		\subseteq \F$ be a subset of size $n$. 
		Let  $r:=s-\lfloor \frac{d-\ell}{n} \rfloor$. Let $R \in \F[x]$ be a
		univariate polynomial of degree at most $\ell$, $g\colon T \to
		\F_{<r}[z]$ and $w\colon  T \times [r] \to \Z_{\geq 0}$ be functions
		such that for every $(a, i) \in T \times [r]$ we have $w(a,i) \leq
		\frac{n}{2}\cdot \left((s-i) - \frac{d-\ell}{n}\right)$. 
		
		Then, $\Gamma_w^{s, d, \ell}(g, R)$ is defined as follows. 
		\begin{multline*}
		\Gamma_w^{s, d, \ell}(g, R) :=  \left(\sum_{i = 0}^{r-1}\sum_{a \in A_i(g,R)} \max \left\{\left(n\left((s-i) - \frac{d-\ell}{n}\right) - w(a,i)\right), \max_{j < i} w(a, j) \right\}\right)\\
        + \sum_{a\in A_r(g,R)} \max_{j < r} w(a, j)
		\end{multline*}
		where for every $i \in [r+1]$
		\[
		A_i(g,R)=\setdef{a\in T}{\max{\setdef{j\in [r+1]}{g(a)=R(a+z) \mod \langle z\rangle^{j}}}=i}. \qedhere
		\] 
		
	\end{definition}
	Observe that for $i\in [r+1]$, the set $A_i$ collects those locations in $T$ where $g$ and $R$ agree up to derivatives of order exactly $i-1$ and no further. For $i=0$, $A_0$ is the set of locations where they disagree at the $0^{th}$ derivative itself, i.e. the evaluation level, and the $\max_{j < i} w(a, j)$ term can be thought of as $- \infty$.
	
	Now, we prove some properties of $\Gamma_w^{s, d, \ell}$. These properties will turn out to be useful in proving the correctness of \cref{alg:weighted multiplicity code decoder}.

	Below we prove a triangle-like inequality for $\Gamma$ which we will use to show uniqueness of the output of \cref{alg:weighted multiplicity code decoder}. This is the point alluded to in \cref{eq:prop2}.
	
	\begin{lemma}[Triangle-like inequality for $\Gamma$]\label{lem: gamma triangle}
		Let $d, \ell,r,s \in \N$ be parameters with $\ell \leq d < n$, $r =  s -
		\lfloor \frac{d-\ell}{n} \rfloor $, and let $T \subseteq \F$ be a
		subset of size $n$. Let $Q, R \in \F[x]$ be univariate polynomials of
		degree at most $\ell$, and $g\colon T \to \F_{<r}[z]$ and $w\colon T
		\times [r] \to \Z_{\geq 0}$ be functions such that for every $(a, i)
		\in T \times [r]$, $w(a,i) \leq \frac{n}{2}\cdot \left((s-i) - \frac{d-\ell}{n}\right)$. 
		
		If $Q \neq R$, then 
		\[
		\Gamma_w^{s,d,\ell}(g, Q) + \Gamma_w^{s,d,\ell}(g, R) \geq n^2\left(s - \frac{d}n\right) \, .
		\]
		
	\end{lemma}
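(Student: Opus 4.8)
The plan is to prove the inequality \emph{pointwise} over $T$ and then sum. For $a\in T$, let $i_Q(a)$ and $i_R(a)$ be the unique indices in $[r+1]$ with $a\in A_{i_Q(a)}(g,Q)$ and $a\in A_{i_R(a)}(g,R)$ respectively — that is, the order up to which $g$ agrees with each polynomial at $a$ — and put $i^*(a):=\min\{i_Q(a),i_R(a)\}$. Writing $\gamma_a^Q$ and $\gamma_a^R$ for the contributions of $a$ to $\Gamma_w^{s,d,\ell}(g,Q)$ and $\Gamma_w^{s,d,\ell}(g,R)$, the key claim is
\[
\gamma_a^Q+\gamma_a^R \;\ge\; n\bigl(s-i^*(a)\bigr)-(d-\ell)\qquad\text{for every }a\in T.
\]
Granting this and summing over the $n$ elements of $T$ gives $\Gamma_w^{s,d,\ell}(g,Q)+\Gamma_w^{s,d,\ell}(g,R)\ge n^2 s - n\sum_{a\in T}i^*(a) - n(d-\ell)$, so it remains to bound $\sum_{a\in T}i^*(a)\le\ell$. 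For this, observe that $g(a)=Q(a+z)\mod\langle z\rangle^{i_Q(a)}$ and $g(a)=R(a+z)\mod\langle z\rangle^{i_R(a)}$ together force $(Q-R)(a+z)\equiv 0\mod\langle z\rangle^{i^*(a)}$, i.e. $\mult(Q-R,a)\ge i^*(a)$; since $Q\neq R$ and $\deg(Q-R)\le\ell$, the degree-mantra with multiplicities (the $m=1$ case of the multiplicity SZ lemma applied to $Q-R$) gives $\sum_{a\in T}\mult(Q-R,a)\le\ell$, hence $\sum_a i^*(a)\le\ell$. Substituting yields $\Gamma_w^{s,d,\ell}(g,Q)+\Gamma_w^{s,d,\ell}(g,R)\ge n^2 s - n\ell - n(d-\ell)=n^2\bigl(s-\tfrac{d}{n}\bigr)$, as desired.

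For the pointwise claim I would fix $a$, assume without loss of generality $i_Q(a)\le i_R(a)$ so that $i:=i^*(a)=i_Q(a)$, and abbreviate $\beta:=n(s-i)-(d-\ell)$. If $i=r$, then $i_Q(a)=i_R(a)=r$, both contributions equal $\max_{j<r}w(a,j)\ge 0$, and $\beta=n(s-r)-(d-\ell)=n\lfloor(d-\ell)/n\rfloor-(d-\ell)\le 0$ by the choice $r=s-\lfloor(d-\ell)/n\rfloor$, so the trivial bound $0\ge\beta$ suffices. If $i<r$, the same choice of $r$ gives $s-i\ge\lfloor(d-\ell)/n\rfloor+1>(d-\ell)/n$, so $\beta>0$ and the hypothesis $w(a,i)\le\frac n2\bigl((s-i)-\frac{d-\ell}{n}\bigr)=\beta/2$ is meaningful. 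Now $\gamma_a^Q$ is a maximum one of whose arguments is $\beta-w(a,i)$, so $\gamma_a^Q\ge\beta-w(a,i)\ge\beta/2$. If $i_R(a)>i$, then $\gamma_a^R$ dominates $\max_{j<i_R(a)}w(a,j)$ (this holds whether $i_R(a)<r$ or $i_R(a)=r$), which is $\ge w(a,i)$ since $i<i_R(a)$; hence $\gamma_a^Q+\gamma_a^R\ge(\beta-w(a,i))+w(a,i)=\beta$. If instead $i_R(a)=i$, then symmetrically $\gamma_a^R\ge\beta-w(a,i)\ge\beta/2$, so $\gamma_a^Q+\gamma_a^R\ge 2\cdot\beta/2=\beta$. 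In every case $\gamma_a^Q+\gamma_a^R\ge\beta=n(s-i^*(a))-(d-\ell)$, which is the claim.

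The only real work is the bookkeeping in this case analysis: identifying, in each subcase, which argument of the relevant $\max$ to lower-bound, and in particular covering the subcase $i_Q(a)=i_R(a)$, where both contributions can be as small as $\beta-w(a,i)$ and one genuinely needs the weight cap $w(a,i)\le\frac n2\bigl((s-i)-\frac{d-\ell}{n}\bigr)$ to close the gap — this cap plays exactly the role of the cap $w(i)\le 1$ in the Forney-style argument of \cref{sec:forney}. The boundary index $i^*(a)=r$ is the other point to watch, and it is handled precisely by the defining choice $r=s-\lfloor(d-\ell)/n\rfloor$ (which makes $\beta\le0$ there). No step is conceptually deep; everything reduces to the pointwise inequality above together with the multiplicity degree-mantra.
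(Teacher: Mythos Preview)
Your proof is correct and follows essentially the same approach as the paper's: a pointwise lower bound on the combined contribution at each $a\in T$, together with the univariate multiplicity degree-mantra to bound the sum of agreement levels by~$\ell$. The only cosmetic difference is that the paper partitions $T$ by the agreement level between $Q$ and $R$ (the sets $A_i(Q,R)$ with sizes $\tau_i$) and then observes that $\min\{i_Q(a),i_R(a)\}$ is at most that level, whereas you work directly with $i^*(a)=\min\{i_Q(a),i_R(a)\}$ throughout; your version avoids the final telescoping calculation with the $\tau_r$ term by absorbing the $i^*(a)=r$ case into the pointwise claim via $\beta\le 0$, which is a bit cleaner.
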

	\begin{proof}
		For $i\in[r+1]$, let $A_i(Q,R)$ be the set of points $a \in T$ such that
		\[
		A_i(Q,R)=\setdef{a\in T}{\max{\setdef{j\in [r+1]}{Q(a+z)=R(a+z) \mod \langle z\rangle ^{j}}}=i}.
		\]
		
		Further, for each $i \in [r+1]$, let $\tau_i = \abs{A_i(Q, R)}$. Since the sets $A_i(Q, R)$ for each $i$ are all disjoint, we have $\sum_{i=0}^r \tau_i = \abs{T} = n$. 
		
		In addition, since $Q$ and $R$ are distinct polynomials of degree at most $\ell$, using the multiplicity SZ lemma we have:
		\[
		\Delm^{(r)} (\enc^{(r)}_T(Q), \enc^{(r)}_T(R)) \geq rn-\ell.
		\]
		Using the definition of $\Delm^{(r)}$, this can be re-written as 
		\[
		\sum_{i=0}^{r-1}  \tau_i \cdot (r-i) \geq rn-\ell \, .
		\]
		That is,
		\[ r \sum_{i=0}^{r} {\tau_i} - \sum_{i=0}^{r} i \cdot {\tau_i}  \geq rn-\ell \, .
		\]
		Rearranging and using $\sum_{i=0}^{r} \tau_i = n$, we obtain 
		%\[ r(n-\tau_r) \geq rn - \ell + \sum_{i=0}^{r-1} i \tau_i, \]
		
		%which yields 
		\[ 
		% \sum_{i=0}^{r-1} i \cdot \tau_i+ r \tau_r = 
		\sum_{i=0}^{r} i \cdot \tau_i \leq \ell \, .
		\] 
		
		Now, consider an $a \in A_i(Q, R)$ for some $i\in [r]$. (The case of $a \in A_r(Q, R)$ will be explained shortly.) Hence, $Q(a+z)\neq R(a+z)\mod \langle z\rangle^{i+1}$. Therefore, both $g(a)=Q(a+z) \mod \langle z \rangle^{i+1}$ and $g(a)= R(a+z)\mod \langle z\rangle^{i+1}$ can't simultaneously hold. In other words, there is a $j \leq i$ such that $a \in A_j(g,R)$ or $a\in A_j(g,Q)$.
		
		Without loss of generality, we assume $a \in A_j(g, Q)$ with $j \leq i$. In addition, $a \in A_k(g, R)$ for some $k$. Again we can assume $j \leq k$ (otherwise, swap the roles of $Q$ and $R$).
		
		As $a \in A_j(g, Q)$, we claim that the contribution of the term corresponding to $a$ to $\Gamma_w^{s,d,\ell}(g, Q)$, which we denote by $\Gamma_w^{s,d,\ell}(g, Q)_a$ is at least $\left((s-j)n - (d-\ell) - w(a, j)\right)$, because that is one of the terms in the maximum, in the definition of $\Gamma_w^{s,d,\ell}$.
		
		As for $\Gamma_w^{s,d,\ell}(g, R)_a$, since $a$ is in $A_k(g, Q)$ with $k \geq j$, two cases can happen. If $j \neq k$, then the contribution of $a$ is at least $w(a, j)$, since that is one of the terms in the maximum. Else, if $k = j$, then the contribution is at least $(s-j)n - (d-\ell) - w(a, j)$, which, by the condition on $w(a,j)$ in \cref{def: gamma} is at least $w(a,j)$.
		
		Then, the sum of the distances at $a$ is at least 
		\[ \begin{split}
			\Gamma_w^{s,d,\ell}(g, Q)_a + \Gamma_w^{s,d,\ell}(g, R)_a & \geq (s-j)n - (d-\ell) - w(a, j) + w(a, j) \\
			& = (s-j)n - (d-\ell) \\ 
			& \geq (s-i)n - (d-\ell) \, .
		\end{split} \]
		since $j \leq i$.
		
		If $a \in A_r(Q, R)$, we cannot get an expression of this form, since its contribution to $\Gamma$ will be $\max_{j < r} w(a, j)$. Since the contribution from this set is nonnegative, and we are trying to get a lower bound, it is sufficient to consider the contributions from $A_i(Q, R)$'s for $i < r$. 
		
		By the above arguments we arrive at the following.
		\[  
		\begin{split}
			\Gamma_w^{s,d,\ell}(g, Q) + \Gamma_w^{s,d,\ell}(g, R) & \geq \sum_{i=0}^{r-1} \tau_i ((s-i)n - (d-\ell))  \\
			& = \sum_{i=0}^{r-1} \tau_i (sn - (d-\ell)) -n \sum_{i=0}^{r-1} i \tau_i   \\
			& \geq (sn-(d-\ell))(n - \tau_r) -  n(\ell  - r \tau_r) \\
			& = n(sn-d) + \tau_r (rn + (d-\ell) - sn) \\
			& \geq n^2(s-d/n) \, .
		\end{split}
		\]
		using the relations $ \sum_{i=0}^r \tau_i = n$ and $ \sum_{i=0}^r i \tau_i \leq \ell$, and the definition of $r$: since $r \geq s - \frac{d-\ell}{n}$, the coefficient of $\tau_r$ in the penultimate expression is non-negative.
	\end{proof}

	\subsection{Weighted Univariate Multiplicity Code Decoder}
	
	For the Weighted Univariate Multiplicity Code Decoder we are given a word $g\colon T \to \F_{<r}[z]$ and a weight function $w\colon T\times [r]\to \Z_{\geq 0}$ along with parameters $d,\ell$ and $s$ with $\ell \leq d$ and $r = s - \lfloor \frac{d-\ell}{n} \rfloor $. Further, for every $(a, i) \in T \times [r]$: $w(a,i) \leq \frac{n}{2}\cdot\left((s-i) - \frac{d-\ell}n\right)$.
	The algorithm returns a polynomial $R$ of degree at most $\ell$ such
	that $\Gamma_w^{d, \ell, s}(g, R) $ is smaller than
	$\frac{n^2}{2}(s-\frac{d}{n})$. Notice that by \cref{lem: gamma
		triangle} there can only be one such $R$. 
	
	Recall that higher the value of $w(a,i)$, lower is our confidence on the $(i-1)^{th}$ derivative specified by $g$ at $a$. 
	For every $a\in T$, we set  $ \omega(a) \gets \max_{i \in [r]} w(a, i)$, i.e., the maximum distrust in any derivative specified by $g$ at $a$.
	For every step-threshold $\thetabar = (\theta_0, \theta_1, \ldots,
	\theta_{r-1}) \in [sn/2]^r$ with $\theta_0 \geq \theta_1 \geq \dotsb
	\geq \theta_{r-1}$ and for every $a\in T$  we retain $g(a)$ up to
	degree $i$ (equivalently derivatives up to order $i$) such that
	$\omega(a)\leq \theta_i$. Let the retained set of $(a,i)$'s be
	$U_{\thetabar}$; we then call \cref{alg:gen mult decoder} on $g$
	restricted to $U_{\thetabar}$. Then, we check whether the polynomial
	returned by this  step satisfies $\Gamma_w^{d, \ell, s}(g, R) < \frac{n^2}{2}(s-\frac{d}{n})$ and output it if it does.\\
	
	\setcounter{AlgoLine}{0}
	\begin{algorithm}[H] 
		\caption{Weighted Univariate Multiplicity Code Decoder}\label{alg:weighted multiplicity code decoder}
		\nonl\KwIn{$T \subseteq \F, |T| = n$ \Comment*[r]{set of evaluation points}}
        \myinput{$d, s, m$ \Comment*[f]{global degree, multiplicity and dimension resp.}}
        \myinput{$\ell, r$ with $\ell\leq d$ and $r =  s - \lfloor \frac{d-\ell}{n} \rfloor $  \Comment*[f]{actual degree and multiplicity resp.}}
        \myinput{$g\colon T \to \F_{<r}[z]$ \Comment*[f]{received word}}
        \myinput{$w\colon T\times [r] \to \Z_{\geq 0}$  satisfying $w(a,i) \leq
                  \frac{n^{m-1}}2\left(s-i - \frac{d-\ell}n\right), \; \forall (a, i)$}
        \myinput{\Comment*[f]{weight
		 			function}}
		\KwOut{$R \in \F_{\leq \ell}[x]$ such that $\Gamma_w^{d, \ell, s}(g,
			R) < \frac{n^m}{2}(s-\frac{d}{n})$, if such an $R$ exists and $0$ otherwise.}
%		\nonl \hrulefill \\
		\For{$a \in T$}{
			Set $ \omega(a) \gets  \max_{i \in [r]} w(a, i)$\,;
		}
        \For{every step threshold $\thetabar = (\theta_0, \theta_1, \ldots, \theta_{r-1})
			% \in [sn/2]^r
			$ such that $\theta_0 \geq \dotsb \geq
			\theta_{r-1}$ }{
            \Comment*[r]{There are at most $\binom{n+r}{r}$ step
				thresholds and the algorithm goes over each one of them.}
			Set $U_{\thetabar}\gets \setdef{(a, i)\in T \times [r]}{\omega(a) \leq \theta_i \ , \ a \in T, \ i \in [r]}$\,; \\
			\For{$a \in T$}{
				Set $\sbar(a) \gets  \max \setdef{i \leq r-1}{\omega(a) \leq \theta_i}+1$\,;
			}
			Run Generalized Univariate Multiplicity Code Decoder (\cref{alg:gen mult decoder}) on $(T, \ell, \sbar, g \vert_{U_{\thetabar}})$ where $g\vert_{U_{\thetabar}} \colon T \to \F_{<r}[z]$ is defined as $g\vert_{U_{\thetabar}}(a) = g(a) \mod z^{s(a)}$ to obtain $R$ (if $R=\bot$ set $R \leftarrow 0$)\,; \\
			\leIf{$\Gamma_w^{d, \ell, s}(g, R) < \frac{n^m}{2}(s-\frac{d}{n})$}{
				\KwRet{$R$} \label{line:check_gamma_in_wumd}}
			{\KwRet $0$\,.}
		}
	\end{algorithm}
	
	\begin{remark}
		The number of step-thresholds is $O({\binom{n+s}{s}})$. This can be seen by considering $\theta_0, \theta_1 - \theta_0, \theta_2 - \theta_1, \dotsc, \theta_{r-1}-\theta_{r-2}, n - \theta_{r-1}$. Each of these $r+1$ quantities is a non-negative integer and their sum is $n$, so the number of solutions is ${\binom{n + r}{r}}$, and $r \leq s$. 
	\end{remark}
	
	\subsection{Proof of correctness of \texorpdfstring{\cref{alg:weighted multiplicity code decoder}}{Algorithm 2}}
	
	The analysis of this decoder is inspired by the alternative analysis of
	Forney's GMD decoding mentioned in the previous section. 
	
	Notice that the algorithm proceeds by trying every monotone threshold vector
	$\thetabar$. (We only consider monotone threshold vectors in the algorithm and following analysis.) Hence, it suffices to show that one $\thetabar$ exists
	which can be used to do the decoding. We first characterise when a
	threshold vector can be used in the call to \cref{alg:gen mult
		decoder}. We then prove that such a threshold vector exists  in \cref{LEM:THRESHOLD_EXISTS}. For this we need the following notation. 
	Suppose there is a polynomial $R \in \F_{\leq \ell}[x]$ such that
	$\Gamma_w^{d, \ell, s}(g, R) < \frac{n^2}{2}(s-\frac{d}{n})$:
	\cref{lem: gamma triangle} implies that there can be at most one such
	$R$. 
	
	Given this polynomial $R$, we partition the set $T \times [r+1] = A
	\uplus B$ as follows. Let $A$ be the set of $(a,i)$ such
	that $g(a)=R(a+z) \mod \ip{z}^{(i)}$: in other words, at the
	location $a$ all derivatives of $R$ till order $i-1$ match with $g$. And
	let $B = (T\times [r+1]) \setminus A$. Let ${\thetabar}$ be a vector of
	thresholds. 
	
	Observe that if $(a, i) \in A$, then $(a, j) \in A$ for all $j < i$. Then, since $B = (T\times [r+1]) \setminus A$, if $(b, j) \in B$, then $(b, i) \in B$ for all $i > j$. 
	
	Define $A_\thetabar := U_\thetabar \cap A$ and similarly
	$B_\thetabar := U_\thetabar \cap B = U_\thetabar \setminus A_\thetabar$.
	
	\begin{observation}\label{obs:decodable}
		If it holds that $ \abs{A_{\thetabar}} > \abs{B_{\thetabar}} + \ell $, then \cref{alg:gen mult decoder} can decode using ${\thetabar}$. We call such threshold vectors \emph{good}. 
	\end{observation}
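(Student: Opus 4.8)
The plan is to read the claim off directly from the correctness guarantee for the generalized univariate multiplicity decoder. For the threshold vector $\thetabar$ in question, \cref{alg:weighted multiplicity code decoder} feeds \cref{alg:gen mult decoder} the input $(T,\ell,\sbar,g|_{U_\thetabar})$, where for each $a$ the multiplicity $\sbar(a)$ is the number of coordinates $i\in[r]$ retained at $a$ (these form an initial segment $\{0,1,\dots\}$ of $[r]$, since $\thetabar$ is non-increasing and hence $\omega(a)\le\theta_i$ propagates to all smaller $i$), and $g|_{U_\thetabar}$ is $g$ truncated accordingly. By \cref{thm:correctness of generalized univariate multiplicity code decoder}, and using $\deg R\le\ell$, \cref{alg:gen mult decoder} outputs $R$ as soon as
\[
\Delm^{(\sbar)}\bigl(g|_{U_\thetabar},\enc^{(\sbar)}(R)\bigr)<\tfrac12\Bigl(\textstyle\sum_{a\in T}\sbar(a)-\ell\Bigr).
\]
So it suffices to show this inequality is exactly the hypothesis $\abs{A_\thetabar}>\abs{B_\thetabar}+\ell$ rewritten.

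To do this I would unwind both sides using the partition $T\times[r]=A\uplus B$ relative to the fixed polynomial $R$. For the right-hand side, $\sum_{a\in T}\sbar(a)$ is the total number of retained coordinates, i.e.\ $\abs{U_\thetabar}=\abs{A_\thetabar}+\abs{B_\thetabar}$. For the left-hand side, fix $a$ and let $u_a$ be the largest $j$ with $g(a)\equiv R(a+z)\bmod\langle z\rangle^{j}$; then among the retained coordinates at $a$, the ones in $A$ are precisely those of index $<u_a$ and the ones in $B$ are precisely those of index $\ge u_a$, while the contribution of $a$ to $\Delm^{(\sbar)}(g|_{U_\thetabar},\enc^{(\sbar)}(R))$ is $\sbar(a)-\min\{u_a,\sbar(a)\}$, which is exactly the number of retained indices $\ge u_a$. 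Summing over $a$ gives $\Delm^{(\sbar)}(g|_{U_\thetabar},\enc^{(\sbar)}(R))=\abs{B_\thetabar}$. Substituting the two identities, the decoding condition becomes $\abs{B_\thetabar}<\tfrac12(\abs{A_\thetabar}+\abs{B_\thetabar}-\ell)$, which rearranges to $\abs{A_\thetabar}>\abs{B_\thetabar}+\ell$; hence \cref{alg:gen mult decoder} outputs $R$ under the stated hypothesis.

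I do not expect a genuine mathematical obstacle here: the observation is essentially a dictionary translation from the weighted-decoder bookkeeping ($U_\thetabar,\sbar,A_\thetabar,B_\thetabar$) into the hypothesis of \cref{thm:correctness of generalized univariate multiplicity code decoder}. The only point that needs care is keeping the index conventions aligned — which coordinate $i\in[r]$ corresponds to agreement modulo $\langle z\rangle^{i}$ versus $\langle z\rangle^{i+1}$, and how the ``number of multiplicities'' passed to the univariate decoder relates to the retained set — so that the correspondence ``retained coordinate in $B$'' $\leftrightarrow$ ``the matching low-order coefficient of $g(a)-\enc^{(\sbar)}(R)(a)$ is nonzero'' holds on the nose. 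Once those off-by-one matters are pinned down, the computation above is routine.
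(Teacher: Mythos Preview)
Your proposal is correct and follows essentially the same route as the paper: invoke the correctness of \cref{alg:gen mult decoder} (\cref{thm:correctness of generalized univariate multiplicity code decoder}), identify $\sum_a \sbar(a)=\abs{U_\thetabar}=\abs{A_\thetabar}+\abs{B_\thetabar}$ and $\Delm^{(\sbar)}(g|_{U_\thetabar},\enc^{(\sbar)}(R))=\abs{B_\thetabar}$, and rearrange. The paper's argument is the same two-line computation, just stated more tersely; your added justification of the identity $\Delm^{(\sbar)}=\abs{B_\thetabar}$ via the per-point count $\sbar(a)-\min\{u_a,\sbar(a)\}$ is exactly the unpacking the paper leaves implicit.
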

	
	This is because, \cref{alg:gen mult decoder} will decode using $\thetabar$ whenever  $\Delm^{(\sbar)}(g|_{U_{\thetabar}},\enc^{(\sbar)}(R)) < \frac{1}{2} (|U_{\thetabar}| - \ell)$.
	% where $h$ is the restriction of $g$ to $U_{\thetabar}$. 
	With our notation, this means $\abs{B_{{\thetabar}}} < \frac{\abs{U_{\thetabar}}-\ell}{2}$, since $\Delm^{(\sbar)}(g|_{U_{\thetabar}},\enc^{(\sbar)}(R)) = \abs{B_{{\thetabar}}}$.
	Rearranging this and using $\abs{U_{\thetabar}} = \abs{A_{\thetabar}}+\abs{B_{\thetabar}}$ gives the above characterization.
	
	The following lemma, which we prove in the next section, shows that there is at least one good step-threshold $\thetabar$.
	
	\begin{lemma} \label{LEM:THRESHOLD_EXISTS}
		Let $g$ be the received word. If $R$ is such that $\Gamma_w^{d, \ell, s}(g, R) < \frac{n^2}{2}(s-\frac{d}{n})$, then there is a good vector of thresholds $\thetabar$ such that \cref{alg:weighted multiplicity code decoder} returns $R$ in the iteration corresponding to $\thetabar$.
	\end{lemma}
	
	Armed with the above lemma we are now ready to prove the correctness of \cref{alg:weighted multiplicity code decoder}.
	
	\begin{theorem} \label{thm: correctness of wumd}
		Let $g\colon T \to \F_{<r}[z]$ be a received word and $R$ a degree $\ell$ polynomial such that $\Gamma_w^{d, \ell, s}(g, R) < \frac{n^2}{2}(s-\frac{d}{n})$. Then, \cref{alg:weighted multiplicity code decoder} returns the polynomial $R$. Further, \cref{alg:weighted multiplicity code decoder} runs in time $(sn)^{s + O(1)}$ where $n$ is the size of the set of evaluation points, $T$. 
	\end{theorem}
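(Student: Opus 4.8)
The plan is to assemble \cref{thm: correctness of wumd} from three facts already in place: the triangle-like inequality \cref{lem: gamma triangle} (which yields uniqueness), the existence of a good step-threshold \cref{lem:threshold exists}, and the correctness guarantee for the inner decoder \cref{alg:gen mult decoder} provided by \cref{thm:correctness of generalized univariate multiplicity code decoder}, with \cref{obs:decodable} serving as the bridge between the combinatorial success condition ``$\abs{A_{\thetabar}} > \abs{B_{\thetabar}} + \ell$'' and the decoding radius of \cref{alg:gen mult decoder}.

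First I would record uniqueness: if two distinct $R, R' \in \F_{\leq \ell}[x]$ both satisfied $\Gamma_w^{d,\ell,s}(g,\cdot) < \tfrac{n^2}{2}(s - \tfrac{d}{n})$, then \cref{lem: gamma triangle} would force $\Gamma_w^{d,\ell,s}(g,R) + \Gamma_w^{d,\ell,s}(g,R') \geq n^2(s - \tfrac{d}{n})$, a contradiction; so there is at most one candidate, which we fix (its existence is the hypothesis). Next I would show the algorithm outputs it. Partition $T \times [r] = A \uplus B$ relative to $R$ as in the paragraph preceding \cref{obs:decodable}. By \cref{lem:threshold exists} there is a step-threshold $\thetabar^\ast$ with $\abs{A_{\thetabar^\ast}} > \abs{B_{\thetabar^\ast}} + \ell$. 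For this threshold, unwinding \cref{def:SZ distance} and the definitions of $A,B$ shows $\Delm^{(\sbar)}(g|_{U_{\thetabar^\ast}}, \enc^{(\sbar)}(R)) = \abs{B_{\thetabar^\ast}}$ while $\sum_a \sbar(a) = \abs{U_{\thetabar^\ast}} = \abs{A_{\thetabar^\ast}} + \abs{B_{\thetabar^\ast}}$, where $\sbar$ is the multiplicity profile the algorithm derives from $\thetabar^\ast$; hence the hypothesis of \cref{thm:correctness of generalized univariate multiplicity code decoder} holds and the call to \cref{alg:gen mult decoder} on $(T,\ell,\sbar,g|_{U_{\thetabar^\ast}})$ returns exactly $R$. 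Since $R$ meets the $\Gamma$-bound by hypothesis (and $m=2$ here, so $\tfrac{n^m}{2}(s-\tfrac dn) = \tfrac{n^2}{2}(s-\tfrac dn)$), the test in \cref{line:check_gamma_in_wumd} accepts and $R$ is returned. Finally I would rule out a wrong output: for any threshold, \cref{alg:gen mult decoder} returns either $\bot$ (reset to $0$) or some $R'' \in \F_{\leq \ell}[x]$, and if $R'' \neq R$ then \cref{lem: gamma triangle} forces $\Gamma_w^{d,\ell,s}(g,R'') > \tfrac{n^2}{2}(s-\tfrac dn)$, so \cref{line:check_gamma_in_wumd} rejects $R''$ and the algorithm passes to the next threshold. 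Thus whenever the algorithm returns a polynomial it is $R$, and by the previous step it does return $R$.

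For the running time I would count step-thresholds: these are weakly decreasing tuples $\thetabar \in [sn/2]^r$ with $r \leq s$, so there are at most $(sn)^{s}$ of them (restricting to values in $\setdef{\omega(a)}{a \in T}$ gives the sharper count $\binom{n+s}{s}$, but the crude bound suffices). For each threshold, building $U_{\thetabar}$, the profile $\sbar$, and evaluating $\Gamma_w^{d,\ell,s}(g,\cdot)$ cost $(sn)^{O(1)}$, and the call to \cref{alg:gen mult decoder} runs in time $N^{O(1)} = (sn)^{O(1)}$ since $N = \sum_a \sbar(a) \leq rn \leq sn$ (\cref{thm:correctness of generalized univariate multiplicity code decoder}). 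Multiplying gives the claimed $(sn)^{s+O(1)}$.

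The only genuinely hard input to this argument is \cref{lem:threshold exists} — the existence of a good step-threshold — whose proof (deferred to the next subsection) is the technical heart of the paper, requiring a Hall-type matching between ``correct'' and ``incorrect'' coordinate-multiplicity pairs that respects the monotonicity of the multiplicity coordinates. Given that lemma together with \cref{lem: gamma triangle}, the present theorem is a routine assembly; the one place to be careful is matching the notation of \cref{obs:decodable} to the decoding guarantee of \cref{alg:gen mult decoder}, i.e., verifying the identities $\Delm^{(\sbar)}(g|_{U_{\thetabar}}, \enc^{(\sbar)}(R)) = \abs{B_{\thetabar}}$ and $\sum_a \sbar(a) = \abs{U_{\thetabar}}$ along with the index conventions relating $i \in [r]$ to the number of retained derivatives.
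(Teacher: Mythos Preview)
Your proposal is correct and follows essentially the same approach as the paper's proof: invoke \cref{lem:threshold exists} for the existence of a good step-threshold (so \cref{alg:gen mult decoder} recovers $R$ on that threshold), use the check at \cref{line:check_gamma_in_wumd} together with the uniqueness from \cref{lem: gamma triangle} to rule out wrong outputs, and bound the running time by the number of step-thresholds times the cost of one call to \cref{alg:gen mult decoder}. You spell out a few details (the identities $\Delm^{(\sbar)}(g|_{U_{\thetabar}},\enc^{(\sbar)}(R))=|B_\thetabar|$ and $\sum_a \sbar(a)=|U_\thetabar|$, and the explicit uniqueness derivation) that the paper leaves implicit via \cref{obs:decodable}, but the structure is identical.
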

	
	\begin{proof}
		The algorithm proceeds by trying every step-threshold $\thetabar$. By \cref{LEM:THRESHOLD_EXISTS}, there is a good vector of thresholds $\thetabar$ that can be used to find $R$ in the call to \cref{alg:gen mult decoder}. Hence, the algorithm finds $R$ within the given distance if one exists. Also, the algorithm never outputs an incorrect $R$ due to the check at \cref{line:check_gamma_in_wumd} and the fact that if one such $R$ exists then it is unique by \cref{lem: gamma triangle}.
		
		The running time of the Algorithm is determined by the $O(sn)^r$ iterations of the for-loop over all possible step-thresholds. By \cref{thm:correctness of generalized univariate multiplicity code decoder}  each such iteration requires $(nr)^{O(1)}$ time. As $r\leq s$, the overall running time is $(sn)^{s+O(1)}$.
	\end{proof}
	
	\subsection{Proof of \texorpdfstring{\cref{LEM:THRESHOLD_EXISTS}}{Lemma 6.5}}
	
	The proof of \cref{LEM:THRESHOLD_EXISTS} requires a few claims and
	definitions. We begin by showing that $|A|>\ell$.
	
	\begin{claim} \label{clm: size of A}
		Let $g$ be any received word and $R$ be a degree $\ell$ polynomial with $\Gamma_w^{d, \ell, s} (g, R) < \frac{n^2}{2}(s-\frac{d}{n})$.  Let $A$ be defined as above, that is, the set of locations $(a,i)$ such
		that $g(a)=R(a+z) \mod \ip{z}^{(i)}$.  Then, $\abs{A} > \ell$. 
	\end{claim}
	
	\begin{proof}
		Suppose for contradiction that $\abs{A} \leq \ell$. We will show that the error is more than promised.
		
		Write $A$ as a disjoint union, $A = \biguplus_{i = 0}^{r} A_{\geq i}(g, R)$ where $A_{\geq i}(g,R)=A\cap (T\times \{i\})$.
		
		Informally, $A_{\geq i}(g, R)$ is the set of locations where $g$ and $R$ agree up to derivatives of order $i-1$. Let the size of $A_{\geq i}(g, R)$ be $\eta_{i-1}$, so that we have $n = \eta_0$ elements which are (trivially) ``correct'' up to the ``$(-1)^{th}$'' order derivative, $\eta_1$ up to the $0^{th}$ order derivative (i.e. evaluation level), $\eta_2$ up to the first-order derivative and so on.
		
		Note that $\sum \eta_i \leq \ell$, by our assumption. 
		Additionally, $\eta_0 \geq \eta_1 \geq \dotsb \geq \eta_{r}$ because if $(a,i) \in A$ then for all $0 \leq j<i$, $(a,j)\in A$.
		Hence the number of coordinates where $g$ and $R$ disagree at the $0^{th}$ order derivative is $n-\eta_1$ and the total contribution to $\Gamma$ from such coordinates is at least $\frac{1}{2}(n-\eta_1)(ns-(d-\ell))$. This is because the contribution from each term is at least $\frac12(ns-(d-\ell))$. To observe this, note that in \cref{def: gamma}, in the first term in the maximum, the weight being subtracted is at most $\frac{1}{2}(ns-(d-\ell))$, so the maximum never dips below this quantity. Further, there are at least  $(n-\eta_2)-(n-\eta_1)=\eta_1-\eta_2$ coordinates where $g$ and $R$ agree at the $0^{th}$ order derivative but disagree at the first order derivative: such coordinates contribute to $\Gamma$ at least $\frac{1}{2}(\eta_1-\eta_2)(n(s-1)-(d-\ell))$. In this manner, we get the total distance to be 
		
		\[  
		\begin{split}
			\Gamma_w^{d, \ell, s} (g, R) & \geq \frac{1}{2}(\eta_0-\eta_1)(ns-(d-\ell)) + \frac{1}{2}(\eta_1-\eta_2)(n(s-1)-(d-\ell)) + \\
			&  \hspace{7cm} \dotsb + \frac{1}{2}(\eta_{r-1}-\eta_{r})(n(s-r+1)-(d-\ell))\\
			& = \frac{1}{2} \sum_{i=0}^{r-1} \delta_i ( n(s-i) - (d-\ell)) 
		\end{split}
		\]
		where $\delta_i = \eta_{i} - \eta_{1+1}$. Note that each $\delta_i \geq 0$ and $\sum_{i=0}^{r-1} \delta_i 
		\leq n$ (by a telescoping sum). In addition, $\sum_{i=0}^{r-1} i \delta_i =  (\sum_{i=2}^{r-2} \eta_i)-(r-1)\eta_{r}  \leq \ell$, by assumption. 
		
		The rest of this proof is identical to the argument in the proof of \cref{lem: gamma triangle}. 
		\[  
		\begin{split}
			\Gamma_w^{s,d,\ell}(g, R) & \geq \frac{1}{2} \sum_{i=0}^{r-1} \delta_i ((s-i)n - (d-\ell))  \\
			& = \frac{1}{2} \sum_{i=0}^{r-1} \delta_i (sn - (d-\ell)) -\frac{n}{2} \sum_{i=0}^{r-1} i \delta_i   \\
			& \geq \frac{1}{2} (sn-(d-\ell))n -  \frac{n}{2}\ell \\
			& \geq \frac{1}{2} n^2  \left(s- \frac{d}{n} \right) \, .
		\end{split}
		\]
		which contradicts the assumption.
	\end{proof}
	
	Now we know $\abs{A} > \ell$, so we can write $\abs{A} = \ell+u$ for some positive $u$.
	For $a \in T$, let $i_R(a)$ be the $i$ such that $a \in A_i(g, R)$. Recall that for every $i \in [r+1]$
	\[
	A_i(g,R)=\setdef{a\in T}{\max{\setdef{j\in [r+1]}{g(a)=R(a+z) \mod \langle z\rangle^{j}}}=i}.
	\]
	Informally, $g$ and $R$ agree up to derivatives of order $i-1$ at $a$, but disagree at the $i^{th}$ order derivative.
	We construct a pairing as in our reproof of Forney's result in \cref{subsec: alternate_Forney}; however,  we will need it to be a ``good'' pairing to help in the error analysis. We provide the definition below.
	
	\begin{definition}[good pairing] \label{def: good pairing}
		Let $a_0, \dotsc, a_{k-1}, b_0, \dotsc, b_{k-1} \in T$. We say the $k$ pairs given by $(a_0,
		b_0), \dotsc, (a_{k-1}, b_{k-1})$ is a good pairing if it satisfies
		the following conditions:

		\begin{enumerate}
			\item $\abs{\set{a_i\colon \ i \in [k]}\cup \set{b_i\colon \ i \in [k]}} = 2k$. That is, they are all distinct. 
			\item $\omega(b_j) \leq \omega(a_j)$ for all $j \in [k-1]$  where as before  $ \omega(a) = \max_{i \in [r]} w(a, i)$.
			This is analogous to the weight condition in the Forney case.
			\item $i_R(b_j) < i_R(a_j)$ for all $j \in [k]$. This is analogous to one being correct (from $A$) and the other being corrupted (from $B$) - here, we need one to be correct up to more derivatives than the other. See \cref{fig:good pairing}
			\item $\sum_{j=0}^{k-1} (i_R(a_j) - i_R(b_j)) \geq u$. This is
			analogous to the number of pairs being $u$ (that is,  $\abs{A} -
			\ell$). (While there are $2k$ locations being paired with each
			other, the quantity of interest is the total difference in the
			derivatives to which $a_i$'s and $b_i$'s are correct. This has
			to be at least $u$). \qedhere
		\end{enumerate}
	\end{definition}
	
	\begin{figure}%[htp]
		\centering
		\includegraphics{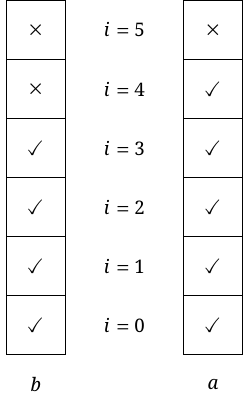}
		\caption{When $a$ and $b$ can be in a good pairing, with $\omega(b) \leq \omega(a)$}
		\label{fig:good pairing}
	\end{figure}
	
	\begin{lemma} \label{LEM:GOOD_PAIRING_EXISTS}
		If \cref{alg:weighted multiplicity code decoder} fails to find $R$ for every threshold vector $\thetabar$, then there exists a good pairing among the elements of $T$.
	\end{lemma}
	
	The proof of this lemma requires the following generalization of
	Hall's theorem (which can be found for instance in the textbook on
	Matching Theory by Lov{\'a}sz and Plummer). 
	
	\begin{theorem}[Generalisation of Hall's theorem~{\cite[Theorem~1.3.1]{LovaszP1986}}] \label{thm:gen_hall}
		In a bipartite graph $G = (\calL, \calR, E)$, if the maximum matching has size at most $m$, then there is a subset $U \subseteq \calL$ such that \[ \abs{U} - \abs{N(U)} \geq \abs{\calL} - m\;.\]
	\end{theorem}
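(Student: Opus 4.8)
The plan is to obtain this deficiency form of Hall's theorem as a quick corollary of \emph{König's theorem}: in any bipartite graph the size of a maximum matching equals the size of a minimum vertex cover. Write $\nu$ for the maximum matching size of $G$, so $\nu \le m$ by hypothesis. I would fix a minimum vertex cover $C$, of size $\nu$, and split it as $C = C_{\calL} \sqcup C_{\calR}$ with $C_{\calL} \subseteq \calL$ and $C_{\calR} \subseteq \calR$.

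The witnessing set will be $U := \calL \setminus C_{\calL}$. The key (and essentially only) step is to check $N(U) \subseteq C_{\calR}$: for $a \in U$ and an edge $ab$ with $b \in \calR$, since $a \notin C_{\calL}$ the edge is not covered at its $\calL$-endpoint, so it must be covered at $b$, giving $b \in C_{\calR}$. From this,
\[
\abs{N(U)} \le \abs{C_{\calR}} = \abs{C} - \abs{C_{\calL}} = \nu - \abs{C_{\calL}},
\]
and hence
\[
\abs{U} - \abs{N(U)} \;\ge\; \bigl(\abs{\calL} - \abs{C_{\calL}}\bigr) - \bigl(\nu - \abs{C_{\calL}}\bigr) \;=\; \abs{\calL} - \nu \;\ge\; \abs{\calL} - m,
\]
which is exactly the claimed bound, with this $U$ as the witness.

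If a self-contained argument is preferred, I would instead run the standard alternating-path analysis on a maximum matching $M$ (of size $\nu$): let $X \subseteq \calL$ be the set of $M$-unsaturated vertices (so $\abs{X} = \abs{\calL} - \nu$), let $Z$ be the set of all vertices reachable from $X$ along $M$-alternating paths, and take $U := \calL \cap Z$ (note $X \subseteq U$). Using that $M$ admits no augmenting path, one shows that every vertex of $\calR \cap Z$ is $M$-saturated, that $M$ restricts to a bijection $U \setminus X \to \calR \cap Z$, and that $N(U) = \calR \cap Z$ (an edge leaving $U$ either lies in $M$, landing in $\calR \cap Z$, or extends an alternating path, again landing in $\calR \cap Z$). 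Consequently $\abs{U} - \abs{N(U)} = \abs{X} = \abs{\calL} - \nu \ge \abs{\calL} - m$.

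I do not expect a genuine obstacle: the content is entirely the characterization $N(U) = \calR \cap Z$ (equivalently, the local structure of a König cover around $U$), which follows directly from maximality of the matching, and I would spell that one implication out carefully; everything else is bookkeeping. Since \cite{LovaszP1986} is already in the bibliography, simply quoting the result is also acceptable, but the two-line König-based derivation above keeps the argument self-contained modulo a textbook theorem.
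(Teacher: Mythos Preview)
Your proof is correct; both the K\"onig-cover derivation and the alternating-path argument are standard and valid routes to this deficiency form of Hall's theorem. The paper itself does not prove this statement at all: it simply cites it as \cite[Theorem~1.3.1]{LovaszP1986} and uses it as a black box, so your write-up actually supplies more than the paper does.
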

	
	\begin{proof}[Proof of Lemma~\ref{LEM:GOOD_PAIRING_EXISTS}]
		
		Consider the following bipartite graph with left and right partite sets $\calL= A$ and $\calR =B$, respectively.
		It will be useful to stratify $\calL=\biguplus_{i = 0}^{r} \calL_i$ where  $\calL_i= A_{\geq i}(g, R)$;  recall that $A_{\geq i}(g,R)=A\cap (T\times \{i\})$. Similarly, we stratify $\calR=\biguplus_{i = 0}^{r}\calR_i$ where $\calR_i = (T \times \{i\}) \setminus \calL_i$. From earlier, $(a,i)\in \calL_i$ implies that for all $j<i$, $(a,j)\in \calL_j$, and inversely for $\calR$.
		
		The edge set of the graph is given as (See \cref{fig:edge_structure}):
		\[E = \setdef{((a, i),(b, i))}{(a, i) \in \calL_i \ , (b, i) \in \calR_i, \  \omega(a) \geq \omega(b) \ , \ 0 \leq i \leq r}.
		\] 
		
		It will be
		useful to visualise the vertices of $\calL$ and $\calR$, which are of
		the type $(a,i)$, ordered first according to their strata, i.e., $i$,
		and then within each stratum according to the $\omega(a)$ values. See \cref{fig:edge_structure}.
		
		We first show that this graph has a matching of size $u$, using the generalization of Hall's theorem from \cref{thm:gen_hall}. We will then convert this matching into a good
		pairing using \autoref{clm: matching to good pairing}.
		
		\begin{claim}
			The graph defined above has a matching of size $u$, where $\abs{\calL} = \ell + u$.
		\end{claim}
		
		\begin{subproof}
			
			\begin{figure}%[htp]
				\centering
				\includegraphics{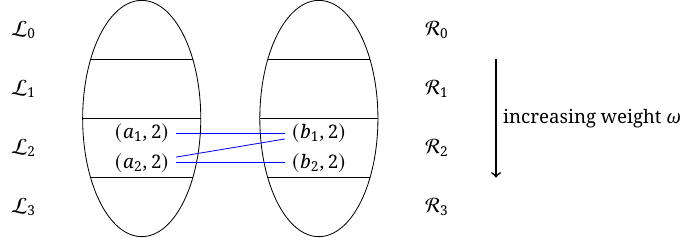}
				\caption{Edge structure of the bipartite graph in \cref{LEM:GOOD_PAIRING_EXISTS}.}
				\label{fig:edge_structure}
			\end{figure}
			
			Suppose for contradiction that the maximum matching in the graph has size $u-1$.
			Appealing to \cref{thm:gen_hall} with  $m=u-1$, we have a witness set $U \subseteq \calL$ such that $\abs{U} - \abs{N(U)} \geq \abs{\calL} - u + 1 = \ell +1 $.
			
			Let $U_i$ be the part of $U$ that lies in each $\calL_i$. For each $U_i$, observe that we may as well take the first (ordered by weight $\omega$ --- observe that this weight is independent of $i$) $\abs{U_i}$ elements in $\calL_i$, and this does not increase the neighbourhood. This is because $(a, i)$ of weight $\omega(a)$ is connected only to vertices $(b, i)$ with $\omega(b) \leq  \omega(a)$.
			Hence, we can assume each $U_i$ is a prefix of $\calL_i$ (i.e., the
			first $|U_i|$ elements of $\calL_i$). 
			
			Further, since $(a,i)\in \calL_i$ implies that for all $j<i$, $(a,j)\in \calL_j$ we can do the following transformation to $U$ without increasing the size of $N(U)$. If for $i>j$ we have $|U_i|>|U_j|$, then, we can replace $U_i$ and  $U_j$ with $U'_i = \set{(a,i)\colon (a,j) \in U_j }$ and $U'_j = \set{(a,j)\colon (a,i) \in U_i }$ respectively. This does not increase the neighbourhood. Indeed, consider some new neighbour $(b, j)$ caused by an edge $((a, j), (b, j))$. We must have $(a, j) \in U'_j \setminus U_j$. This newly included $(a, j)$ originated from some $(a, i)$ in $U_i \setminus U'_i$. Since $j < i$ and $(b, j) \in B$, we must also have $(b, i) \in B$. Then, the edge $((a, i), (b, i))$ also exists, and $(b, i)$ was previously in the neighbourhood of $U$ but no longer is. 
			
			We now have a structure on $U$ such that for every $i$, the first $\abs{U_i}$ elements of $A$, ordered by weight $\omega$, are in $U$. Further, for $i > j$, if $(a, i) \in U_i$ then $(a, j) \in U_j$. With this structure on $U$ we can now create a valid step-threshold $\thetabar$ so that $A_{\thetabar}$ contains $U$ (and exactly equals $U$ if the weights $\omega$ are all distinct) : indeed, set $\theta_i$ to be $\max \{ \omega(a) : (a, i) \in U_i \}$ (and if $U_i$ is empty, set $\theta_i$ to be $-\infty$). Further, $B_{\thetabar}=N(U)$. To see this, take $(b, i) \in \calR_i$ with $\omega(b) \leq \theta_i$. Then there would be an edge $((a, i), (b, i))$ for the $(a,i)$ with $\omega(a) = \theta_i$ and hence $(b, i) \in N(U)$. Conversely, if $(b, i) \in N(U)$, there is an edge $((a, i), (b, i))$ with $\omega(b) \leq \omega(a) \leq \theta_i$ and hence $(b, i) \in B_\thetabar$. (See \cref{fig:matching to threshold}). 
			However, as $\abs{A_\thetabar} - \abs{B_\thetabar} \geq |U|-|N(U)|\geq \ell +1$, this contradicts our assumption that there is no good step-threshold $\thetabar$. Hence, we must have a matching of size at least $u$ in the bipartite graph.			
		\end{subproof}
		
		To finish the proof of \cref{LEM:GOOD_PAIRING_EXISTS} we will need \autoref{clm: matching to good pairing} which essentially transforms this matching into a good pairing.
	\end{proof}
	
	\begin{figure}[htp]
		\centering
		\includegraphics{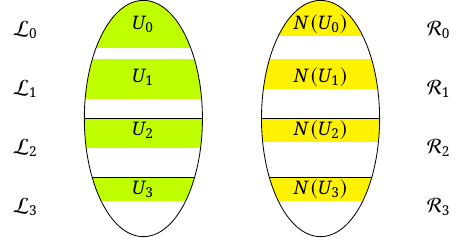}
		\caption{Extracting a threshold from the matching}
		\label{fig:matching to threshold}
	\end{figure}
	
	Next, we proceed to show how to extract a good pairing from a matching obtained in the bipartite graph defined above.
	
	\begin{claim} \label{clm: matching to good pairing}
		Given a maximum matching of size $u$ in the graph defined in \cref{LEM:GOOD_PAIRING_EXISTS}, a good pairing can be extracted from it. 
	\end{claim}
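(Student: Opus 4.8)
The idea is to reinterpret the size-$u$ matching $M$ of \cref{lem: good pairing exists} as a directed graph on $T$ and then ``collapse'' it down to vertex-disjoint pairs. For every matched edge $((a,i),(b,i)) \in M$ put a directed arc $a \to b$ on the vertex set $T$, \emph{labelled} by its stratum $i$, and call the resulting directed multigraph $H$; it has exactly $u$ arcs. Since $(a,i)\in\calL_i$ forces $i_R(a) \ge i+1$ and $(b,i)\in\calR_i$ forces $i_R(b)\le i$, every arc $a\to b$ of $H$ satisfies $\omega(b)\le\omega(a)$ and $i_R(b)\le i < i_R(a)$; in particular $i_R$ \emph{strictly} decreases along every arc, so $H$ is acyclic, and the ``gap'' of each arc obeys $i_R(a)-i_R(b) = (i_R(a)-i)+(i-i_R(b)) \ge 1$, whence $\sum_{\text{arcs of }H}\bigl(i_R(\mathrm{tail})-i_R(\mathrm{head})\bigr)\ge u$. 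Finally, the arcs incident to a fixed vertex of $T$ on a fixed side carry pairwise distinct labels, because the corresponding left- (resp.\ right-) vertices of the bipartite graph are distinct and lie in different strata.

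The second step records the two facts that drive the collapse. First, concatenation along a directed path $v_0\to v_1\to\dots\to v_t$ of $H$ yields a \emph{legal} pair $(v_0,v_t)$ in the sense of \cref{def: good pairing}: $\omega$ is non-increasing and $i_R$ strictly decreasing along the path, so $\omega(v_t)\le\omega(v_0)$ and $i_R(v_t)<i_R(v_0)$, and the gap $i_R(v_0)-i_R(v_t)$ telescopes into the sum of the $t$ arc-gaps, hence is $\ge t$. Second, absorption at a branch point: if a vertex $v$ is the tail of $p$ arcs $v\to c_1,\dots,v\to c_p$ with distinct labels $j_1<\dots<j_p$ (all $<i_R(v)$), then the single pair $(v,c_t)$ for the arc with the \emph{smallest} label $j_1$ already has gap $i_R(v)-i_R(c_t)\ge (j_p+1)-j_1\ge p$ (using $i_R(c_t)\le j_1$ and $j_p\ge j_1+p-1$), so it alone ``pays for'' all $p$ arcs it replaces; symmetrically, if $v$ is the head of $q$ arcs with distinct labels, the arc with the \emph{largest} label gives a legal pair of gap $\ge q$. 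The plan is therefore to run a contraction process on $H$: repeatedly absorb branchings and splice paths, each time producing/keeping a family of legal pairs carrying positive integer weights, with each pair's gap at least its weight and the total weight staying equal to $u$, until no two surviving pairs share a vertex of $T$.

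The third step — and the main obstacle — is to arrange this contraction so that it genuinely terminates with \emph{vertex-disjoint} legal pairs while the weight-budget $u$ survives intact. The clean way is to do all absorptions first, so that afterwards every vertex of $T$ is the tail of at most one surviving pair and the head of at most one surviving pair, and only then perform splices (each splice $a\to v$, $v\to c$ $\rightsquigarrow$ $(a,c)$ deletes the interior vertex $v$, strictly decreases the number of pairs, and — using acyclicity — cannot recreate an absorbable branch point or leave some other vertex with two pairs on the same side). The delicate points to verify are exactly that the two phases do not interfere in this way and that absorptions at different vertices are consistent (no arc is ``claimed'' by absorptions on both of its endpoints): this is where acyclicity of $H$ and careful ordering of the process are used, and it is the part of the argument that needs the most care.

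Once a vertex-disjoint family $(a_0,b_0),\dots,(a_{k-1},b_{k-1})$ of legal pairs with weights summing to $u$ is obtained, it is a good pairing: the $2k$ elements of $T$ are distinct since the pairs are vertex-disjoint (condition~1); $\omega(b_j)\le\omega(a_j)$ and $i_R(b_j)<i_R(a_j)$ for every $j$ by legality (conditions~2 and~3); and $\sum_{j}(i_R(a_j)-i_R(b_j)) \ge \sum_j (\text{weight of the }j\text{-th pair}) = u$, which is condition~4 of \cref{def: good pairing}. This extracts the desired good pairing from the matching and completes the proof.
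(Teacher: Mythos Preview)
Your directed-multigraph reformulation is correct and elegant: the arcs of $H$ satisfy $i_R(\text{tail})>i_R(\text{head})$ and $\omega(\text{tail})\ge\omega(\text{head})$, out-labels and in-labels at any vertex are distinct, and the total arc-gap is at least $u$. The splice step is also fine: once every vertex has in- and out-degree at most one, the surviving graph is a disjoint union of directed paths (by acyclicity), and telescoping gives the desired bound.

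The gap is exactly where you flag it, and it is a real one: your two absorption rules do not compose. Concretely, take $T=\{a_1,a_2,d,e_1,e_2\}$ with $i_R(a_1)=2$, $i_R(a_2)=3$, $i_R(d)=0$, $i_R(e_1)=1$, $i_R(e_2)=2$, and a matching giving the four arcs $a_1\to d$ (label $0$), $a_1\to e_1$ (label $1$), $a_2\to d$ (label $1$), $a_2\to e_2$ (label $2$), so $u=4$. Tail absorption (keep the minimum label) at $a_1$ and $a_2$ keeps $a_1\to d$ with weight $2$ and $a_2\to d$ with weight $2$; now head absorption at $d$ must produce a single arc of gap $\ge 4$, but the two available gaps are $i_R(a_1)-i_R(d)=2$ and $i_R(a_2)-i_R(d)=3$. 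No choice works, even though a good pairing does exist here, namely $(a_1,e_1)$ and $(a_2,d)$ with total gap $1+3=4$. The point is that the ``right'' absorption at $a_1$ depends on what happens at $d$, so a two-phase rule with fixed local choices cannot succeed; acyclicity and ordering alone do not rescue this, and you have not supplied the global argument that you yourself say ``needs the most care''.

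The paper takes a different route: rather than collapsing a derived graph on $T$, it stays inside the bipartite matching and iteratively \emph{repairs} it by edge swaps, processing ``bad'' pairs $(a,b)$ in increasing order of $(\omega(b),\omega(a))$ and doing a short case analysis at each level $j$ where $(a,j)$ and $(b,j)$ fail to be matched. The end state is a matching in which, for every $a$, all matched copies of $a$ go to a single partner $b$; the pairs can then be read off directly. Your viewpoint is arguably cleaner conceptually and might lead to a proof, but as written the crucial collapse step is not established.
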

	
	\begin{proof}
		The given matching $M$ consists of $u$ edges of the form $((a, i), (b, i))$ with $\omega(b) \leq \omega(a)$. We want to extract a good pairing $(a_0, b_0), \dotsc, (a_{k-1}, b_{k-1})$.
		
		Recall the conditions from \cref{def: good pairing}. For Condition 1, we need to be able to read off the $k$ distinct pairs. We will add $(a, b)$ to the pairing if we have the edges $((a, j), (b, j))$ for every $i_R(b) < j \leq i_R(a)$ in $M$. This will also take care of Condition 4 since we are given exactly $u$ edges in the matching $M$. Conditions 2 and 3 are guaranteed by the edge structure.
		
		If $M$ consists of exactly the edges of this type, we are done. But, $M$ is an arbitrary matching and we have no guarantees on its structure. We will now see how to extract a pairing from it regardless, by making some modifications to $M$ while maintaining its size.
		
		We will now use a visualization as in \cref{fig:good pairing}, that is, every $a \in T$ has a tower of height $s$. If the matching $M$ contains an edge $(a, i), (b, i)$, we will think of the $a$ and $b$ towers being connected by an edge at level $i$.  
		
		We need, for every $a$ and $b$ in $T$, the matching $M$ to either match $(a, j)$ to $(b, j)$ for every $j$ between $i_R(b)$ and $i_R(a)$, or for none of them. (In the first case, $(a, b)$ is in the pairing; otherwise, it is not.)  The problem case is if they are matched on some levels but not on others --- in this case, we will call $a$ and $b$ a \emph{bad} pair of blocks. We will \emph{correct} this pair by either adding the missing edges or disconnecting $a$ and $b$ completely.
		
		Arrange the elements of $T$ in increasing order of the weight $\omega$. We will process every pair in co-lexicographic order, since for $(a, b)$ to be in a good pairing, we need $\omega(a) \geq \omega(b)$. Hence, when we are trying to correct a pair $(a, b)$, we can assume there are no bad pairs $(a', b')$ with $\omega(b') < \omega(b)$, or $\omega(b') = \omega(b)$ and $\omega(a') < \omega(a)$.
		
		Now, say we are at the stage of processing a bad pair $(a, b)$, and say they are matched at some other levels, but not the level $j$, that is, the edge $((a, j), (b, j))$ is not in the matching $M$. 
		
		$(a, j)$ could be unmatched in $M$, or it could be matched to some $(b', j)$. Notice that because of the order in which we are proceeding, we cannot have $\omega(b') < \omega(b)$. Otherwise, $(a, b')$ form a bad pair which we would have encountered earlier. Also, by definition, we cannot have $\omega(b') > \omega(a)$. Hence we must have $\omega(b) \leq \omega(b') \leq \omega(a)$.
		
		As for $(b, j)$, it could be unmatched in $M$, or matched to some other $(a', j)$. Again, if $\omega(a') < \omega(a)$, there is a bad pair $(a', b)$ which would have been encountered before $(a, b)$, so we can eliminate this case. So we must have $\omega(a) \leq \omega(a')$.
		
		We will now go through the possibilities one by one.
		
		\textit{Case 1:} $(a, j)$ and $(b, j)$ are both unmatched in $M$. (shown in \cref{fig:bad case 1})
		
		Then they can be matched with each other, which increases the size of the matching which is a contradiction since $M$ was a maximum matching.
		
		\begin{figure}[htp]
			\centering
			%%%%%%%%%%%%%CASE 1: both unmatched		
			\includegraphics{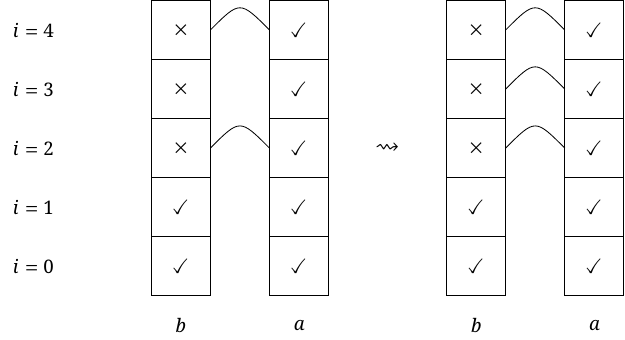}			
			\caption{Case 1: $(a, 3)$ and $(b, 3)$ both unmatched}
			\label{fig:bad case 1}
		\end{figure}
		
		\emph{Case 2:} $(b, j)$ is matched with some $(a', j)$ with $\omega(a') \geq \omega(a)$. (shown in \cref{fig:bad case 2})
		
		Then, we can remove that edge from $M$ and add the edge $((b, j), (a, j))$ to the matching. (If $(a, j)$ is already matched to $(b', j)$ in $M$, we remove it and add an edge between $(b', j)$ and $(a', j)$.) This may create a new bad pair of higher weight, which can be dealt with subsequently.
		
		\begin{figure}[htp]
			\centering
			%%%%%%%%%% CASE 2: b matched with c > a
			\includegraphics{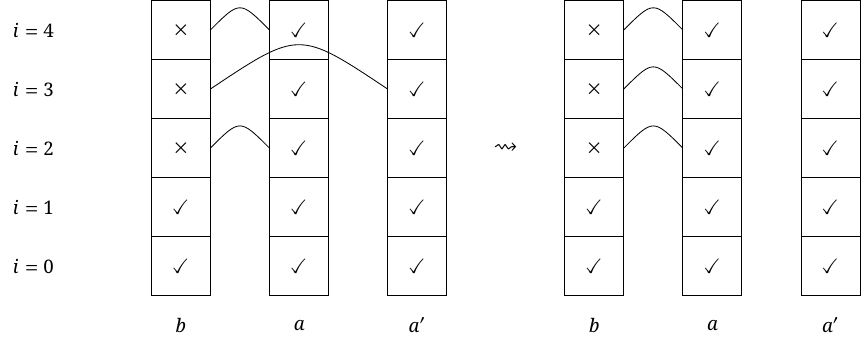}			
			\caption{Case 2: $(b, 3)$ is matched with $(c, 3)$ instead of $(a, 3)$}
			\label{fig:bad case 2}
		\end{figure}
		
		%\emph{Case 3:} $(b, j)$ is matched with some $(c, j)$ with $\omega(c) < \omega(a)$. (shown in \cref{fig:bad case 3})
		
		%In this case, $(c, b)$ is also a bad pair of blocks, which would have been processed first since $\omega(c) < \omega(a)$. Hence this case is not possible. 
		
		% \begin{figure}[htp]
			% \centering
			% \import{./figures/}{bad_case_3}
			% \caption{Case 3: $(b, 3)$ is matched with $(c, 3)$ instead of $(a, 3)$}
			% \label{fig:bad case 3}
			% \end{figure}
		
		\emph{Case 3:} $(b, j)$ is unmatched in $M$, and $(a, j)$ is matched to $(b', j)$. (shown in \cref{fig:bad case 3})
		
		\begin{figure}[htp]
			\centering
			%%%%%%%%%%%%%%% CASE 4: b unmatched, a matched to smth else
			\includegraphics{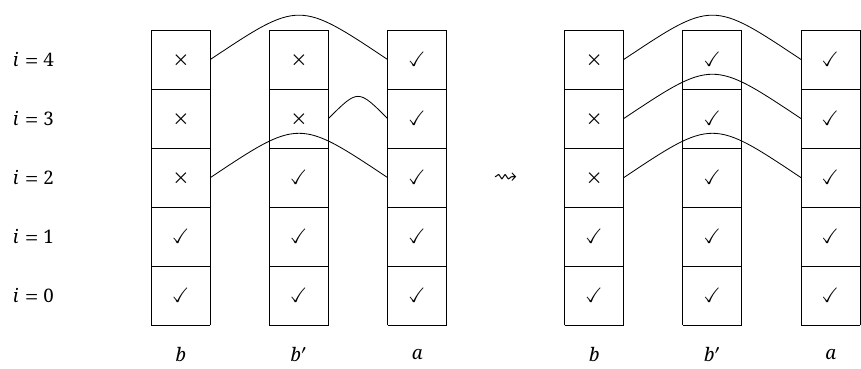}							
			\caption{Case 3: $(a, 3)$ is matched with $(b', 3)$ instead of $(b, 3)$}
			\label{fig:bad case 3}
		\end{figure}
		Then the edge between $(a, j)$ and $(b', j)$ can be removed and the edge between $(a,j)$ and $(b, j)$ added to $M$. Again, any new bad pairs created will be dealt with subsequently.
		
		This process will terminate since $T$ is finite. Then, $M$ will be of the form we saw above, and the good pairing can be read off from it.
	\end{proof}
	
	Now, we show that if there is a good pairing in the elements of $T$, then, $\Gamma_w^{d, \ell, s}(g, R) \geq \frac{n^2}{2}(s-\frac{d}{n})$. This is shown by transforming the received word $g$ and the weight function $w$ into a different word $g'$ and weight function $w'$, by changing the values of $g$ and $w$ at the locations where the good pairing exists. In doing so, we ensure $\Gamma(g,R)$ is non-increasing.
	We then reach a scenario where $|A|\leq \ell$ for this new word $g'$. Then, by \autoref{clm: size of A} we get that  $\Gamma_w^{d, \ell, s}(g, R) \geq \Gamma_{w'}^{d, \ell, s}(g', R)\geq \frac{n^2}{2}(s-\frac{d}{n})$.  
	
	\begin{claim} \label{clm: error decreases with replacement}
		Assume we have a good pairing of size $k$, $(a_0, b_0), \dotsc,
		(a_{k-1}, b_{k-1})$, in the received word $g$. Construct a new
		received word $g'$ as follows: for each $j$ in $[k]$ (recall that
		$i_R(a_j) > i_R(b_j)$) in the original received word, corrupt the
		values of $\set{g(a_j,i)\colon i\in \set{i_R(b_j)+1,\ldots,i_R(a_j)}
		}$ such that $i_R(a_j) = i_R(b_j)$.  Further, for all $i \in [r]$ let
		\[w'(a_j,i)=w'(b_j,i)=\min{\set{ \frac{n}2\cdot \left((s-i_R(b_j)) - \frac{d-\ell}n\right),\frac{n}2\cdot\left((s-i) - \frac{d-\ell}n\right) }}.\] 
		At all other locations which are not part of the pairing, $g'$ and $w'$ are exactly the same as $g$ and $w$.
		See \cref{fig:pairing transform}. We note $w'$ is a valid weight function.% (Also, any location not part of the pairing is left as it is.) 
		Then, the error as per our measure actually $\emph{does not increase}$, that is $\Gamma_{w'}^{d, \ell, s} (g', R) \leq \Gamma_{w}^{d, \ell, s} (g, R)$. Furthermore, $\Gamma_{w'}^{d, \ell, s}(g', R)\geq \frac{n^2}{2}(s-\frac{d}{n})$.  
	\end{claim}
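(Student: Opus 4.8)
The plan is to exploit that $\Gamma_w^{d,\ell,s}(\cdot,R)$ is additive over the locations of $T$: writing $\gamma_a(g,w)$ for the summand attached to $a\in T$ in the definition of $\Gamma$ (see \cref{def: gamma}; each $a$ lies in exactly one of the sets $A_i(g,R)$, so this summand is well defined), we have $\Gamma_w^{d,\ell,s}(g,R)=\sum_{a\in T}\gamma_a(g,w)$, and each $\gamma_a$ depends only on $g(a)$, on $w(a,\cdot)$ and on $R$. By condition~1 of \cref{def: good pairing} the $2k$ locations $a_0,b_0,\dots,a_{k-1},b_{k-1}$ are pairwise distinct, and $(g',w')$ agrees with $(g,w)$ outside them, so $\Gamma_{w'}^{d,\ell,s}(g',R)-\Gamma_{w}^{d,\ell,s}(g,R)=\sum_{j=0}^{k-1}\big[(\gamma_{a_j}(g',w')+\gamma_{b_j}(g',w'))-(\gamma_{a_j}(g,w)+\gamma_{b_j}(g,w))\big]$, and it suffices to show each bracket is $\le 0$.

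Fix a pair, abbreviate $\rho_i:=n(s-i)-(d-\ell)$, so that $\rho_0>\rho_1>\dots>\rho_{r-1}>0$ (the last inequality uses $r=s-\lfloor(d-\ell)/n\rfloor$), and put $i_0:=i_R(b_j)$ and $i^{*}:=i_R(a_j)$, with $i_0<i^{*}\le r$ by condition~3. After the corruption both $a_j$ and $b_j$ lie in $A_{i_0}(g',R)$, and since $w'(a_j,i)=w'(b_j,i)=\tfrac12\min\{\rho_{i_0},\rho_i\}=\tfrac12\rho_{\max(i_0,i)}$, plugging into \cref{def: gamma} gives $\gamma_{a_j}(g',w')=\gamma_{b_j}(g',w')=\tfrac12\rho_{i_0}$, so the new combined contribution of the pair is exactly $\rho_{i_0}$. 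I also record for later use that $w'(a,i)\le\tfrac12\rho_i$ everywhere (so $w'$ still satisfies the weight bound of \cref{def: gamma}). On the old side, $\rho_{i_0}-w(b_j,i_0)$ is one of the two arguments of the maximum defining $\gamma_{b_j}(g,w)$ (valid since $i_0<r$), hence $\gamma_{b_j}(g,w)\ge\rho_{i_0}-w(b_j,i_0)$. If one can also show $\gamma_{a_j}(g,w)\ge\omega(a_j)$, where $\omega(a_j)=\max_{i\in[r]}w(a_j,i)$, then condition~2 of the good pairing gives $\gamma_{a_j}(g,w)\ge\omega(a_j)\ge\omega(b_j)\ge w(b_j,i_0)$, and adding the two lower bounds yields $\gamma_{a_j}(g,w)+\gamma_{b_j}(g,w)\ge\rho_{i_0}=\gamma_{a_j}(g',w')+\gamma_{b_j}(g',w')$, which is the per-pair inequality.

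The estimate $\gamma_{a_j}(g,w)\ge\omega(a_j)$ is the main obstacle, and I would prove it by a short case analysis on where the maximum defining $\omega(a_j)$ is attained. Let $l^{*}\in[r]$ satisfy $w(a_j,l^{*})=\omega(a_j)$. If $i^{*}=r$ then $\gamma_{a_j}(g,w)=\max_{l<r}w(a_j,l)=\omega(a_j)$. If $i^{*}<r$ and $l^{*}<i^{*}$ then $\gamma_{a_j}(g,w)\ge\max_{l<i^{*}}w(a_j,l)\ge w(a_j,l^{*})=\omega(a_j)$. If $i^{*}<r$ and $l^{*}\ge i^{*}$ then, using the cap $w(a_j,i)\le\tfrac12\rho_i$ from \cref{def: gamma} together with the monotonicity of $\rho$, $\omega(a_j)=w(a_j,l^{*})\le\tfrac12\rho_{l^{*}}\le\tfrac12\rho_{i^{*}}\le\rho_{i^{*}}-w(a_j,i^{*})\le\gamma_{a_j}(g,w)$. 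Summing the per-pair inequalities over $j$ and using additivity establishes the first assertion, $\Gamma_{w'}^{d,\ell,s}(g',R)\le\Gamma_{w}^{d,\ell,s}(g,R)$.

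For the second assertion I would argue that $g'$ has a small agreement set and then invoke \cref{clm: size of A}. For any received word, $(a,i)\in A$ iff $i<i_R(a)$, so $|A|=\sum_{a\in T}i_R(a)$; since we are in the case $\Gamma_{w}^{d,\ell,s}(g,R)<\tfrac{n^2}{2}(s-\tfrac dn)$, \cref{clm: size of A} lets us write $|A|=\ell+u$ with $u\ge 1$, the same $u$ as in condition~4 of \cref{def: good pairing}. The corruption lowers $i_R(a_j)$ to $i_R(b_j)$ for every $j$ and leaves every other $i_R$-value unchanged, so the agreement set $A'$ of $g'$ has $|A'|=|A|-\sum_{j}(i_R(a_j)-i_R(b_j))\le(\ell+u)-u=\ell$ by condition~4. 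Applying the contrapositive of \cref{clm: size of A} to $g'$, $R$ and $w'$ — legitimate because $w'$ meets the weight bound — gives $\Gamma_{w'}^{d,\ell,s}(g',R)\ge\tfrac{n^2}{2}(s-\tfrac dn)$; and a one-line check from \cref{def: gamma} shows $\Gamma_{w'}^{d,\ell,s}(g,R)=\Gamma_{w'}^{d,\ell,s}(g',R)$, since with the weights $w'$ each pair location contributes $\tfrac12\rho_{i_0}$ regardless of whether $i_R(a_j)$ equals $i^{*}$ or $i_0$. Combined with the first assertion this yields the chain $\Gamma_{w}^{d,\ell,s}(g,R)\ge\Gamma_{w'}^{d,\ell,s}(g',R)\ge\tfrac{n^2}{2}(s-\tfrac dn)$ that contradicts the standing hypothesis of \cref{lem:threshold exists}, completing its proof.
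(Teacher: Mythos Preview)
Your proposal is correct and follows essentially the same approach as the paper: a per-pair comparison showing the old contribution is at least $\rho_{i_0}$ and the new contribution is exactly $\rho_{i_0}$, followed by an appeal to \cref{clm: size of A} for $(g',w')$. In fact you are more careful than the paper: the paper simply asserts that the original contribution from $a_j$ is at least $\omega(a_j)$, whereas your case analysis on the location of $l^{*}$ relative to $i^{*}$ actually proves this (and is needed, since the max in \cref{def: gamma} only runs over $j<i^{*}$); you also explicitly verify that $\Gamma_{w'}(g,R)=\Gamma_{w'}(g',R)$, which reconciles the statement's use of $g$ with the proof's use of $g'$.
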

	
	\begin{cfigure}%[htp]
		\centering
		\includegraphics{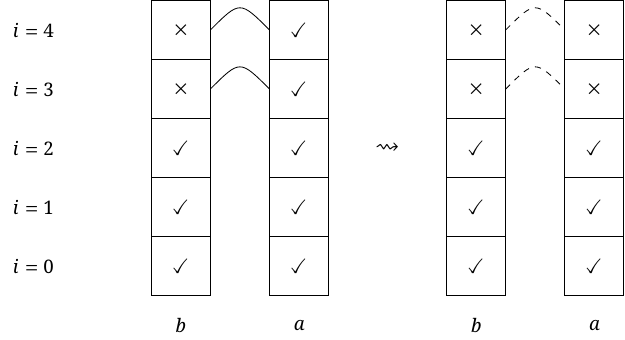}							
		\caption{Construction of the new received word}
		\label{fig:pairing transform}
	\end{cfigure}
	
	\begin{proof}
		Let $(a, b)$ be one of the pairs in the pairing. 
		We count the contribution of $a$ and $b$ to $\Gamma$ before and after the transformation is applied.
		
		Suppose $i_R(a) = i_a$ (originally) and $i_R(b) = i_b$. Since it is a good pairing, $i_a > i_b$ and $\omega(b) \leq \omega (a)$. 
		
		Originally, their contribution to $\Gamma_w^{d, \ell, s} (g, R)$ is at least \[\omega(a) + ((s-i_b)n - (d - \ell) - w(b, i_b)).\]
		
		The reason why the contribution from $a$ is at least $\omega(a)$ is as follows. Let $\omega(a) = w(a, i^*)$. That is, let $i^* = \argmax_{i \in [r]} {w(a, i)}$. If $i^* < i_a$, then $\max_{j<i_a} w(a, i) = \omega(a)$. On the other hand, if $i^* \geq i_a$, we have that $w(a, i^*) \leq \frac{1}{2}(n(s-i^*)-(d-\ell))$, as well as  $w(a, i_a) \leq \frac{1}{2}(n(s-i_a)-(d-\ell))$. Then the contribution to $\Gamma$ is at least $ n(s-i_a)-(d-\ell) - w(a, i_a) \geq n(s-i^*)-(d-\ell) - w(a, i^*) \geq \frac{1}{2}( n(s-i^*)-(d-\ell)) \geq w(a, i^*) = \omega(a)$.
		
		Now, since $w(b, i_b) \leq \omega(b) \leq \omega(a)$, the total contribution is at least $(s-i_b)n - (d - \ell)$.
		
		Now, we apply the corruption to $i_a - i_b$ levels of $a$ and change the weight function $w$ at $a$ and $b$.
		% Note that we ``move'' those entries so they have just enough distance to be wrongly decoded. 
		Then, to $\Gamma_{w'}^{d, \ell, s} (g', R)$, $a$ and $b$ each contribute $\frac{1}{2}((s-i_b)n - (d-\ell))$, giving a total of $((s-i_b)n - (d-\ell))$, which is the lower bound for the previous expression.
		
		Notice, that after we have applied this transformation to all the pairs present in the good pairing we are left with a word $g'$ such that $|A|\leq \ell$. This is because by Condition $4$ of \cref{def: good pairing} we have $\sum_{j=0}^{k-1} (i_R(a_j) - i_R(b_j)) \geq u$: hence, we have corrupted at least $u$ locations $(a,i)$ of agreement between $g$ and $r$ to obtain $g'$. However, now $g'$ and $w'$ satisfy the hypothesis of \autoref{clm: size of A}.
		Putting everything together we get that  $\Gamma_w^{d, \ell, s}(g, R) \geq \Gamma_{w'}^{d, \ell, s}(g, R)\geq \frac{n^2}{2}(s-\frac{d}{n})$.    
	\end{proof}
	
	We can finally complete the proof of \cref{LEM:THRESHOLD_EXISTS}.
	
	\begin{proof}[Proof of Lemma~\ref{LEM:THRESHOLD_EXISTS}]
		We prove this by contradiction, by assuming that no threshold $\thetabar$ is good. By \autoref{obs:decodable}, this means we assume $\abs{A_{\thetabar}} \leq \abs{B_{\thetabar}} + \ell$ for all $\thetabar$.

		\autoref{clm: size of A} shows that $A$ must have size at least $\ell$, or the error is more than promised. Hence, we can write $\abs{A} = \ell + u$ for some positive $u$. 
		
		As in our reproof of Forney's result in \cref{subsec: alternate_Forney}, we construct a pairing. Here, the size of the pairing is $u$ - measured not as the number of locations $a\in T$ participating in the pairing  but the total difference in the multiplicity levels summed over each pair. We add some conditions for it to be a \emph{good} pairing in \autoref{def: good pairing}: \autoref{LEM:GOOD_PAIRING_EXISTS} shows that such a pairing exists. 
		Given such a \emph{good} pairing \autoref{clm: error decreases with replacement} shows that  $\Gamma_w^{d, \ell, s}(g, R) \geq \frac{n^2}{2}(s-\frac{d}{n})$, a contradiction.
	\end{proof}

	\section{Bivariate multiplicity code decoder}
	\label{sec:bivariate}
	
	In this section, we describe our decoding
	algorithm for bivariate multiplicity codes. It relies on
	algorithms for decoding two variants of a decoder for univariate
	multiplicity codes; the first where the multiplicity parameters varies
	with the evaluation point (\cref{alg:gen mult decoder}) and the second being a decoder for a
	weighted version of univariate multiplicity codes (\cref{alg:weighted multiplicity code decoder}). 
	
	\subsection{Description of the bivariate decoder}\label{sec:bivariate_decoder}
	
	We start with an informal description of this algorithm, along the
	lines suggested in the overview \cref{sec: overview}. 
	
	The bivariate decoder takes as input sets $T_1, T_2 \subseteq \F$ of size
	$n$, degree and multiplicity parameters $d$ and $s$, and the received
	word $f \colon T_1 \times T_2 \to \mathbb{F}_{<s}[z_1, z_2]$. The
	decoder outputs a polynomial $P \in \F_{\leq d}[x_1,x_2]$ such that
	$\Delm^{(s)}(f,\enc^{(s)}(P)) < \frac{1}{2} n^2 (s-\frac{d}{n})$ if one
	exists. It will be convenient to write the polynomial $P$ in the
	following form.
	\[P(x_1,x_2) = \sum_{\ell \in [d+1]} P_{\ell}(x_1)x_2^{d-\ell} \in \mathbb{F}_{\leq d}[x_1, x_2]\;.\]
	The decoder proceeds in $d+1$ iterations numbered $0$ to $d$ where in
	the $\ell^{th}$ iteration, the decoder recovers the univariate
	polynomial $P_\ell(x_1) \in \F_{\leq \ell}[x_1]$. Having obtained the
	polynomials $P_0,P_1,\ldots, P_{\ell-1}$ correctly in the previous
	iterations, the decoder in the $\ell^{th}$ iteration peels away these
	polynomials from the received word $f$ to obtain the word $f_\ell$
	defined as \[f_\ell :=   f - \enc^{(s)}\left(\sum_{i \in [\ell]} P_i(x_1)
	x_2^{d-i}\right). \] The appropriate choice of the multiplicity parameter
	for the $\ell^{th}$ iteration is $r = s - \lfloor \frac{d-\ell}{n}
	\rfloor$ (see \cref{rem:multr}). For each $a \in T_1$, the decoder
	unravels the received word $f_\ell|_{x_1 = a}$ along the column $x_1 = a$ into
	$r$ parts $\left\{f_{\ell}^{(i, a)} \right\}_{i \in [r]}$ in
	\cref{line:fellparts}. It then runs the univariate decoder on each
	of these parts to obtain the polynomials $G^{(i,a)}_\ell(x_2)
	\in \F_{d-\ell}[x_2]$ respectively for each $i \in [r]$. It
	then constructs the polynomial $g_\ell$ from the leading
	coefficients of these polynomials (\cref{line:defn_g_ell}) and the
	corresponding weights $w_\ell(a,i)$ which indicate how
	close the polynomial $G^{(i,a)}_\ell(x_2)$ is to the word
	$f_\ell^{(i,a)}$ (\cref{line:defn_w_ell}). Finally it
	extracts the polynomial $P_\ell$ from the pair $(g_\ell,
	w_\ell)$ using the Weighted Univariate Multiplicity Code Decoder.\\
	
	\setcounter{AlgoLine}{0}
	\begin{algorithm}%[H]
		\caption{Bivariate Multiplicity Code Decoder}
		\label{alg:outer}
		\nonl 	\KwIn{$T_1, T_2 \subseteq \F$, $|T_1| = |T_2| = n$ \Comment*[f]{points of evaluation}}
				\myinput{$d, s$ \Comment*[f]{degree and multiplicity resp.}}
				\myinput{$f \colon T_1 \times T_2 \to \mathbb{F}_{<s}[z_1, z_2]$.  \Comment*[f]{received word}}
		\KwOut{$P = \sum_{\ell \in [d+1]} P_{\ell}\left(x_1\right)x_2^{d-\ell} \in \mathbb{F}_{\leq d}[x_1, x_2]$ such that $\Delm^{(s)}\left(f,\enc^{(s)}(P)\right) < \frac{1}{2} n^2 \left(s-\frac{d}{n}\right)$, if such a $P$ exists}
		\For{$\ell \gets 0$ to $d$}{ \label{line:for_loop_in_outer_algo}
			Define $f_{\ell}\colon T_1 \times T_2 \to \mathbb{F}_{<s}[z_1,
			z_2]$ as \Comment*[r]{received word for the $\ell^{th}$ iteration}
			\nonl\Indp\Indp $f_\ell \gets   f - \enc^{(s)}\left(\sum_{i \in [\ell]} P_i\left(x_1\right)
			x_2^{d-i}\right)$\,;\\ 
			\Indm\Indm
			\nonl $\forall \left(a, b\right) \in T_1 \times T_2$, let \Comment*[r]{some
				more notation}
			\nonl\Indp\Indp $f_{\ell}\left(a, b\right) = \sum_{i, j}f_{\ell, \left(i,
				j\right)}\left(a, b\right) z_1^{i}z_2^{j}$ \,;\\
			\Indm\Indm
			Set  $r \gets s - \lfloor \frac{d-\ell}{n} \rfloor $\,; \label{line:noder}
			\Comment*[r]{number of usable derivatives}
			
			\For{$i \gets 0$ to $ r-1$}{
				\For{$a \in T_1$}{
					Define $f_{\ell}^{\left(i, a\right)}\colon T_2 \to
					\F_{<s-i}[z_2]$ as\label{line:fellparts}\\
					\nonl\Indp\Indp $f_{\ell}^{\left(i, a\right)} \left(b\right) \gets \sum_{j \in
						[s-i]} f_{\ell, \left(i, j\right)}\left(a, b\right) \cdot z_2^{j}$\,;
					\\
					\Indm\Indm
					Run Generalized Univariate Multiplicity Code Decoder (\cref{alg:gen mult decoder}) on $\left(T_2,  d-\ell, s-i, f_{\ell}^{\left(i, a\right)}\right)$ to obtain $G_\ell^{\left(i, a\right)} \in \mathbb{F}_{\leq d-\ell}[x_2]$\,; \label{line:BWmult}\\
					\lIf {$G_\ell^{\left(i, a\right)}$ is $\bot$}{set $G_\ell^{\left(i,
							a\right)} \gets 0$\,;}
				}
			}
			Define $w_\ell\colon T_1\times [r]\to \Z_{\geq 0}$
			as\\
			\nonl\Indp\Indp  $w_\ell \left(a, i\right) \gets  \min
			\left\{\Delm^{\left(s-i\right)}\left(f_\ell^{\left(i,
				a\right)},
			\enc^{(s-i)}\left(G_\ell^{\left(i,
				a\right)}\right) \right), \frac{n}{2}\cdot \left((s-i)-\frac{d-\ell}{n}\right)\right\}
			$\,; \label{line:defn_w_ell} \\
			\Indm\Indm
			Define $g_\ell\colon T_1 \to \F_{<r}[z_1]$ as \label{line:defn_g_ell}\\
			\nonl\Indp\Indp $g_\ell\left(a\right) \gets  \sum_{i \in [r]}
			\coeff_{x_2^{d-\ell}}\left(G_{\ell}^{\left(i,
				a\right)}\right)z_1^{i}$\,; \\
			\Indm\Indm
			Run Weighted Univariate Multiplicity Code Decoder
			(\cref{alg:weighted multiplicity code decoder}) on $\left(T_1, d,
			s, 2,
			\ell, r, g_\ell, w_\ell\right)$ to get $P_\ell\left(x_1\right)$\,;\label{line:wumd} 
		}
		Set $P(x_1, x_2) \gets \sum_{\ell \in [d+1]} P_{\ell}\left(x_1\right)x_2^{d-\ell}$;\\
		\leIf{$\Delm^{\left(s\right)}\left(f,\enc^{(s)}(P)\right) < \frac{1}{2} n^2 \left(s-\frac{d}{n}\right)$}{
			\KwRet{$P\left(x_1, x_2\right)$}}{\KwRet $\bot$\,.}
		
	\end{algorithm}
	
	\begin{remark}\label{rem:multr}
		We remark that the number of derivatives $r$ we use here (\cref{line:noder}) is
		$s-\lfloor \frac{d-\ell}{n} \rfloor$. This is the value that satisfies
		a few constraints. For the call to \cref{alg:gen mult decoder}
		(\cref{line:BWmult}) to work, it must be the case that $(s-i) n
		-(d-\ell)\geq 0$, for
		every $i$ in $[r]$. In other words, we must have $r -1 \leq s
		-(d-\ell)/n$. For the call to \cref{alg:weighted
			multiplicity code decoder} (\cref{line:wumd}), we must have $rn-\ell
		\geq 0$. Finally, in the proof of correctness of \cref{alg:weighted
			multiplicity code decoder} (specifically, \cref{lem: gamma
			triangle}), we require $rn - \ell \geq sn - d$. (This actually
		subsumes the previous constraint.) $r:=s-\lfloor \frac{d-\ell}{n} \rfloor$ satisfies all these (and is the unique integer that does so unless $\frac{d-\ell}{n}$ is an integer). When $\ell=0$, $r \geq 1$, and when $\ell = d$, $r=s$.
	\end{remark}
	
	\subsection{Analysis of the bivariate decoder}
	
	We now formally state the claim of correctness and the running time of
	\cref{alg:outer}. It asserts that \cref{alg:outer} does indeed decode
	bivariate  multiplicity codes from half their minimum distance on
	arbitrary product sets, hence giving us the bivariate version of \cref{thm:main tech result}.
	
	\begin{theorem}[correctness of bivariate decoder (\cref{alg:outer})]%
		\label{THM:CORRECTNESS_OF_BIVARIATE_DECODER}
		%\begin{restatable}[correctness of bivariate decoder (\cref{alg:outer})]{theorem}{bivariatethem}\label{THM:CORRECTNESS_OF_BIVARIATE_DECODER}\RestateRemark
		%\begin{theorem}\label{THM:CORRECTNESS_OF_BIVARIATE_DECODER}
		Let $d, s, n \in \N$ be such that $d < sn$, $\F$ be any  field and let
		$T_1, T_2$ be arbitrary subsets of $\F$ of size $n$ each. Let $f\colon
		T_1 \times T_2 \to \F_{<s}[z_1, z_2]$ be any function. 
		
		Then, on input $T_1, T_2, s, d, n$ and $f$, \cref{alg:outer} runs in time $(sn)^{s + O(1)}$ and outputs a polynomial $P \in \F[x_1, x_2]$ of total degree at most $d$ such that  
		\[
		\Delm^{(s)}(f,\enc^{(s)}_{T_1 \times T_2}(P)) < \frac{1}{2} n^2 \left(s-\frac{d}{n} \right) \, ,
		\]
		if such a $P$ exists. 
	\end{theorem}
	%\end{restatable}
	
	\cref{alg:outer} makes calls to \cref{alg:gen mult decoder} and \cref{alg:weighted multiplicity code decoder}. We have shown the correctness of both individually. 
	
	To prove correctness of \cref{alg:weighted multiplicity code decoder} one of the ingredients we need is that the call on \cref{line:wumd} always satisfies the promise $\Gamma_{w_\ell}^{d, \ell, s}(g_\ell,
	R) < \frac{n^2}{2}(s-\frac{d}{n})$, which is required for \cref{alg:weighted multiplicity code decoder} to work. We prove this in the lemma below.
	
	\begin{lemma}[Relationship between $\Gamma$ and multiplicity distance]\label{lem: gamma comparison to usual distance}
		Suppose we are in the $\ell^{th}$ iteration of the for-loop at \cref{line:for_loop_in_outer_algo} of \cref{alg:weighted multiplicity code decoder}.
		To recall, we have subsets $T_1, T_2 \subseteq \F$  sets of size $n$ each, and natural numbers $d, \ell, s, r$  with $d \geq \ell$ and $r = s - \lfloor \frac{d-\ell}{n} \rfloor$. 
		Let $\tilde{P}_\ell = \sum_{i = \ell}^d P_i(x_1)x_2^{d-i}$ be the polynomial of degree at most $d$ with $\Delm^{(s)}(f_\ell,\enc^{(s)}_{T_1\times T_2}(\tilde{P}_\ell)) < \frac{1}{2} n^2 (s-\frac{d}{n})$, that is, the ``remaining'' portion of $P$ after $P_0, \dotsc, P_{\ell-1}$ have been peeled off.
		
		If $g_{\ell}\colon  T_1 \to \F_{<r}[z_1]$ and $w_{\ell}\colon  T_1 \times [r] \to \Z_{\geq 0}$ are  as defined in \cref{alg:outer}, i.e., for $a\in T_1$, $g_\ell(a)$ is the guess for $(P_{\ell}(a+z_1) \mod \ip{z_1}^r)$ that comes from using \cref{alg:gen mult decoder} on $f_\ell$ at $x_1=a$ and $w_\ell(a,i)$ represents the confidence we have in the $(i-1)^{th}$ derivative of $P_\ell(a+z_1)$ at $a$ as specified by $g_\ell$,  then 
		\[
		\Gamma_{w}^{s, d, \ell}(g_{\ell}, P_{\ell}) \leq \Delm^{(s)}(f_\ell, \enc^{(s)}_{T_1\times T_2}(\tilde{P}_\ell)) < \frac{1}{2}n^2\left(s - \frac{d}{n}\right) \, .
		\]
	\end{lemma}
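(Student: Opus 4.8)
The plan is to prove the inequality column by column over $a\in T_1$, and within each column slice by slice in the variable $z_1$. Two ingredients are used throughout. First, for any $t$ the quantity $\Delm^{(t)}$ is a pseudometric on functions $T\to\F_{<t}[z]$: since $d^{(t)}_{\min}$ is the order valuation truncated at $t$, $d^{(t)}_{\min}(u+v)\ge\min(d^{(t)}_{\min}(u),d^{(t)}_{\min}(v))$, and summing over points gives the triangle inequality, hence also $\Delm^{(t)}(f,\enc^{(t)}(R))\ge\Delm^{(t)}(\enc^{(t)}(G),\enc^{(t)}(R))-\Delm^{(t)}(f,\enc^{(t)}(G))$. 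Second, for distinct univariate polynomials $G\ne R$ of degree $\le e$ over $T$ (with $|T|=n$), the multiplicity SZ lemma applied to $G-R$ gives $\Delm^{(t)}(\enc^{(t)}_T(G),\enc^{(t)}_T(R))\ge tn-e$. Throughout, $f$ denotes the received word $f_\ell$ of the $\ell$-th iteration, and $f_\ell^{(i,a)}$, $G^{(i,a)}_\ell$, $g_\ell$, $w_\ell$ are as in \cref{alg:outer}; recall that $w_\ell(a,i)=\min\{\Delm^{(s-i)}(f_\ell^{(i,a)},\enc^{(s-i)}(G^{(i,a)}_\ell)),\,\tfrac12((s-i)n-(d-\ell))\}$ and $\ell\le d$.

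For $a\in T_1$ and $j\in[r]$, define the \emph{true $j$-th slice along $x_1=a$} by $R^{(j,a)}(x_2):=\sum_{i'=\ell}^{d}\tfrac{\hpartial P_{i'}}{\hpartial x_1^{j}}(a)\,x_2^{d-i'}$. Then $\deg R^{(j,a)}\le d-\ell$ and $\coeff_{x_2^{d-\ell}}(R^{(j,a)})=\tfrac{\hpartial P_\ell}{\hpartial x_1^{j}}(a)$, so by the definitions of $A_i(g_\ell,P_\ell)$ and of $g_\ell$, the statement "$a\in A_i(g_\ell,P_\ell)$" is exactly the condition that $\coeff_{x_2^{d-\ell}}(G^{(j,a)}_\ell)=\coeff_{x_2^{d-\ell}}(R^{(j,a)})$ for all $j<i$ and (when $i<r$) $\coeff_{x_2^{d-\ell}}(G^{(i,a)}_\ell)\ne\coeff_{x_2^{d-\ell}}(R^{(i,a)})$. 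Expanding $P(a+z_1,b+z_2)$ via Hasse derivatives shows that the coefficient of $z_1^{j}$ in $\enc^{(s)}(P)(a,b)$ equals $\enc^{(s-j)}_{T_2}(R^{(j,a)})(b)$, while the coefficient of $z_1^{j}$ in $f_\ell(a,b)$ is $f_\ell^{(j,a)}(b)$. Hence $f_\ell(a,b)-\enc^{(s)}(P)(a,b)=\sum_{j}z_1^{j}h_j(z_2)$ with $h_j=f_\ell^{(j,a)}(b)-\enc^{(s-j)}_{T_2}(R^{(j,a)})(b)$, and since $d^{(s)}_{\min}(\sum_j z_1^{j}h_j)=\min\bigl(s,\min_j(j+\mathrm{ord}_{z_2}(h_j))\bigr)$ we get $s-d^{(s)}_{\min}(\cdot)=\max_j\bigl((s-j)-d^{(s-j)}_{\min}(h_j)\bigr)$. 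Summing over $b\in T_2$ yields, for every fixed $j\in[r]$,
\[
\Delm^{(s)}\!\bigl(f_\ell(a,\cdot),\enc^{(s)}(P)(a,\cdot)\bigr)\ \ge\ \Delm^{(s-j)}\!\bigl(f_\ell^{(j,a)},\enc^{(s-j)}_{T_2}(R^{(j,a)})\bigr).
\]

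The core estimate is: for fixed $a$ and $j\in[r]$, if $\coeff_{x_2^{d-\ell}}(G^{(j,a)}_\ell)=\coeff_{x_2^{d-\ell}}(R^{(j,a)})$ then $\Delm^{(s-j)}(f_\ell^{(j,a)},\enc^{(s-j)}(R^{(j,a)}))\ge w_\ell(a,j)$, while if these leading coefficients differ then (in particular $G^{(j,a)}_\ell\ne R^{(j,a)}$ and) $\Delm^{(s-j)}(f_\ell^{(j,a)},\enc^{(s-j)}(R^{(j,a)}))\ge (s-j)n-(d-\ell)-w_\ell(a,j)$. If $G^{(j,a)}_\ell=R^{(j,a)}$ the first bound is immediate from the definition of $w_\ell(a,j)$. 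Otherwise $G^{(j,a)}_\ell\ne R^{(j,a)}$ (this covers the "differ" case as well as the remaining subcase of the "agree" case), and I would bound $\Delm^{(s-j)}(f_\ell^{(j,a)},\enc^{(s-j)}(R^{(j,a)}))$ from below in two ways: the univariate SZ lemma together with the triangle inequality give $\ge (s-j)n-(d-\ell)-\Delm^{(s-j)}(f_\ell^{(j,a)},\enc^{(s-j)}(G^{(j,a)}_\ell))$; and, since $\deg R^{(j,a)}\le d-\ell$, if $\Delm^{(s-j)}(f_\ell^{(j,a)},\enc^{(s-j)}(R^{(j,a)}))<\tfrac12((s-j)n-(d-\ell))$ then \cref{thm:correctness of generalized univariate multiplicity code decoder} would force the decoder output to be $R^{(j,a)}$, contradicting $G^{(j,a)}_\ell\ne R^{(j,a)}$, so $\Delm^{(s-j)}(f_\ell^{(j,a)},\enc^{(s-j)}(R^{(j,a)}))\ge\tfrac12((s-j)n-(d-\ell))$. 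Writing $X=\Delm^{(s-j)}(f_\ell^{(j,a)},\enc^{(s-j)}(G^{(j,a)}_\ell))$ and taking the larger of the two lower bounds, $\Delm^{(s-j)}(f_\ell^{(j,a)},\enc^{(s-j)}(R^{(j,a)}))\ge\max\{(s-j)n-(d-\ell)-X,\ \tfrac12((s-j)n-(d-\ell))\}=(s-j)n-(d-\ell)-\min\{X,\tfrac12((s-j)n-(d-\ell))\}=(s-j)n-(d-\ell)-w_\ell(a,j)$, which also exceeds $w_\ell(a,j)$ since $w_\ell(a,j)\le\tfrac12((s-j)n-(d-\ell))$.

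It remains to assemble. For $a\in A_i(g_\ell,P_\ell)$ with $i<r$, combining the slice decomposition with the core estimate for $j=i$ (leading coefficients differ) and for each $j<i$ (leading coefficients agree) gives $\Delm^{(s)}(f_\ell(a,\cdot),\enc^{(s)}(P)(a,\cdot))\ge\max\{(s-i)n-(d-\ell)-w_\ell(a,i),\ \max_{j<i}w_\ell(a,j)\}$, which is precisely the $a$-summand of $\Gamma_w^{s,d,\ell}(g_\ell,P_\ell)$; for $a\in A_r(g_\ell,P_\ell)$, applying the core estimate to all $j<r$ gives $\ge\max_{j<r}w_\ell(a,j)$, again the $a$-summand. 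Summing over $a\in T_1$ and using $\Delm^{(s)}(f,\enc^{(s)}(P))=\sum_{a\in T_1}\Delm^{(s)}(f_\ell(a,\cdot),\enc^{(s)}(P)(a,\cdot))$ completes the argument, the strict inequality $<\tfrac12 n^2(s-d/n)$ being the hypothesis on $P$. I expect the main obstacle to be the middle step: pinning down the slice polynomial $R^{(j,a)}$, checking that restricting a bivariate order-$s$ encoding to a column and reading off the coefficient of $z_1^{j}$ yields an order-$(s-j)$ univariate encoding, and verifying that the $\max$ over slices produced by $d^{(s)}_{\min}$ matches the $\max$ built into the definition of $\Gamma$. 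The per-slice estimates are then routine given the correctness of \cref{alg:gen mult decoder} and the triangle inequality for $\Delm$.
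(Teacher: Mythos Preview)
Your proposal is correct and follows essentially the same route as the paper: decompose the column error into $z_1$-slices via the identity $s-d^{(s)}_{\min}=\max_j\bigl((s-j)-d^{(s-j)}_{\min}\bigr)$, then on each slice combine the triangle inequality for $\Delm$ with the univariate multiplicity-SZ distance bound and the guarantee of \cref{alg:gen mult decoder} to recover the terms of $\Gamma$. The one cosmetic difference is that the paper phrases the slice bound via ``$G$ is the closest codeword'' (so $\Delm(f,\enc(P))\ge\Delm(f,\enc(G))$), whereas you invoke \cref{thm:correctness of generalized univariate multiplicity code decoder} directly to get the $\tfrac12((s-j)n-(d-\ell))$ floor when $G\ne R$; your version is arguably cleaner in the $\bot\mapsto 0$ case, but the two arguments are interchangeable.
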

	
	\begin{proof}	
		It will be helpful to recall the definition of $\Gamma_{w_\ell}^{s, d, \ell}$ from \cref{def: gamma}:
		\begin{multline*}
		\Gamma_{w_\ell}^{s, d, \ell}\left(g_\ell, P_\ell\right)  := \sum_{i = 0}^{r-1} \sum_{a
			\in A_i\left(g_\ell, P_\ell\right)} \max
		\left\{\left(n\cdot \left(\left(s-i\right) -
		\frac{d-\ell}n\right) - w_\ell\left(a,i\right)\right),
		\max_{j < i} w_\ell(a, j) \right\}\\
		+ \sum_{a \in A_r\left(g_\ell,
			P_\ell\right)} \max_{j < r} w_\ell(a,
		j)\;, 
        \end{multline*}
		where for every $i \in [r+1]$
		\[
		A_i\left(g_\ell,P_\ell\right)=\setdef{a\in T_1}{\max{\setdef{j\in [r+1]}{g_\ell(a)=P_\ell(a+z_1) \mod \langle z_1\rangle^{j}}}=i}.
		\]
		
		We also recall the definition of the weights. In short, for every column $a \in T_1$ and level $i$, $f_{\ell}^{\left(i, a\right)}$ is the received word and $G_\ell^{(i,a)}$ is the result of univariate multiplicity decoding. The weight is their multiplicity distance, capped at $\frac{n}2\cdot((s-i)-\frac{d-\ell}n)$.
		
		Formally, $f_\ell^{(i,a)}\colon T_2 \to \F_{< s-i}[z_2]$ refers to 
		\[f_{\ell}^{\left(i, a\right)} \left(b\right) = \sum_{j \in
			[s-i]} f_{\ell, \left(i, j\right)}\left(a, b\right)
		\cdot z_2^{j},\]
		while $G_\ell^{(i,a)}$ is the output of \cref{alg:gen mult decoder} on
		input $\left(T_2,  d-\ell, s-i, f_{\ell}^{\left(i,
			a\right)}\right)$.
		
		Then, the weights $w_\ell$ are defined as
		\begin{align*}
			w_\ell(a,i) &:= \min
			\left\{\Delm^{\left(s-i\right)}\left(f_\ell^{\left(i,
				a\right)} , \enc^{(s-i)}\left(G_\ell^{\left(i, a\right)}\right) \right), \frac{n}2\cdot\left((s-i)-\frac{d-\ell}n\right)\right\}\;.
		\end{align*}    
		
		Having found $G_\ell^{(i,a)}$ for each $a \in T$ and $i \in [s]$, we extract our guess for $P_{\ell}$ for each $a \in T$, as the function $g_\ell \colon T_1 \to \F_{<r}[z_1]$.
		\[g_\ell\left(a\right) := \sum_{i \in [r]}
		\coeff_{x_2^{d-\ell}}\left(G_{\ell}^{\left(i,
			a\right)}\right)z_1^{i}.\]
		
		It follows from the definition that $A_i\left(g_\ell,P_\ell\right)$
		refers to the set of $a$'s in $T_1$ such that $g_\ell$ and $P_\ell$
		agree up to the $(i-1)^{th}$-derivative at $a$ but not at the
		$i^{th}$-derivative for $i <r$ while $A_r\left(g_\ell,P_\ell\right)$
		refers to the set of $a$'s in $T_1$ such that $g_\ell$ and $P_\ell$
		agree up to the $(r-1)^{th}$-derivative. Further, as mentioned earlier, $w_\ell(a,i)$ represents the confidence we have in the $(i-1)^{th}$ derivative of $P_\ell(a+z_1)$ at $a$ as specified by $g_\ell$.
		
		Also, define the polynomial $\tilde{P}_\ell^{(i,a)}(x_2)\in \F_{\leq d-\ell}[x_2]$ as  
		\[
		\tilde{P}_\ell^{(i,a)}(x_2) \coloneqq \coeff_{z_1^i}\left(\sum_{j=\ell}^{d}P_j\left(a+z_1\right)x_2^{d-j}\right).
		\]
		
		That is, the $i^{th}$ Hasse derivative with respect to $x_1$, of $\tilde{P}_\ell$ at $a$.
		
		Hence, if the received word $f_\ell$ and the polynomial $\tilde{P}_\ell = \sum_{j=\ell}
		^d P_j$ agree completely, then the following three conditions are met
		\begin{itemize}
			\item $f_\ell^{(i,a)}(b)=
			\left(\enc^{(s-i)}\left(\tilde{P}_{\ell}^{(i,a)}\right)\right)(b)=\tilde{P}_{\ell}^{(i,a)}\left(b+z_2\right)
			\mod \langle z_2 \rangle^{(s-i)}, \forall i,a,b$.
			\item The polynomials $G_\ell^{(i,a)}$ and $P_\ell^{(i,a)}$ are
			identical, for all $i,a$.
			\item $g_\ell= \enc^{(r)}(P_\ell)$.
		\end{itemize}    
		%We use $f,g,w, P$ instead of $f_\ell, g_\ell, w_\ell, P_\ell$ below for convenience of notation.
		
		Our ultimate goal is to prove $\Gamma_{w}^{s, d, \ell}(g_{\ell}, P_{\ell}) \leq \Delm^{(s)}(f_\ell, \enc^{(s)}_{T_1\times T_2}(\tilde{P}_\ell))$. We will first show 
		\[ \Gamma_{w}^{s, d, \ell}(g_{\ell}, P_{\ell}) \leq \sum_{a \in T_1} \max_{i \in [s]} \left\{\Delm^{(s-i)} \left(f_{\ell}^{(i, a)}, \enc^{(s-i)}\left(\tilde{P}_{\ell}^{(i, a)}\right)\right)\right\} \]
		
		and then 
		\[ \sum_{a \in T_1} \max_{i \in [s]} \left\{\Delm^{(s-i)} \left(f_{\ell}^{(i, a)}, \enc^{(s-i)}\left(\tilde{P}_{\ell}^{(i, a)}\right)\right)\right\} \leq \Delm^{(s)} \left(f_\ell, \encs\left( \tilde{P}_\ell\right)\right) \]
		
		First, fix some $a \in T_1$ and let $a\in A_i \left(g_\ell, P_\ell\right)$ for some $i\in [r+1]$. Recall that its contribution to $\Gamma_{w}^{s, d, \ell}(g_{\ell}, P_{\ell})$ is at least $n\left((s-i) - \frac{d-\ell}n\right) - w_{\ell}(a,i)$. We will show an upper bound on this quantity.
		
		For every $j\in [r+1]$, since the decoded polynomial $G^{(j, a)}$ is the closest, it is no further than the encoding of $\tilde{P}_{\ell}^{(j, a)}$.
		\[ \Delm^{(s-j)} \left(f_{\ell}^{(j, a)}, \enc^{(s-j)}\left(\tilde{P}_{\ell}^{(j, a)}\right)\right) \geq  \Delm^{(s-j)} \left(f_{\ell}^{(j, a)}, \enc^{(s-j)}\left(G_{\ell}^{(j, a)} \right)\right). \] 
		Further if $i<r$, we can now apply the triangle inequality to get 
		\begin{multline*}\Delm^{(s-i)} \left(f_{\ell}^{(i, a)}, \enc^{(s-i)}\left(\tilde{P}_{\ell}^{(i,
			a)}\right)\right) + \Delm^{(s-i)} \left(f_{\ell}^{(i, a)},
		\enc^{(s-i)}\left(G_{\ell}^{(i, a)}\right)\right) \\
        \geq \Delm^{(s-i)}
		\left(\enc^{(s-i)}\left(\tilde{P}_{\ell}^{(i, a)}\right), \enc^{(s-i)}\left(G_{\ell}^{(i,
			a)}\right)\right)\,.\end{multline*}
		
		Since $i <r$ and
		$a \in A_i(g_\ell, P_\ell)$, we have
		$g_\ell(a)\neq P_\ell(a+z_1) \mod \langle z_1\rangle^{i}$. Hence,
		$\tilde{P}_{\ell}^{(i,a)}\neq G_{\ell}^{(i,a)}$ are two distinct polynomials of degree at
		most $d-\ell$ and by the distance of the $(s-i)^{th}$ order
		multiplicity code, we have
		\[\Delm^{(s-i)} \left(\enc^{(s-i)}\left(\tilde{P}_{\ell}^{(i, a)}\right), \enc^{(s-i)}\left(G_{\ell}^{(i, a)}\right)\right)  \geq n\left((s-i) - \frac{d-\ell}n\right),  \]
		which in turn yields, combining with the triangle inequality from above,
		\[ \Delm^{(s-i)} \left(f_{\ell}^{(i, a)}, \enc^{(s-i)}\left(\tilde{P}_{\ell}^{(i, a)}\right)\right) \geq n\left((s-i) - \frac{d-\ell}n\right) - \Delm^{(s-i)} \left(f_{\ell}^{(i, a)}, \enc^{(s-i)}\left(G_{\ell}^{(i, a)}\right)\right).\]
		
		Recall that \[ w_\ell(a,i) = \min
		\left\{\Delm^{\left(s-i\right)}\left(f_\ell^{\left(i,
			a\right)} , \enc^{(s-i)}\left(G_\ell^{\left(i, a\right)}\right) \right), \frac{n}2\cdot\left((s-i)-\frac{d-\ell}n\right)\right\} \]
		
		First we consider the case that the minimum is attained at the first term. That is, \[\Delm^{(s-i)} \left(f_{\ell}^{(i, a)}, \enc^{(s-i)}\left(G_{\ell}^{(i, a)}\right)\right) \leq \frac{n}2\cdot\left((s-i)-\frac{d-\ell}n\right).\] In that case, 
		
		\[ n\left((s-i) - \frac{d-\ell}n\right) - \Delm^{(s-i)} \left(f_{\ell}^{(i, a)}, \enc^{(s-i)}\left(G_{\ell}^{(i, a)}\right)\right) = n\left((s-i) - \frac{d-\ell}n\right) - w_{\ell}(a,i) \]
		
		Otherwise, if $ \Delm^{(s-i)} \left(f_{\ell}^{(i, a)}, \enc^{(s-i)}\left(G_{\ell}^{(i, a)}\right)\right)>  \frac{n}2\cdot\left((s-i)-\frac{d-\ell}n\right)$, then, 
		\[ \Delm^{(s-i)} \left(f_{\ell}^{(i, a)}, \enc^{(s-i)}\left(\tilde{P}_{\ell}^{(i, a)}\right)\right)>  \frac{n}2\cdot\left((s-i)-\frac{d-\ell}n\right). \]
		
		In both cases, we get that if $i<r$, then
		\[ \Delm^{(s-i)} \left(f_{\ell}^{(i, a)}, \enc^{(s-i)}\left(\tilde{P}_{\ell}^{(i, a)}\right)\right) \geq n\left((s-i) - \frac{d-\ell}n\right) - w_{\ell}(a,i).\]
		
		Now, by the definition of $\Gamma$ and $w$, 
		\[ 
		\begin{split}
			\Gamma_w^{s, d, \ell}\left(g_\ell, P_\ell\right) & = \sum_{i = 0}^{r-1} \sum_{a \in A_i\left(g_\ell, P_\ell\right)} \max \left\{\left(n\left((s-i) - \frac{d-\ell}n\right) - w_{\ell}(a,i)\right), \max_{j < i} w_{\ell}(a, j) \right\}\\
            & \qquad \qquad + \sum_{a \in A_r\left(g_\ell, P_\ell\right)} \max_{j < r} w_{\ell}(a, j)\\
			& \leq \sum_{a \in T_1} \max_{i \in [s]} \left\{\Delm^{(s-i)} \left(f_{\ell}^{(i, a)}, \enc^{(s-i)}\left(\tilde{P}_{\ell}^{(i, a)}\right)\right)\right\}\,. 
		\end{split}
		\] 
		
		This proves the first inequality indicated above. We now prove the second:
		% \[\sum_{a \in T_1} \max_{i \in [s]} \Delm^{(s-i)} (f^{(i, a)}, \enc^{(s-i)}(P^{(i, a)})) \leq \Delm^{(s)}(f, P)\;
		%   \]
		% We start with the claim that
		\[ 
		\sum_{a \in T_1} \max_{i \in [s]} \left\{\Delm^{(s-i)} \left(f_{\ell}^{(i,a)}, \enc^{(s-i)}\left(\tilde{P}_{\ell}^{(i,a)}\right)\right)  \right\} \leq \Delm^{(s)} \left(f_\ell, \encs\left( \tilde{P}_\ell\right)\right).
		\]
		To see this, note that the right-hand side is $\sum_{a \in T_1} \sum_{b \in T_2} \left(s-d_{\min}^{(s)}\left(f_\ell(a, b)-\left(\enc^{(s)}\left(\tilde{P}_\ell\right)\right)\left(a, b\right)\right)\right)$ and the left-hand side is $\sum_{a \in T_1} \max_{i \in [s]} \sum_{b \in
			T_2} \left((s-i) - d_{\min}^{(s-i)} \left(f_{\ell}^{(i,a)}(b) - \left(\enc^{(s-i)}\left(\tilde{P}_{\ell}^{(i,a)}\right)\right)\left(b\right)\right)\right)$. Since both
		have a summation over $a \in T_1$, we will show the inequality holds
		term by term for each $a \in T_1$. 
		
		Hence, fix an $a \in T_1$. Let $d_{\min}^{(s)}\left(f_\ell(a, b) -
		\left(\enc^{(s)} \tilde{P}_\ell\right)\left(a, b\right)\right) = d_0$. Then the right-hand side is $s-d_0$.
		
		Say the maximum on the left-hand side is
		attained at some $i_0$. Then the
		left-hand side is \[\sum_{b \in T_2} (s-i_0) -
		d_{\min}^{(s-i_0)}\left(f_{\ell}^{(i_0, a)}(b) - \left(\enc^{(s-i_0)}\tilde{P}_{\ell}^{(i_0,
			a)}\right)(b)\right)\,.\] 
		
		Note that $d_0\leq i_0 +
		d_{\min}^{(s-i_0)}\left(f_\ell^{(i_0, a)}(b) - \left(\enc^{(s-i_0)}\tilde{P}_\ell^{(i_0,
			a)}\right)(b)\right)$, and hence \[s-d_0 \geq (s-i_0) -
		d_{\min}^{(s-i_0)}\left(f_\ell^{(i_0, a)}(b) - \left(\enc^{(s-i_0)}\tilde{P}_\ell^{(i_0,
			a)}\right)(b)\right)\,.\] 
		% If $i_0$ is less than $d_0$, the $d_{\min}^{s-i}$ term is at least $d_0 - i_0$ and the entire term is at most $s-d_0$, which is the same as the left-hand side. Otherwise, if $i_0 \geq d_0$, the $d_{\min}$ term becomes $i_0$ (or possibly even more), so the entire right-hand side is now $s-i_0$, which is at most $s-d_0$. 
		This completes the proof. \qedhere
	\end{proof}
	
	Now that we have all the necessary ingredients, we complete the proof
	of \cref{THM:CORRECTNESS_OF_BIVARIATE_DECODER}.
	
	% \begin{theorem}[Restatement of \cref{THM:CORRECTNESS_OF_BIVARIATE_DECODER}]
		% Let $d, s, n \in \N$ be such that $d < sn$, $\F$ be any  field and let $T_1, T_2$ be arbitrary subsets of $\F$ of size $n$ each. Let $f\colon T_1 \times T_2 \to \F_{<s}[z_1, z_2]$ be an any function. 
		
		% Then, on input $T_1, T_2, s, d, n$ and $f$, \cref{alg:outer} runs in time $O(n^{s + O(1)})$ and outputs a polynomial $Q \in \F[x_1, x_2]$ of total degree at most $d$ such that  
		% $$
		% \Delm^{(s)}(\enc^{(s)}_{T_1 \times T_2}(Q), f) < \frac{1}{2} n^2 \left(s-\frac{d}{n}\right) \, ,
		% $$
		% if such a $Q$ exists. 
		% \end{theorem}
	\begin{proof}[Proof of Theorem~\ref{THM:CORRECTNESS_OF_BIVARIATE_DECODER}]
		\label{proof:bivariate_decoder}
		We first observe that \cref{alg:outer} never outputs an incorrect answer, since towards the end of the algorithm we always check whether the polynomial $P$ that is the potential output indeed 
		satisfies \[
		\Delm^{(s)}(f,\enc^{(s)}_{T_1 \times T_2}(P)) < \frac{1}{2} n^2 \left(s-\frac{d}{n}\right) \, ,
		\]
		and $P$ is output only if the check passes. In particular, \cref{alg:outer} does not output a polynomial if there is no codeword close enough to the received word. 
		
		Thus, to show the correctness of the algorithm, it suffices to assume that there exists a polynomial $Q$ of degree at most $d$ that is close to the received word $f$ and argue that in this case the polynomial $P$  output by the algorithm equals $Q$. Let $Q(x_1,x_2) = \sum_{\ell = 0}^d Q_{\ell}(x_1)x_2^{d-\ell}$ with the polynomials $Q_{\ell}$ satisfying $\deg(Q_{\ell}) \leq \ell$.  
		
		\cref{alg:outer} proceeds in $d+1$ iterations and we now claim that at the end of iteration $\ell$,  we have correctly recovered $Q_{0}, Q_1, \ldots, Q_{\ell}$.  More formally, we have the following claim. 
		
		\begin{claim}\label{claim: correctness of bivariate algorithm induction}
			Let $\ell$ be any element in $\{0, 1, \ldots, d\}$. Then, at the end of the iteration $\ell$ of the \emph{for} loop in line $2$ of \cref{alg:outer}, we have that the polynomial $P_{\ell}(x_1)$ equals $Q_{\ell}(x_1)$. 
		\end{claim}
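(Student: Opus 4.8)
The plan is to prove Claim~\ref{claim: correctness of bivariate algorithm induction} by induction on $\ell$, tracking the invariant that at the start of iteration $\ell$ the polynomials $P_0, \dots, P_{\ell-1}$ have already been set equal to $Q_0, \dots, Q_{\ell-1}$. The base case and the inductive step are in fact identical in structure, so I would just argue the generic iteration $\ell$ under the inductive hypothesis. First I would unfold the definition of the peeled-off word: since $P_i = Q_i$ for $i < \ell$ by the inductive hypothesis, we have $f_\ell = f - \enc^{(s)}\big(\sum_{i \in [\ell]} Q_i(x_1) x_2^{d-i}\big)$, and by linearity of $\enc^{(s)}$ together with the definition of $\Delm^{(s)}$ (which only depends on differences),
\[
\Delm^{(s)}\big(f_\ell, \enc^{(s)}(Q^{\ge \ell})\big) = \Delm^{(s)}\big(f, \enc^{(s)}(Q)\big) < \tfrac12 n^2\Big(s - \tfrac{d}{n}\Big),
\]
where $Q^{\ge \ell} := \sum_{i = \ell}^d Q_i(x_1) x_2^{d-i}$. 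So $Q^{\ge\ell}$ is a polynomial of degree at most $d$ that is close (in $\Delm^{(s)}$) to $f_\ell$; this is exactly the hypothesis under which the downstream lemmas operate.

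Next I would verify that the input to the Weighted Univariate Multiplicity Code Decoder on \cref{line:wumd} satisfies the promise needed for \cref{thm: correctness of wumd}. This is precisely the content of \cref{lem: gamma comparison to usual distance}, applied with $P := Q^{\ge\ell}$ and $P_\ell := Q_\ell$: it gives $\Gamma_{w_\ell}^{s,d,\ell}(g_\ell, Q_\ell) \le \Delm^{(s)}(f_\ell, \enc^{(s)}(Q^{\ge\ell})) < \tfrac12 n^2(s - \tfrac{d}{n})$. I should double-check that the hypotheses of that lemma are met — in particular that the weights $w_\ell(a,i)$ constructed on \cref{line:defn_w_ell} obey the capping bound $w_\ell(a,i) \le \tfrac{n}{2}((s-i) - \tfrac{d-\ell}{n})$ required by \cref{def: gamma}; but this holds by construction, since the $\min$ in the definition of $w_\ell$ caps it at exactly that value (and one needs $r = s - \lfloor(d-\ell)/n\rfloor$, which is how the algorithm sets $r$ on \cref{line:noder}, to make $(s-i) - (d-\ell)/n \ge 0$ for all $i \in [r]$). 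Then by \cref{thm: correctness of wumd}, \cref{alg:weighted multiplicity code decoder} returns a polynomial $R$ of degree at most $\ell$ with $\Gamma_{w_\ell}^{s,d,\ell}(g_\ell, R) < \tfrac12 n^2(s - \tfrac{d}{n})$.

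Finally I would invoke uniqueness: by \cref{lem: gamma triangle}, if $R$ and $Q_\ell$ are distinct polynomials of degree at most $\ell$, then $\Gamma_{w_\ell}^{s,d,\ell}(g_\ell, Q_\ell) + \Gamma_{w_\ell}^{s,d,\ell}(g_\ell, R) \ge n^2(s - \tfrac{d}{n})$, contradicting that both terms are strictly less than $\tfrac12 n^2(s-\tfrac dn)$. Hence $R = Q_\ell$, i.e.\ $P_\ell = Q_\ell$, which closes the induction. As a concluding remark one should observe that after iteration $d$ we have $P_\ell = Q_\ell$ for all $\ell$, hence $P = Q$, and so the final check in \cref{alg:outer} passes and $P$ is returned — establishing \cref{thm: correctness of bivariate decoder} modulo the running-time bound (which follows from the $(sn)^{s+O(1)}$ cost of each of the $d+1 \le sn$ iterations, dominated by the calls to \cref{alg:weighted multiplicity code decoder}).

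I expect the main obstacle — really the only place where care is needed — to be the bookkeeping around \cref{lem: gamma comparison to usual distance}: making sure that the object the lemma calls $P$ is instantiated as $Q^{\ge\ell}$ (not $Q$ itself), that its ``leading'' coefficient polynomial $P_\ell$ is $Q_\ell$, and that the $f_\ell$, $g_\ell$, $w_\ell$ appearing in the lemma statement genuinely coincide with the quantities the algorithm computes in iteration $\ell$ under the inductive hypothesis. Everything else is a direct chaining of three already-proved results (\cref{lem: gamma comparison to usual distance}, \cref{thm: correctness of wumd}, \cref{lem: gamma triangle}) plus the linearity of $\enc^{(s)}$ and translation-invariance of $\Delm^{(s)}$.
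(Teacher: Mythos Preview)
Your proposal is correct and follows essentially the same approach as the paper's proof: strong induction on $\ell$, peel off via linearity of $\enc^{(s)}$ to reduce to $\Delm^{(s)}(f_\ell,\enc^{(s)}(\tilde Q_\ell))<\tfrac12 n^2(s-\tfrac dn)$, then invoke \cref{lem: gamma comparison to usual distance} and the correctness of the weighted univariate decoder. One minor redundancy: your separate uniqueness step via \cref{lem: gamma triangle} is unnecessary, since \cref{thm: correctness of wumd} already asserts that the algorithm returns \emph{the} polynomial $R$ satisfying the $\Gamma$-bound (i.e., $Q_\ell$ itself), not merely \emph{some} such polynomial---the paper accordingly just cites \cref{lem: gamma comparison to usual distance} and \cref{lem:threshold exists} and is done.
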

		Clearly, the claim proves the correctness of \cref{alg:outer}. We now prove this claim by a strong induction on $\ell$. The argument for the base case of the algorithm, i.e. $\ell = 0$ is essentially the same as the in the induction step. So, we just sketch the argument for the induction step. 
		
		To this end, we assume that for every $i \in \{0, 1, \ldots, \ell -1\}$, $P_i(x_1) = Q_i(x_1)$ and prove that $P_{\ell}(x_1) = Q_{\ell}(x_1)$. 
		To start with, let \[\Tilde{Q}_{\ell} := Q - \sum_{i = 0}^{\ell -1} Q_i(x_1)x_2^{d-i} \, .\] From the induction hypothesis, note that 
		\[\Tilde{Q}_{\ell} = Q - \sum_{i = 0}^{\ell -1} P_i(x_1)x_2^{d-i} \, .\]
		Thus, from the definition of the function $f_{\ell} \colon  T_1 \times
		T_2 \to \F_{<s}[z_1, z_2]$  in Line 2 of \cref{alg:outer}
		as \[f_{\ell} := f - \enc^{(s)}\left(\sum_{i = 0}^{\ell -1} P_i(x_1)x_2^{d-i}\right) \,
		,\] and the linearity of the encoding map for multiplicity codes, we have
		\begin{align*}
			\Delm^{(s)}\left(f_\ell, \enc^{(s)}_{T_1 \times T_2}\left(\Tilde{Q}_{\ell}\right)\right) &= \Delm^{(s)}\left(f - \enc^{(s)}\left(\sum_{i = 0}^{\ell -1} P_i\left(x_1\right)x_2^{d-i}\right),\enc^{(s)}_{T_1 \times T_2}\left(Q - \sum_{i = 0}^{\ell -1} Q_{i}\left(x_1\right)x_2^{d-i}\right)\right) \\
			&= \Delm^{(s)}\left(f,\enc^{(s)}_{T_1 \times T_2}\left(Q\right)\right) \\
			&< \frac{1}{2} n^2 \left(s-\frac{d}{n}\right) 
		\end{align*}
		With this guarantee in hand, we now proceed with the analysis of the $\ell^{th}$ iteration. 
		
		Now, for every $a \in T_1$, and $i \in \{0, 1, \ldots, r-1\}$ for $r = s - \lfloor \frac{d-\ell}{n} \rfloor$, the function $f_{\ell}^{(i, a)}\colon  T_2 \to \F_{<s-i}[z_2]$ defined as 
		\[
		f_{\ell}^{(i, a)} (b) = \sum_{j \in [s-i]} f_{\ell, (i, j)}(a, b) \cdot z_2^{j}
		\]
		can be viewed as a received word for a univariate multiplicity code
		with multiplicity $(s-i)$ and degree $d - \ell$ on the set $T_2$ of
		evaluation points. Indeed, if the original received word $f$ had no
		errors, and was in fact the encoding of $Q$, then $f_{\ell}^{(i, a)}$
		must be equal to the encoding of the univariate polynomial obtained by
		taking the $i^{th}$ order (Hasse) derivative of $\tilde{Q}_{\ell}$
		with respect to $x_1$ and setting $x_1$ to $a$. Since the degree of $Q_\ell$ in $x_2$ was at most $d-\ell$, the resulting  $G_{\ell}^{(i, a)}$ also has degree at most $d-\ell$.
		
		By combining the output of various calls to \cref{alg:gen mult decoder}, we obtain the function $g_{\ell}$ and the weight function $w_{\ell}$ which together are part of an input to the Weighted Univariate Multiplicity Code Decoder (\cref{alg:weighted multiplicity code decoder}). Now \cref{lem: gamma comparison to usual distance} shows that $\Delm^{(s)} \left(f_\ell, \encs\left( \tilde{P}_\ell\right)\right) < \frac{1}{2}n^2\left(s - \frac{d}{n}\right)$. This is the promise needed to invoke \cref{thm: correctness of wumd}, which states that \cref{alg:weighted multiplicity code decoder} returns $\tilde{P}_\ell$. 
		
		Since $\Delm^{(s)}(f_{\ell},\enc^{(s)}_{T_1 \times T_2}(\Tilde{P}_{\ell}))$ and $\Delm^{(s)}(f_{\ell},\enc^{(s)}_{T_1 \times T_2}(\Tilde{Q}_{\ell}))$ are both upper-bounded by half the minimum distance between polynomials, $\Tilde{P}_{\ell}$ must equal $\Tilde{Q}_{\ell}(x_1)$, which completes the induction step.

		The upper bound on the running time follows immediately from the time complexity of \cref{alg:gen mult decoder} (\cref{thm:correctness of generalized univariate multiplicity code decoder}) and \cref{alg:weighted multiplicity code decoder} (\cref{thm: correctness of wumd}).		
	\end{proof}

	\section{Multivariate multiplicity code decoder}\label{sec: multivariate algorithm}
	
	In this section, we extend the bivariate decoder (\cref{alg:outer})
	constructed in \cref{sec:bivariate} to the multivariate setting with
	$m>2$. The extension to larger $m$ proceeds as suggested by the
	inductive proof of the multiplicity SZ Lemma. If we perform the
	induction following the standard textbook proof of the SZ Lemma (e.g., in~\cite{AroraBarak} and the proof in Kim and Kopparty's work~\cite{KimK2017}), we
	need a ``weighted multivariate multiplicity code decoder''. However,
	we do not even have a weighted version of the bivariate decoder. We get
	around this issue by performing a slightly different proof of the SZ
	Lemma. This alternative proof of the SZ Lemma
	proceeds by viewing the polynomial as an $(m-1)$-variate polynomial
	with the coefficients coming from a univariate polynomial ring
	$\F[x_m]$ instead of as a univariate polynomial in $x_m$ with the
	coefficients coming from the $(m-1)$-variate polynomial ring
	$\F[x_1, \ldots, x_{m-1}]$. We first present this alternative proof of the
	classical SZ Lemma (without multiplicities) in \cref{sec:msz} and in
	the subsequent sections, extend the bivariate decoder
	(\cref{alg:outer}) to the multivariate decoder (\cref{alg:outer
		multivariate}).

	\subsection{Multivariate Schwartz-Zippel Lemma}\label{sec:msz}
	
	\begin{lemma}[Schwartz-Zippel Lemma]\label{lem: SZ alternate}
		Let $P(x_1, x_2, \ldots, x_m) \in \F[\vx]$ be a non-zero $m$ variate polynomial of total degree at most $d$ and let $T_1, T_2, \ldots, T_m$ be subsets of $F$ of size $n$ each. Then, the number of zeroes of $P$ on the product set $T_1 \times T_2 \times \cdots \times T_m$ is at most $dn^{m-1}$.  
	\end{lemma}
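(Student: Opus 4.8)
The plan is to prove \cref{lem: SZ alternate} by induction on the number of variables $m$, organized so that the recursion parallels the way \cref{alg:outer} (and its multivariate extension) peels off one polynomial $P_\ell$ at a time. The base case $m=1$ is the degree mantra: a nonzero univariate polynomial of degree at most $d$ has at most $d=dn^{0}$ roots, hence at most $d$ roots in $T_1$.

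For the inductive step ($m\ge 2$) I would expand $P$ in powers of $x_m$, writing $P=\sum_{\ell=0}^{d}P_\ell(x_1,\dots,x_{m-1})\,x_m^{\,d-\ell}$, where the coefficient $P_\ell$ of $x_m^{\,d-\ell}$ is an $(m-1)$-variate polynomial of total degree at most $\ell$ (since $\deg P\le d$, a monomial $x_m^{\,d-\ell}\prod_i x_i^{e_i}$ occurring in $P$ satisfies $\sum_i e_i\le \ell$). As $P\ne 0$, the set $\{\ell:P_\ell\not\equiv 0\}$ is nonempty; let $\ell_0$ be its minimum, so that $x_m^{\,d-\ell_0}$ is the top power of $x_m$ appearing in $P$. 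Applying the inductive hypothesis to the nonzero $(m-1)$-variate polynomial $P_{\ell_0}$, which has degree at most $\ell_0$, bounds the set $B\subseteq T_1\times\cdots\times T_{m-1}$ of its zeros by $|B|\le \ell_0 n^{m-2}$.

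Then I would count the zeros of $P$ column by column over $T_1\times\cdots\times T_{m-1}$. For $\vec a'\notin B$ the leading coefficient $P_{\ell_0}(\vec a')$ is nonzero, so $P(\vec a',x_m)$ is a nonzero univariate polynomial of degree exactly $d-\ell_0$ and hence has at most $\min(d-\ell_0,n)$ roots in $T_m$; for $\vec a'\in B$ I would use only the trivial bound of $n$ roots. This yields
\[
\#\{\text{zeros of }P\text{ on }T_1\times\cdots\times T_m\}\ \le\ |B|\cdot n+\bigl(n^{m-1}-|B|\bigr)\cdot\min(d-\ell_0,n).
\]
The remaining step is the elementary estimate that the right-hand side is at most $dn^{m-1}$. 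If $d-\ell_0\ge n$ it is at most $n^{m}$, which is $\le dn^{m-1}$ because $d\ge d-\ell_0\ge n$. If $d-\ell_0<n$, then $n-(d-\ell_0)\ge 0$, so the bound is nondecreasing in $|B|$, and substituting $|B|\le \ell_0 n^{m-2}$ gives $\ell_0 n^{m-2}\bigl(n-(d-\ell_0)\bigr)+(d-\ell_0)n^{m-1}=dn^{m-1}-\ell_0(d-\ell_0)n^{m-2}\le dn^{m-1}$, using $0\le \ell_0\le d$.

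I do not anticipate a genuine obstacle here — the argument is a routine induction — and the only points needing care are the degree bookkeeping for the coefficients $P_\ell$ and remembering to fall back on the trivial "$n$ roots" bound both for the columns in $B$ and in the regime $d-\ell_0\ge n$. The one feature I would want to highlight is organizational: the recursion is carried out on the coefficient $P_{\ell_0}$, an $(m-1)$-variate polynomial, while the variable $x_m$ is only ever resolved at the univariate base case; this matches the structure of \cref{alg:outer}, where the $x_m$-direction is handled by a univariate (multiplicity) decoder and the leading coefficients are reassembled into a lower-dimensional decoding problem.
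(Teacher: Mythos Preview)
Your proof is correct, but it is precisely the \emph{standard} textbook induction that the paper is deliberately avoiding; the paper's proof takes a genuinely different route, and the difference is the whole reason this lemma is being (re)proved here.

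You write $P$ as a univariate polynomial in $x_m$ with coefficients in $\F[x_1,\dots,x_{m-1}]$, apply the inductive hypothesis once to the $(m-1)$-variate leading coefficient $P_{\ell_0}$, and then bound each column $P(\vec a',x_m)$ by the univariate base case. The paper instead views $P$ as an $(m-1)$-variate polynomial with coefficients in $\F[x_m]$: it fixes $x_m=a_m$, uses a \emph{univariate} argument on the coefficients of the top homogeneous part $P_\ell$ to bound the set of bad $a_m$'s by $d-\ell$, and then applies the inductive hypothesis to the $(m-1)$-variate restriction $P(x_1,\dots,x_{m-1},a_m)$ for each good $a_m$. In your proof the single recursive call lands on a leading coefficient; in the paper's proof the recursive calls land on restrictions, while the leading part is handled by a univariate step.

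This matters for the intended algorithmic analogy. Your organization, read algorithmically, says: decode each column in the $x_m$-direction by a univariate decoder, then reassemble the leading coefficients into a \emph{weighted $(m-1)$-variate} decoding problem and recurse. That is exactly the Kim--Kopparty architecture, and the paper explicitly notes (in the remark following the proof) that they do not have a weighted multivariate multiplicity decoder, so this route is blocked for them when $m>2$. The paper's organization instead says: for each value of one coordinate, recursively decode the remaining $(m-1)$-variate problem (unweighted), then combine across that single coordinate with a \emph{weighted univariate} decoder --- which is exactly what \cref{alg:outer multivariate} does and exactly the primitive they have. Your closing claim that your induction ``matches the structure of \cref{alg:outer}'' is fine for $m=2$ (where the two inductions coincide) but is the opposite of what \cref{alg:outer multivariate} needs for $m>2$.
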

	\begin{proof}
		The proof is via induction on $m$ as usual. The base case, where $m = 1$ is clear. For the induction step, we view $P$ as a polynomial in the variables $x_1, x_2, \ldots, x_{m-1}$, with the coefficients being from the polynomial ring $\F[x_m]$. 
		\[
		P(\vx) = \sum_{i = 0}^{\ell} P_i(\vx) \, , 
		\]
		where the polynomial $P_i(\vx)$ is homogeneous and degree $i$ when viewed as a polynomial in the variables $x_1, \ldots, x_{m-1}$ with coefficients in $\F[x_m]$. Note that $\ell$ is \emph{equal} to the total degree of $P$ in $x_1, \ldots, x_{m-1}$ and is at most $d$, and the degree of $P_i$ in the variable $x_m$ is at most $d-i$. Now, for any setting $a_m$ of $x_m$ in $T_m$,  we consider two cases based on whether $P_{\ell}(x_1, x_2, \ldots, x_{m-1}, a_m)$ is zero or non-zero.  
		
		Let $T_m' \subset T_m$ be the set of $a_m \in T_m$ such that $P_{\ell}(x_1, x_2, \ldots, x_{m-1}, a_m)$ is identically zero. Viewing $P_{\ell}(x_1, x_2, \ldots, x_{m-1}, x_m)$ as a univariate polynomial in $x_m$ of degree at most $d-\ell$ with coefficients from $\F(x_1,\dots,x_{m-1})$, we get that $|T_m'| \leq d-\ell$. For every  each $a_m \in T_m'$, the total number of $(a_1, \ldots, a_{m-1}) \in T_1 \times \cdots \times T_{m-1}$ such that $P(a_1, \ldots, a_{m-1}, a_m)$ equals zero is trivially at most $n^{m-1}$. 
		
		On the other hand for every $a_m \in T_m \setminus T_m'$, $P_{\ell}(x_1, \ldots, x_{m-1}, a_m)$ is not identically zero, and thus $P(x_1, x_2, \ldots, x_{m-1}, a_m)$ is a non-zero $(m-1)$ variate polynomial of degree $\ell$. Thus, for each $a_m \in T_m \setminus T_{m}'$ by the induction hypothesis, the total number of $(a_1, \ldots, a_{m-1}) \in T_1 \times \cdots \times T_{m-1}$ such that $P(a_1, \ldots, a_{m-1}, a_m)$ equals zero is at most $\ell \cdot n^{m-2}$. 
		
		Therefore, the total number of zeroes of $P$ on the product set $T_1 \times T_2 \times \cdots \times T_m$ is at most 
		\[
		|T_m'|\cdot n^{m-1} + (n-|T_m'|)\cdot \ell \cdot n^{m-2} \leq (|T_m'| + \ell )n^{m-1} \, ,
		\]
		which is at most $dn^{m-1}$, since $T_m'$ is of size at most $d-\ell$. 
	\end{proof}
	
	We do remark that the way induction is set up in the above proof is different from the way it proceeds in a typical proof of this lemma, where $P$ is viewed as a univariate polynomial in $x_m$ with the coefficients coming from the $(m-1)$-variate polynomial ring $\F[x_1, \ldots, x_{m-1}]$ (as opposed to being viewed as an $(m-1)$-variate polynomial with the coefficients coming from a univariate polynomial ring $\F[x_m]$.). This subtle difference also shows up in the way induction is done in our decoding algorithm for the multivariate case when compared to how Kim-Kopparty proceed in the decoding algorithm for multivariate Reed-Muller codes~\cite{KimK2017}. In fact, it is not clear to us that the results in this paper can be obtained if we set up the induction as in the work of Kim and Kopparty~\cite{KimK2017}. The main technical difficulty is that we do not have an analogue of \cref{alg:outer} when the received word comes with weights.

	\subsection{Multivariate decoder: description}
	
	We now describe our main algorithm for the multivariate case. For this case, the message space consists of $m$-variate degree $d$ polynomials. For our algorithm, it will be helpful to think of the decomposition of such a polynomial $P$ as follows. For brevity, we use $\ve_{-1} = (e_2, e_3, \ldots, e_m)$ to denote an $m-1$ tuple, whose coordinates are indexed from $2$ up to $m$.
	
	\[
	P(\vx) = \sum_{\ell \in [d+1]} \left( \sum_{\substack{\ve_{-1} \in \Z_{\geq 0}^{m-1} \\ |\ve_{-1}|_{1} = d-\ell}} P_{\ell, \ve_{-1}}(x_1)\cdot \prod_{j = 2}^m x_j^{e_j} \right) \, .
	\]
	Since the total degree of $P$ is at most $d$, the degree of the univariate polynomial $P_{\ell, \ve_{-1}}(x_1)$ is at most $\ell$. 
	
	\subsection*{Multivariate Multiplicity Code Decoder}
	
	% \newpage
        \setcounter{AlgoLine}{0}
	\begin{algorithm}
		\caption{Multivariate Multiplicity Code Decoder}\label{alg:outer multivariate}
		\nonl 	\KwIn{$m; T_1, T_2, \ldots, T_m \subseteq \F$, $|T_1| = \cdots = |T_m|
				= n$ \Comment*[f]{\#variables \& points of evaluation}}
		\myinput{$d, s$ \Comment*[f]{degree and multiplicity resp.}}
		\myinput{$f \colon  T_1 \times T_2 \times \cdots \times T_m \to \mathbb{F}_{<s}[z_1, z_2, \ldots, z_m]$.  \Comment*[f]{received word}}
		\KwOut{$P = \sum_{\ell \in [d+1]} \sum_{\ve \in \Z_{\geq 0}^{m-1},
				|\ve|_{1} = d-\ell} P_{\ell, \ve}\left(x_1\right)\cdot \prod_{j = 2}^m
			x_j^{e_j}\in \mathbb{F}_{\leq d}[x_1, x_2, \ldots, x_m]$ such that
			$\Delm^{\left(s\right)}\left(f,\enc^{(s)}\left(P\right)\right) < \frac{1}{2} n^m \left(s-\frac{d}{n}\right)$, if such
			a $P$ exists. 
		}
		\eIf{m = 2}{
			Run Bivariate Multiplicity Code Decoder (\cref{alg:outer}) on $\left(T_1, T_2, d, s, f\right)$ to obtain $P$ \,; 
		}
		{
			Set  $\tilde{T} \gets  T_2 \times \cdots \times T_{m} $\,;\\
			\For{$\ell \gets 0$ to $d$}{
				Define  $f_{\ell}\colon T_1 \times \tilde{T} \to
				\mathbb{F}_{<s}[\vz]$ as \Comment*[r]{received word
					for the $\ell^{th}$ iteration} 
				\nonl \Indp \Indp $f_\ell  \gets   f - \enc^{(s)}\left(\sum_{i \in [\ell]} \left(
				\sum_{\ve_{-1} \in \Z_{\geq 0}^{m-1}, |\ve_{-1}|_{1} = d-i}
				P_{\ell, \ve_{-1}}\left(x_1\right)\cdot \prod_{j = 2}^m x_j^{e_j} \right)
				\right) $\,; \\
				\Indm \Indm
				\nonl $\forall \va \in T_1 \times \tilde{T}$, let $f_{\ell}\left(\va\right) = \sum_{\vi}f_{\ell, \vi}\left(\va\right) \vz^{\vi}$\,;  \Comment*[r]{some more notation}
				Set $r \gets s - \lfloor \frac{d-\ell}{n} \rfloor $\,; \Comment*[r]{number of usable derivatives}
				\For{$a \in T_1$}{
					\For{$i \gets 0$ to $ r-1$}{
						Define $f_{\ell}^{\left(i, a\right)}\colon  \tilde{T} \to
						\F_{<s-i}[z_2, \ldots, z_m]$ as\\
						\nonl \Indp \Indp $f_{\ell}^{\left(i, a\right)} \left(\va_{-1}\right) =
						\sum_{\vi_{-1} \in [s-i]} f_{\ell, \left(i,
							\vi_{-1}\right)}\left(a, \va_{-1}\right) \cdot \vz_{-1}^{\vi_{-1}}$,
						where $\vz_{-1} = \left(z_2, \ldots, z_m\right)$, $\va_{-1} = \left(a_2,
						\ldots, a_m\right)$ and $\vi_{-1} = \left(i_2, \ldots, i_m\right)$ \,;\\
						\Indm\Indm
						(Recursively) run Multivariate Multiplicity Code Decoder
						(\cref{alg:outer multivariate}) on $\left(m-1, \tilde{T},
						d-\ell, s-i, f_{\ell}^{\left(i, a\right)}\right)$ to obtain
						$G_\ell^{\left(i, a\right)} \in \mathbb{F}_{\leq d-\ell}[x_2,
						x_3, \ldots, x_m]$\,; \\
						\lIf {$G_\ell^{\left(i, a\right)}$ is $\bot$}{set $G_\ell^{\left(i,
								a\right)} \gets 0$\,;}
					}
				}
				Define $w_\ell\colon T_1\times [r]\to \Z_{\geq 0}$
				as\\
				\nonl \Indp \Indp $w_\ell \left(a, i\right)  \gets \min
				\left\{\Delm^{\left(s-i\right)}\left(f_\ell^{\left(i,
					a\right)} , \enc^{(s-i)}\left(G_\ell^{\left(i, a\right)}\right)
				\right),
				\frac{n^{m-1}}{2} \cdot \left( (s-i)-\frac{d-\ell}{n}\right)\right\} $\,;
				\\
				\Indm\Indm
				For every $\ve_{-1} \in \Z_{\geq 0}^{m-1}, |\ve_{-1}|_1 =
				d-\ell$, define $g_{\ell, \ve_{-1}}\colon T_1 \to
				\F_{<r}[z_1]$ as\\
				\nonl \Indp\Indp $g_{\ell, \ve_{-1}}\left(a\right) \gets  \sum_{i
					\in [r]} \coeff_{\vx_{-1}^{\ve_{-1}}}\left(G_{\ell}^{\left(i,
					a\right)}\right)z_1^{i} \,$, 
				where\\
				\nonl $\vx_{-1} = \left(x_2, x_3, \ldots, x_m\right)$, $\ve_{-1} = \left(e_2,
				\ldots, e_m\right)$, and $\vx_{-1}^{\ve_{-1}} = x_2^{e_2}\cdot
				x_3^{e_3}\cdots x_m^{e_m}$\,;\\
				\Indm\Indm
				For every $\ve_{-1} \in \Z_{\geq 0}^{m-1}, |\ve_{-1}|_1 = d-\ell$,
				run Weighted Univariate Multiplicity Code Decoder
				(\cref{alg:weighted multiplicity code decoder}) on $\left(T_1, d, s,m,
				\ell, r, g_{\ell, \ve_{-1}}, w_\ell\right)$ to get $P_{\ell, \ve_{-1}}$\,; 
			} 
		} 
		Set $P(x_1,\dots,x_m) \gets \sum_{\ell \in [d+1]} \sum_{\ve \in \Z_{\geq 0}^{m-1},
			|\ve|_{1} = d-\ell} P_{\ell, \ve}\left(x_1\right)\cdot \prod_{j = 2}^m
		x_j^{e_j}$;\\
		\leIf{$\Delm^{\left(s\right)}\left(f,\enc^{(s)}\left(P\right)\right) < \frac{1}{2} n^m \left(s-\frac{d}{n}\right)$}{
			\KwRet{$P\left(x_1, x_2, \dotsc, x_m\right)$}}{\KwRet{$\bot$}\,.}
	\end{algorithm}

	To prove the correctness of \cref{alg:outer multivariate} we will generalize the analysis of \cref{alg:weighted multiplicity code decoder}. Since, all the proofs are direct extensions of the corresponding proofs in the bivariate case, we have skipped them for the sake of brevity.
	
	\subsection{Weighted Univariate Multiplicity Code Decoder (multivariate case)}
	
	We give the general versions of the statements in \cref{sec:wumd} for the bivariate case. The proofs are identical, except for the new bounds on distances and weights. Accordingly, $n^2$ in the distance statements is replaced by $n^m$, and $n$ in the weight bounds after column decoding is replaced by $n^{m-1}$ (since a column is now replaced by recursive decoding over a grid of lower dimension). There is no change in the matching argument. 
	
	\subsubsection{Properties of weighted distance}
	
	The following is the general definition of the distance $\Gamma$
	between a polynomial and a received word. The only difference from
	\cref{def: gamma} is in the weight bounds ($n$ is replaced by
	$n^{m-1}$) and in the contribution of each element $a$, the quantity $n(s-i)$ is replaced by $n^{m-1}(s-i)$. 
	
	\begin{definition} \label{def:multivariate gamma}
		Let $d, \ell, s, m \in \N$ be parameters with $d \geq \ell$, $T
		\subseteq \F$ be a subset of size $n$. Let $R \in \F[x]$ be a
		univariate polynomial of degree at most $\ell$, $h\colon T \to
		\F_{<r}[z]$ and $w\colon  T \times [r] \to \Z_{\geq 0}$ be functions
		such that for every $(a, i) \in T \times [r]$, $w(a,i) \leq
		\frac{n^{m-1}}{2}\cdot \left((s-i) - \frac{d-\ell}{n}\right)$. 
		
		Then, $\Gamma_w^{s, d, \ell}(h, R)$ is defined as follows. 
		\begin{multline*}
		\Gamma_w^{s, d, \ell}(h, R) =  \left(\sum_{i = 0}^{r-1}\sum_{a \in A_i(h,R)} \max \left\{\left(n^{m-1}\left((s-i) - \frac{d-\ell}{n}\right) - w(a,i)\right), \max_{j < i} w(a, j) \right\}\right)\\
        + \sum_{a\in A_r(h,R)} \max_{j < r} w(a, j)
		\end{multline*}
		where for every $i \in [r+1]$
		\[
		A_i(h,R)=\setdef{a\in T}{\max{\setdef{j\in [r+1]}{h(a)=R(a+z) \mod \langle z\rangle^{j}}}=i}.
		\]
		
	\end{definition}
	
	The following is a triangle-inequality-type statement for the general definition of $\Gamma$. 
	
	\begin{lemma}[Triangle-like inequality for $\Gamma$]
		Let $d, \ell, s, m \in \N$ be parameters with $d \geq \ell$, $T
		\subseteq \F$ be a subset of size $n$. Let $Q, R \in \F[x]$ be
		univariate polynomials of degree at most $\ell$, $h\colon T \to
		\F_{<r}[z]$ and $w\colon  T \times [r] \to \Z_{\geq 0}$ be functions
		such that for every $(a, i) \in T \times [r]$, $w(a,i) \leq
		\frac{n^{m-1}}{2}\cdot \left((s-i) - \frac{d-\ell}n\right)$. 
		
		If $Q \neq R$, then 
		\[
		\Gamma_w^{s,d,\ell}(h, Q) + \Gamma_w^{s,d,\ell}(h, R) \geq n^{m}\left(s - \frac{d}n\right) \, .
		\]
	\end{lemma}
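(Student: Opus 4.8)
The plan is to mirror the bivariate proof of \cref{lem: gamma triangle} essentially verbatim, with the single change that every occurrence of the column-length $n$ (the block length of a single column in the bivariate setting) becomes $n^{m-1}$ (the size of the lower-dimensional grid $\tilde T = T_2\times\cdots\times T_m$ over which we now recursively decode), while the global block length $n^2$ becomes $n^m$. First I would introduce, for $i\in[r+1]$, the sets
\[
A_i(Q,R)=\setdef{a\in T}{\max\setdef{j\in[r+1]}{Q(a+z)=R(a+z)\bmod\langle z\rangle^{j}}=i},
\]
and set $\tau_i=\abs{A_i(Q,R)}$, so that $\sum_{i=0}^{r}\tau_i=n$. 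Applying the multiplicity Schwartz–Zippel lemma to the distinct degree-$\le\ell$ polynomials $Q,R$ gives $\Delm^{(r)}(\enc^{(r)}_T(Q),\enc^{(r)}_T(R))\ge rn-\ell$, which unpacks to $\sum_{i=0}^{r-1}\tau_i(r-i)\ge rn-\ell$, and after rearranging and using $\sum\tau_i=n$ yields $\sum_{i=0}^{r}i\cdot\tau_i\le\ell$. These are the two combinatorial inequalities that drive the whole argument, and they are unchanged from the bivariate case.

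Next I would do the per-point analysis. For $a\in A_i(Q,R)$ with $i<r$, the polynomials $Q,R$ disagree modulo $\langle z\rangle^{i+1}$, so $g(a)$ cannot agree (mod $\langle z\rangle^{i+1}$) with both; hence there is $j\le i$ with $a\in A_j(g,Q)$ or $a\in A_j(g,R)$. Taking (WLOG) $a\in A_j(g,Q)$ and $a\in A_k(g,R)$ with $j\le k$, the contribution of $a$ to $\Gamma_w^{s,d,\ell}(g,Q)$ is at least $n^{m-1}(s-j)-(d-\ell)-w(a,j)$ (one of the terms in the max in \cref{def:multivariate gamma}), and the contribution to $\Gamma_w^{s,d,\ell}(g,R)$ is at least $w(a,j)$ if $j\neq k$, or at least $n^{m-1}(s-j)-(d-\ell)-w(a,j)\ge w(a,j)$ if $j=k$ (using the weight bound $w(a,j)\le\frac{n^{m-1}}2((s-j)-\frac{d-\ell}n)$ from the hypothesis). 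Either way the two contributions sum to at least $n^{m-1}(s-j)-(d-\ell)\ge n^{m-1}(s-i)-(d-\ell)$ since $j\le i$. Points in $A_r(Q,R)$ contribute a nonnegative amount, which we discard for the lower bound.

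Summing over $a$ and then over $i$, I would conclude
\[
\begin{split}
\Gamma_w^{s,d,\ell}(h,Q)+\Gamma_w^{s,d,\ell}(h,R)
&\ge \sum_{i=0}^{r-1}\tau_i\bigl(n^{m-1}(s-i)-(d-\ell)\bigr)\\
&= (n^{m-1}s-(d-\ell))(n-\tau_r)-n^{m-1}(\ell-r\tau_r)\\
&= n^{m-1}(ns-d)+\tau_r\bigl(rn^{m-1}+n^{m-2}(d-\ell)-n^{m-1}s\bigr)\\
&\ge n^{m}(s-d/n),
\end{split}
\]
using $\sum\tau_i=n$ and $\sum i\,\tau_i\le\ell$; the coefficient of $\tau_r$ is nonnegative precisely because $r\ge s-\frac{d-\ell}{n}$ (exactly as in \cref{rem:multr}), which is where the definition $r=s-\lfloor(d-\ell)/n\rfloor$ is used. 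Since the statement is literally the bivariate \cref{lem: gamma triangle} with $n\mapsto n^{m-1}$ in the weight bound and per-point terms and $n^2\mapsto n^m$ in the conclusion, I do not expect any genuine obstacle; the only thing to be careful about is bookkeeping the powers of $n$ consistently in the final chain of equalities (in particular that the $\tau_r$-coefficient still has the right sign), which is routine once the substitution rule is fixed.
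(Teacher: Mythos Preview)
Your approach is exactly the paper's: the proof sketch there says literally ``identical to \cref{lem: gamma triangle}, with the one difference that $(s-i)n-(d-\ell)$ is replaced by $n^{m-1}((s-i)-(d-\ell)/n)$.'' However, your displayed computation has a bookkeeping slip of precisely the kind you warned yourself about. From \cref{def:multivariate gamma} the per-point term is
\[
n^{m-1}\Bigl((s-j)-\tfrac{d-\ell}{n}\Bigr)-w(a,j)=n^{m-1}(s-j)-n^{m-2}(d-\ell)-w(a,j),
\]
not $n^{m-1}(s-j)-(d-\ell)-w(a,j)$ as you wrote. Consequently the first line of your final chain should read $\sum_{i=0}^{r-1}\tau_i\bigl(n^{m-1}(s-i)-n^{m-2}(d-\ell)\bigr)$. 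With your version the algebra does not close for $m>2$: the constant part of your second line expands to $n^{m}s-n(d-\ell)-n^{m-1}\ell$, which equals $n^{m-1}(ns-d)$ only when $m=2$ (or $d=\ell$), and your $\tau_r$-coefficient in line three is inconsistent with line two. Once you carry the $n^{m-2}$ factor through, the computation becomes
\[
\bigl(n^{m-1}s-n^{m-2}(d-\ell)\bigr)(n-\tau_r)-n^{m-1}(\ell-r\tau_r)
= n^{m-1}(ns-d)+\tau_r\cdot n^{m-2}\bigl(rn+(d-\ell)-sn\bigr),
\]
and the sign of the $\tau_r$-coefficient is nonnegative by $r\ge s-(d-\ell)/n$, exactly as you intended. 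So the plan is correct and matches the paper; only this one substitution needs fixing.
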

	
	\begin{proof}[Proof sketch]
		The proof is identical to that of \cref{lem: gamma triangle}, with the following one difference: where we count a contribution of the form $(s-i)n - (d-\ell)$, that is replaced by $n^{m-1} ((s-i)-(d-\ell)/n)$ (in accordance with the new weight bound). 
	\end{proof}
	
	The following is the relationship between the distance we work with, $\Gamma$, and the multiplicity distance $\Delta$. 
	
	\begin{lemma}[Relationship between $\Gamma$ and multiplicity distance]
		Let $T_1, T_2, \dotsc, T_m \subseteq \F$ be sets of size $n$ each, and $d, \ell, s, r$ be natural numbers with $d \geq \ell$ and $r = s - \lfloor \frac{d-\ell}{n} \rfloor$. Let 
		\[P = \sum_{i = 0}^d \sum_{\mathbf{x_{-1}} \colon \text{monomials of degree } d-i} P_{i, \mathbf{x_{-1}}}(x_1) \mathbf{x_{-1}} \] 
		be a polynomial of degree at most $d$ with $\Delm^{(s)}(\enc^{(s)}_{T_1\times T_2 \times \dotsb \times T_m}(P), f) < \frac{1}{2} n^m (s-\frac{d}{n})$. If $g_{\ell}\colon  T_1 \to \F_{<r}[z]$ and $w_{\ell}\colon  T_1 \times [r] \to \Z_{\geq 0}$ are  as defined in the main multivariate algorithm, then for every fixed monomial $\mathbf{x_{-1}}$ of degree $d-\ell$,
		\[
		\Gamma_{w}^{s, d, \ell}(g_{\ell}, P_{\ell, x_{-1}}) \leq \Delm^{(s)}(f, \enc^{(s)}_{T_1\times T_2 \times \dotsb \times T_m }(P)) < \frac{1}{2}n^m\left(s - \frac{d}{n}\right) \, .
		\]
	\end{lemma}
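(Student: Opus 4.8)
The statement to prove is the multivariate version of Lemma~\ref{lem: gamma comparison to usual distance}, relating $\Gamma$ to the multiplicity distance in the general $m$-variate setting.

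\textbf{Proof plan.} The plan is to follow the structure of the bivariate proof (Lemma~\ref{lem: gamma comparison to usual distance}) essentially verbatim, tracking the dimension bookkeeping carefully. First I would recall that in the $\ell^{th}$ iteration of the multivariate algorithm (\cref{alg:outer multivariate}), for each $a \in T_1$ and each $i \in [r]$, the slice $f_\ell^{(i,a)}\colon \tilde T \to \F_{<s-i}[z_2,\ldots,z_m]$ is handed to a recursive $(m-1)$-variate decoder which returns $G_\ell^{(i,a)}$. The key objects are $g_{\ell,\ve_{-1}}(a) = \sum_{i\in[r]}\coeff_{\vx_{-1}^{\ve_{-1}}}(G_\ell^{(i,a)})z_1^i$ and the weights $w_\ell(a,i) = \min\{\Delm^{(s-i)}(f_\ell^{(i,a)}, \enc^{(s-i)}(G_\ell^{(i,a)})),\ \tfrac{n^{m-1}}{2}((s-i)-\tfrac{d-\ell}{n})\}$. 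I would also define, for a fixed target monomial $\vx_{-1}^{\ve_{-1}}$ of degree $d-\ell$, the ``ideal'' $(m-1)$-variate polynomial $P^{(i,a)}(\vx_{-1})$ obtained by extracting $\coeff_{z_1^i}$ from $\sum_{j\ge\ell}P_j(a+z_1)\cdot(\text{appropriate monomial arrangement})$ — the analogue of $P_\ell^{(i,a)}(x_2)$ in the bivariate proof.

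The core of the argument has two halves. In the first half, for each $a\in T_1$ with $a\in A_i(g_{\ell,\ve_{-1}}, P_{\ell,\ve_{-1}})$ with $i<r$, I would invoke: (a) optimality of $G_\ell^{(i,a)}$ as the closest codeword, giving $\Delm^{(s-i)}(f_\ell^{(i,a)},\enc^{(s-i)}(P^{(i,a)})) \ge \Delm^{(s-i)}(f_\ell^{(i,a)},\enc^{(s-i)}(G_\ell^{(i,a)}))$; (b) the triangle inequality for $\Delm$; (c) the fact that $P^{(i,a)}\ne G^{(i,a)}$ are distinct degree-$(d-\ell)$ polynomials in $m-1$ variables, so by the multiplicity SZ lemma (applied in $m-1$ dimensions on the grid $\tilde T$) their multiplicity distance is at least $n^{m-1}((s-i)-\tfrac{d-\ell}{n})$. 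Combining these, $\Delm^{(s-i)}(f_\ell^{(i,a)},\enc^{(s-i)}(P^{(i,a)})) \ge n^{m-1}((s-i)-\tfrac{d-\ell}{n}) - w_\ell(a,i)$, and also (handling the cap in the definition of $w$) it is at least $\max_{j<i}w_\ell(a,j)$. Summing over $a$, exactly as in the bivariate case, yields $\Gamma_w^{s,d,\ell}(g_\ell, P_{\ell,\ve_{-1}}) \le \sum_{a\in T_1}\max_{i\in[s]}\Delm^{(s-i)}(f_\ell^{(i,a)},\enc^{(s-i)}(P^{(i,a)}))$.

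In the second half, I would prove the pointwise (over $a\in T_1$) inequality
\[
\Delm^{(s)}(f_\ell, \enc^{(s)}(P)) \ge \sum_{a\in T_1}\max_{i\in[s]}\Delm^{(s-i)}(f_\ell^{(i,a)}, \enc^{(s-i)}(P^{(i,a)})),
\]
together with the observation $\Delm^{(s)}(f_\ell,\enc^{(s)}(P)) = \Delm^{(s)}(f,\enc^{(s)}(Q)) < \tfrac12 n^m(s-\tfrac{d}{n})$ from the peeling step. For the pointwise bound, fix $a\in T_1$, let the RHS maximum be attained at $i_0$, and compare $d^{(s)}_{\min}$ of the full $\vz$-polynomial at $(a,\va_{-1})$ against $d^{(s-i_0)}_{\min}$ of the $i_0$-th $z_1$-slice: the key elementary fact is that if the slice coefficient of $\vz_{-1}^{\ve'}$ (for $|\ve'|_1 < s-i_0$) corresponding to $z_1^{i_0}$ is nonzero then the full polynomial has a nonzero coefficient at total $\vz$-degree $i_0 + |\ve'|_1 < s$, so $d^{(s)}_{\min}(\text{full}) \le i_0 + d^{(s-i_0)}_{\min}(\text{slice})$, hence $s - d^{(s)}_{\min} \ge (s-i_0) - d^{(s-i_0)}_{\min}$; summing over $\va_{-1}\in\tilde T$ and then over $a\in T_1$ completes it.

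\textbf{Main obstacle.} The routine part is the dimension bookkeeping — replacing $n$ by $n^{m-1}$ in the weight caps and in the SZ distance bound, and applying the $(m-1)$-variate multiplicity SZ lemma on the sub-grid $\tilde T$. The one genuinely delicate point, which I expect to be the main obstacle, is justifying that the recursively-computed $G_\ell^{(i,a)}$ is \emph{actually} the closest $(m-1)$-variate codeword to $f_\ell^{(i,a)}$ when the iteration is in its ``correct'' state — this requires the inductive correctness of \cref{alg:outer multivariate} at dimension $m-1$, so the argument is genuinely a simultaneous induction on $m$ (with the bivariate base case supplied by \cref{thm: correctness of bivariate decoder}) interleaved with the induction on $\ell$ inside each dimension; one must be careful to state and unwind these two nested inductions in the right order, exactly as the excerpt does for $m=2$ via \cref{claim: correctness of bivariate algorithm induction}. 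Since the excerpt explicitly says the proofs are ``direct extensions'' and omits them, I would present this as a proof sketch emphasizing precisely these three ingredients (SZ in $m-1$ variables, the $\Delm$ triangle inequality, and the coefficient-degree comparison) and noting where the nested induction enters.
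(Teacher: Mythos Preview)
Your proposal is correct and follows the paper's approach exactly; the paper's own proof is a one-line sketch stating that the argument is identical to that of \cref{lem: gamma comparison to usual distance} with $(s-i)n$ replaced by $(s-i)n^{m-1}$, and your detailed unpacking matches this. Your flagged ``main obstacle'' --- that the recursive $(m-1)$-variate decoder must actually return the closest codeword --- is a valid observation but is handled routinely: the outer induction on $m$ (with base case \cref{thm: correctness of bivariate decoder}) supplies exactly the guarantee that the recursive call returns the unique codeword within half the minimum distance when one exists and $\bot$ otherwise, which, combined with the cap in the definition of $w_\ell$, is all that is needed (the literal statement ``$G$ is closest'' in the bivariate proof is already slightly loose in the same way and is resolved identically).
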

	
	\begin{proof}[Proof sketch]
		The proof is identical to that of \cref{lem: gamma comparison to usual distance}, with instances of $(s-i)n$ being once again replaced by $(s-i)n^{m-1}$. 
	\end{proof}
	
	\subsubsection{Proof of \texorpdfstring{\cref{thm:main tech result}}{Theorem 3.9} (correctness of \texorpdfstring{\cref{alg:outer multivariate}}{Algorithm 4})}
	
	Recall that the algorithm proceeds by trying every possible vector of thresholds $\thetabar$. The following lemma asserts that one of the thresholds can be used to carry out the decoding.
	
	\begin{lemma}
		Let $g$ be the received word. If $R$ is such that $\Gamma_w^{d, \ell, s}(g, R) < \frac{n^m}{2}(s-\frac{d}{n})$, then there is a vector of thresholds $\thetabar$ that can be used to find $R$.
	\end{lemma}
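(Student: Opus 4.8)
The statement to prove is the multivariate analogue of \cref{lem:threshold exists}: given the promise $\Gamma_w^{d,\ell,s}(g,R) < \frac{n^m}{2}(s-\frac{d}{n})$, some step-threshold $\thetabar$ decodes $R$. The plan is to mirror the bivariate argument of \cref{sec:wumd} verbatim, tracking only the single bookkeeping change that in the multivariate setting each ``column'' has been replaced by a recursive decoding over an $(m-1)$-dimensional grid of size $n^{m-1}$, so that the weight bound is $w(a,i) \le \frac{n^{m-1}}{2}((s-i)-\frac{d-\ell}{n})$ and every occurrence of the per-coordinate distance $n(s-i)-(d-\ell)$ in the bivariate proof becomes $n^{m-1}(s-i)-(d-\ell) = n^{m-1}((s-i)-\frac{d-\ell}{n})$. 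The matching/combinatorial skeleton is completely unchanged.

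\textbf{Key steps, in order.} First I would argue by contradiction: assume no step-threshold $\thetabar$ works, which by \cref{obs:decodable} means $|A_{\thetabar}| \le |B_{\thetabar}| + \ell$ for every $\thetabar$, where $A,B$ partition $T \times [r]$ according to whether $g$ agrees with $R$ up to order $i$. Second, invoke the multivariate version of \cref{clm: size of A} to conclude $|A| > \ell$, so $|A| = \ell + u$ for some $u \ge 1$; the proof of that claim is the same telescoping computation as in the bivariate case with $n^2$ replaced by $n^m$ and $n(s-i)-(d-\ell)$ by $n^{m-1}((s-i)-\frac{d-\ell}{n})$, reusing the $\tau_i \leftrightarrow \delta_i$ substitution trick and the triangle inequality \cref{lem: gamma triangle} (multivariate form). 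Third, build the bipartite graph on $\calL = A$, $\calR = B$ stratified by multiplicity level with edges $((a,i),(b,i))$ whenever $\omega(a) \ge \omega(b)$, and apply the generalized Hall theorem \cref{thm:gen_hall} exactly as in \cref{lem: good pairing exists}: if the maximum matching had size $u-1$ we would extract a witness set $U$ which, after the prefix-normalization and the monotonicity-straightening of the $U_i$'s, yields a genuine step-threshold with $|A_{\thetabar}| - |B_{\thetabar}| \ge \ell+1$, contradicting our assumption. So the matching has size $\ge u$. Fourth, convert it into a good pairing via (the multivariate analogue of) \cref{clm: matching to good pairing} — the four-case local-surgery argument there only touches weights $\omega$ and levels $i_R$, neither of which is dimension-sensitive. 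Fifth, feed the good pairing into the multivariate version of \cref{clm: error decreases with replacement}: corrupting the paired agreement coordinates and redefining $w'$ produces $g', w'$ with $\Gamma_{w'}^{d,\ell,s}(g',R) \le \Gamma_w^{d,\ell,s}(g,R)$ and $|A| \le \ell$ for $g'$, so the size-of-$A$ bound forces $\Gamma_{w'}(g,R) \ge \frac{n^m}{2}(s-\frac{d}{n})$, hence $\Gamma_w(g,R) \ge \frac{n^m}{2}(s-\frac{d}{n})$, contradicting the promise. This closes the contradiction, and together with \cref{obs:decodable} and the sanity-check line of \cref{alg:weighted multiplicity code decoder} establishes the lemma.

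\textbf{Main obstacle.} There is essentially no new mathematical difficulty: the entire argument is a parameter substitution of $n \mapsto n^{m-1}$ in the ``per-column'' quantities and $n^2 \mapsto n^m$ in the ``global'' quantities, and the paper already flags this (``There is no change in the matching argument''). The one place that deserves care is verifying that the non-negativity condition used at the end of \cref{lem: gamma triangle} — namely that the coefficient of $\tau_r$ (equivalently the slack coming from $r \ge s - \frac{d-\ell}{n}$) stays non-negative — survives the substitution; since $r = s - \lfloor\frac{d-\ell}{n}\rfloor$ is unchanged, this is immediate, but I would state it explicitly. Beyond that, the proof is: \emph{``identical to the proof of \cref{lem:threshold exists}, replacing $n^2$ by $n^m$ throughout and the per-coordinate contribution $n(s-i)-(d-\ell)$ by $n^{m-1}((s-i)-\frac{d-\ell}{n})$ in accordance with the weight bound of \cref{def:multivariate gamma}.''}
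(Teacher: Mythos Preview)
Your proposal is correct and follows essentially the same approach as the paper: the paper's own proof sketch says it is ``identical to that of \cref{lem:threshold exists}'' with $(s-i)n$ replaced by $(s-i)n^{m-1}$ and $n^2$ by $n^m$, and then walks through the size-of-$A$ claim, the good-pairing existence via \cref{thm:gen_hall}, and the error-decreasing replacement claim exactly as you outline. Your explicit check that the $r = s - \lfloor (d-\ell)/n \rfloor$ condition still makes the $\tau_r$-coefficient non-negative is a nice extra sanity step beyond what the paper sketches.
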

	
	\begin{proof}[Proof sketch]
		The proof is identical to that of \cref{LEM:THRESHOLD_EXISTS}:
		it  proceeds by contradiction, assuming that no such vector exists. We start by making a claim on the size of $A$, the set of points where the received word $g$ agrees with the desired polynomial $R$. 
		
		\begin{claim} 
			Let $g$ be any received word and $R$ be a degree $\ell$ polynomial with $\Gamma_w^{d, \ell, s} (g, R) < \frac{n^m}{2}(s-\frac{d}{n})$. Let $A$ be the set of points where $g$ and $R$ agree. Then, $\abs{A} > \ell$. 
		\end{claim}
		
		\begin{subproof}[Proof sketch]
			The proof is identical to that of \autoref{clm: size of A}, with instances of $(s-i)n$ being once again replaced by $(s-i)n^{m-1}$. 
		\end{subproof}
		
		We then define a good pairing as before, the same as \cref{def: good pairing}. \cref{LEM:GOOD_PAIRING_EXISTS} and \autoref{clm: matching to good pairing} which assert that such a good pairing exists, remain the same, word for word.
		
		Finally, we use this pairing to compute the error, and show it is more than promised, under the contradiction assumption that no threshold vector works. The statement of \autoref{clm: error decreases with replacement} remains the same. In its proof, in the error calculation, as expected $(s-i)n$ is replaced by $(s-i)n^{m-1}$. 
	\end{proof}
	
	The remainder of the proof is similar to the proof of \cref{THM:CORRECTNESS_OF_BIVARIATE_DECODER} presented in \cref{sec:bivariate}.
	
	\paragraph{Running time:} Let $W(m)$ denote the running time of the $m$-variate decoder. Then, by \cref{THM:CORRECTNESS_OF_BIVARIATE_DECODER}, $W(2)=(sn)^{s+O(1)}$. From the structure of \cref{alg:outer multivariate}, $W(m)$ satisfies the following recurrence:
	\[
	W(m)\leq (sn)^{s+O(1)}\cdot \binom{m+s-1}{s} + (sn)^{2}\cdot W(m-1).
	\]
	From this we conclude that $W(m)$ is upper bounded by $(sn)^{3m+s+O(1)}\cdot \binom{m+s-1}{s}$.
	
	\section*{Acknowledgements}
	We are thankful to Swastik Kopparty,  Ramprasad Saptharishi and Madhu Sudan for many helpful discussions and much encouragement. 
	A special thanks to Madhu for sitting through many long presentations of the preliminary versions of the proofs and his comments.  

    \newpage
	\printbibliography
\end{document}